\definecolor{tealblue}{HTML}{0052A3}
\newtheorem{theorem}{Theorem}[section]
\newtheorem{lemma}[theorem]{Lemma}
\newtheorem{corollary}[theorem]{Corollary}
\newtheorem{proposition}[theorem]{Proposition}
\theoremstyle{definition}
\newtheorem{definition}[theorem]{Definition}
\newtheorem{example}[theorem]{Example}
\theoremstyle{remark}
\theoremstyle{plain}
\newtheorem{claim}[theorem]{Claim}
\theoremstyle{plain}
\newtheorem{condition}[theorem]{Condition}
\theoremstyle{plain}
\newtheorem{question}[theorem]{Question}
\theoremstyle{plain}
\newtheorem{conjecture}[theorem]{Conjecture}
\title{On Boolean PCSPs with Polynomial Threshold Polymorphisms}
\author{
{\color{tealblue} Katzper Michno} \\
\small katzper.michno@gmail.com \\
\small Faculty of Mathematics and Computer Science \\
\small Jagiellonian University \\
\small Kraków, Poland
}
\date{}
\newcommand{\PCSP}{\mathsf{PCSP}}
\newcommand{\CSP}{\mathsf{CSP}}
\newcommand{\EX}{\mathop{\mathds{E}}}
\newcommand{\norm}[1]{\left\lVert #1 \right\rVert}
\newcommand{\T}{\mathbf{T}}
\newcommand{\NS}{\mathbf{NS}}
\newcommand{\minion}{\mathcal{M}}
\newcommand{\Pol}{\mathsf{Pol}}
\newcommand{\Deg}[2]{\ensuremath{\omega_{#2}[#1]}}
\newcommand{\NP}{\mathsf{NP}}
\newcommand{\PTF}{\mathsf{PTF}}
\newcommand{\LTF}{\mathsf{LTF}}
\newcommand{\PMC}{\mathsf{PMC}}
\newcommand{\THR}{\mathsf{THR}}
\newcommand{\repr}[2]{\ensuremath{\langle #1 |^{#2}}}
\newcommand{\tuple}[1]{\overline{#1}}
\newcommand{\Bool}{\ensuremath{\{0,1\}}}
\newcommand{\A}{\ensuremath{\mathbb{A}}}
\newcommand{\B}{\ensuremath{\mathbb{B}}}
\newcommand{\X}{\ensuremath{\mathbb{X}}}
\newcommand{\GapRich}{\ensuremath{\mathsf{Gap\text{-}Rich\text{-}2\text{-}to\text{-}1}}}
\newcommand{\Gap}{\ensuremath{\mathsf{Gap\text{-}Label\text{-}Cover}}}
\newcommand{\GapUnique}{\ensuremath{\mathsf{Gap\text{-}Unique}}}
\newcommand{\GapDtoOne}[1]{\ensuremath{\mathsf{Gap\text{-}#1\text{-}to\text{-}1}}}
\NewDocumentCommand{\cube}{>{\SplitArgument{1}{,}}m}{%
  \cubeaux#1%
}
\NewDocumentCommand{\cubeaux}{mm}{%
  \IfNoValueTF{#2}{\ensuremath{\mu_{#1}}}{\ensuremath{\mu_{#1}^{(#2)}}}%
}
\NewDocumentCommand{\pullback}{>{\SplitArgument{1}{,}}m}{%
  \pullbackaux#1%
}
\NewDocumentCommand{\pullbackaux}{mm}{%
  \IfNoValueTF{#2}{\ensuremath{\nu_{#1}}}{\ensuremath{\nu_{#1}^{(#2)}}}%
}
\NewDocumentCommand{\Inf}{o >{\SplitArgument{1}{,}}m}{%
  \IfNoValueTF{#1}
    {\InfProcessTwoArgs#2}      
    {\InfProcessThreeArgs{#1}#2} 
}
\NewDocumentCommand{\InfProcessTwoArgs}{mm}{%
  \mathbf{Inf}{#2}[#1]
}
\NewDocumentCommand{\InfProcessThreeArgs}{mmm}{%
  \mathbf{Inf}\sp{#1}\sb{#3}[#2]
}
\NewDocumentCommand{\I}{o m}{%
  \ensuremath{%
    \mathbf{I}%
    \IfValueT{#1}{^{#1}}
    [#2]
  }%
}
\space\href{https://doi.org/#1}{\nolinkurl{#1}}%
\begin{document}

\maketitle

\begin{abstract}
    In pursuit of a deeper understanding of Boolean Promise Constraint Satisfaction Problems ($\PCSP$s), we identify a class of problems with restricted structural complexity, which could serve as a promising candidate for complete characterization. Specifically, we investigate the class of $\PCSP$s whose polymorphisms are Polynomial Threshold Functions ($\PTF$s) of bounded degree. We obtain two complexity characterization results: (1) with a hardness condition introduced in \cite{layers}, we establish a complete complexity dichotomy in the case where coefficients of $\PTF$ representations are non-negative; (2) dropping the non-negativity assumption, we show a hardness result for $\PTF$s admitting coordinates with significant influence, conditioned on the Rich 2-to-1 Conjecture proposed in \cite{braverman_et_al:LIPIcs.ITCS.2021.27}. In order to prove the latter, we show that a random 2-to-1 minor map retains significant coordinate influence over the $p$-biased hypercube with constant probability.
\end{abstract}

\section{Introduction}\label{sec:introduction}
The class of \intro{Promise Constraint Satisfaction Problems} ($\PCSP$) can be seen as \textit{qualitative} approximation problems within $\NP$, originally introduced in \cite{hastad_guruswami_2017}. Perhaps the most famous $\PCSP$ problem is the \intro{Approximate Graph Coloring}, which for a pair of parameters $s < t$ asks to find a $t$-coloring of the input graph, which is promised to be $s$-colorable. The question of classifying the complexity of this problem is known for its notorious difficulty. Although polynomial-time algorithms that solve the case where $t$ is a sublinear function of the number of vertices have been known for decades and are still improving, currently achieving $t \approx n^{0.2}$ for $3$-colorable graphs  \cite{wigderson_1983, blum_94, kawarabayashi_2017, kawarabayashi_2024}, not much is known about the case where $t$ is a constant. We know that if $s = 3$ and $t = 5$, then the problem is $\NP$-hard \cite{Guruswami_2000, Safra_2000}. For general $s$, the strongest known result states that \kl{Approximate Coloring} is $\NP$-hard for $t = 2s-1$. In particular, the case for $s = 3, t = 6$ remains open. However, we know that the problem is hard for all constants $s,t$ under the assumption of a certain variant of Khot's \kl{Unique Games Conjecture} \cite{dinur_approximate_graph_coloring, dTo1Hardness}.

\AP
The \kl{Approximate Coloring} problem is only a single representative of the extremely rich class of $\PCSP$s. We generally define $\PCSP$s using the language of \kl{relational structures}, which we now briefly introduce. A \intro{relational signature} consists of a sequence of \intro{relational symbols} $R_1, \dots, R_k$ with fixed corresponding arities $r_1, \dots, r_k$. A \intro{relational structure} over \intro{universe} $A$ of such a signature is a tuple $\mathbb{A} = (A; R_1^{\mathbb{A}}, \dots, R_k^{\mathbb{A}})$, where for every $i \in [k]$, $R_i^{\mathbb{A}} \subseteq A^{r_i}$ is a relation of arity corresponding to the arity of the symbol $R_i$. We say that two relational structures are \intro{similar} if they have the same signature. Given two similar structures $\A$ and $\B$ over universes $A$ and $B$ respectively, by $\A \to \B$ we denote the fact that there exists a \intro{homomorphic map} from $\mathbb{A}$ to $\mathbb{B}$, i.e. a function $\varphi : A \to B$ that preserves relations. Formally, for every symbol $R$ of arity $n$ in the signature, $\varphi$ must satisfy that
\[
    \forall \, a_1, \dots, a_n \in A : (a_1, a_2, \dots, a_n) \in R^\A \implies (\varphi(a_1), \varphi(a_2), \dots, \varphi(a_n)) \in R^\B.
\]

To define a $\PCSP$ problem, we need two similar structures $\A, \B$ such that $\A \to \B$. We say that such a pair of structures $(\A, \B)$ is a $\PCSP$ \intro{template}. Now, the problem $\PCSP(\A, \B)$ can be stated as follows: Given an input structure $\X$ such that $\X \to \A$, find and output a \kl{homomorphism} from $\X$ to $\B$. The assumption that $\X \to \A$ is the \textit{promise}, which is not part of the input, but guarantees that a solution exists, because $\A \to \B$ and homomorphisms are composable. Observe that in this framework, the \kl{Approximate Coloring} problem is exactly $\PCSP(K_s, K_t)$\footnote{Here $K_t$ is a complete undirected graph on $t$ vertices, i.e. a relational structure with universe of size $t$ and one binary relation consisting of all pairs of distinct elements.}. 

The above definition yields the \textit{search} variant of $\PCSP$s. We also consider the \textit{decision} version: In this variant, the problem $\PCSP(\A, \B)$ asks us to distinguish between the cases where $\X \to \A$ and $\X \not \to \B$. Here, the requirement $\A \to \B$ ensures that these two scenarios are disjoint and that the problem is well defined. It is clear that every $\PCSP$ is in $\NP$ (as long as the universes of the underlying structures are finite) and that the decision variant trivially reduces to the search version. However, it is unknown whether the opposite statement is true.

We can see that the framework of $\PCSP$s is highly expressive and only a small portion of $\PCSP$ problems have simple combinatorial interpretations like \kl{Approximate Coloring}. It is therefore imperative to search for abstract tools in order to successfully classify larger chunks of $\PCSP$ problems. The general, state-of-the-art machinery for this purpose is the \textit{algebraic approach}. To properly introduce the algebraic approach, we have to take a step back and talk about the class of \kl{Constraint Satisfaction Problems} --- the predecessors of $\PCSP$s.

\subsection{Constraint Satisfaction Problems and the algebraic approach}

The class of \intro{Constraint Satisfaction Problems} ($\CSP$) has been present in the theory of Computational Complexity for decades. The historically original formulation of $\CSP$ problems is as follows: Every $\CSP$ consists of a domain, variables, and constraints, and the goal is to find an evaluation of the variables that satisfies all given constraints. The classic example of a $\CSP$ is $\mathsf{3}$-$\mathsf{SAT}$ --- given a set of variables and clauses, we want to find an evaluation of variables that satisfies all clauses, which play the role of constraints. 

Although the above definition explains the name of the $\CSP$ class, we will use another, equivalent formulation of $\CSP$s in the language of \kl{relational structures}. Given a structure $\A$, the problem $\CSP(\A)$ in the search variant asks us to find a \kl{homomorphism} of the input structure $\X$ to $\A$. Similarly to $\PCSP$s, the decision version of a $\CSP$ only requires to decide whether $\X \to \A$. Interestingly, it is known that every search $\CSP$ reduces to its decision version \cite{polymorphisms_and_how_to_use_them}.

\begin{example}[$\mathsf{3}$-$\mathsf{SAT}$]\label{example:3-SAT}
    Consider a relational structure $\mathbb{A} = (\{0,1 \}; \{0, 1\}^3 - \{(0,0,0)\}, \neq)$, which consists of a Boolean universe and two relations. The first relation is ternary and consists of all Boolean triples with at least one $1$. The second relation is the binary inequality $\{(0,1), (1,0)\}$. Every instance of $\mathsf{3}$-$\mathsf{SAT}$ with a set of literals $X = \{x_1, \dots, x_n, \neg x_1, \dots, \neg x_n \}$ and clauses $C \subseteq X^3$ can be encoded as a relational structure $\mathbb{X} = (X; C, \{(x_i, \neg x_i) : i \in [n] \})$. We can see that the formula is satisfiable if and only if $\mathbb{X} \to \mathbb{A}$, and thus $\mathsf{3}$-$\mathsf{SAT}$ is exactly the problem $\CSP(\mathbb{A})$. 
\end{example}

\begin{example}[$\mathsf{3}$-$\mathsf{COLORING}$]\label{example:3-colouring}
    The problem of coloring a given graph with 3 colors can be equivalently stated as the problem of finding a homomorphism of the input graph into a triangle $K_3$. The vertices of $K_3$ serve as colors, and since $K_3$ has no loops, any pair of vertices connected by an edge must be assigned to different vertices in $K_3$. Therefore, the problem is equivalent to $\CSP(K_3)$.
\end{example}

The systematic study of $\CSP$s has already begun in the previous century. One of the first works in this area is due to Thomas Schaefer \cite{schaefer} and his \textit{dichotomy} result, which states that every $\CSP$ defined by a structure with Boolean universe $\{0,1\}$ is $\NP$-hard or solvable in polynomial time. The next milestone was established by Hell and Nešetřil \cite{HELL199092}, who showed that an analogous dichotomy is true for graph $\CSP$s, i.e. such that the underlying structure has only one relation, which is binary and symmetric. These two results motivated a conjecture proposed in 1998 by Feder and Vardi \cite{feder_vardi_csp_dichotomy_conjecture}, which stated that the entire class of $\CSP$s with finite domain admits such dichotomy. This problem is today known by the name of \intro{CSP Dichotomy Conjecture} and was the main motivation of numerous major breakthroughs in the study of $\CSP$s for almost three decades, until its positive resolution by Zhuk and Bulatov in 2017 \cite{zhuk_dichotomy_conf, bulatov_dichotomy, zhuk_dichotomy}. 

To understand the methodology of studying $\CSP$s that led to this celebrated conclusion, we first have to introduce relevant notions. Typically, we write tuples with a bar above, e.g. $\tuple x, \tuple y$. If a tuple $\tuple x$ is indexed by $[n]$, then $x_i$ is the value of $i$-th element of $\tuple x$ for every $i \in [n]$. Perhaps the most crucial concept in the whole of $\CSP$ research are \textit{polymorphisms} of relational structures.

\begin{definition}[\intro{Polymorphism}]
    Suppose that $\A$ and $\B$ are two \kl{similar} structures of signature $\tau$ over universes $A$ and $B$, respectively. We say that a function $\varphi : A^n \to B$ is an $n$-ary \textit{polymorphism} of $(\A, \B)$, if given any symbol $R$ in $\tau$ of arity $m$ and tuples $\tuple a^{(1)}, \dots, \tuple a^{(n)} \in R^\A$, we have
    \[
        \big(\varphi(a_1^{(1)}, \dots, a_1^{(n)}), \dots, \varphi(a_m^{(1)}, \dots, a_m^{(n)})\big) \in R^\B.
    \]
    A convenient way to think about polymorphisms is as follows: Suppose that we have a $m \times n$ matrix consisting of columns $\tuple a^{(1)}, \dots, \tuple a^{(n)} \in R^\A$. If we apply a polymorphism $\varphi$ to rows of the matrix and obtain values $b_1, \dots, b_m$, then it must hold that $(b_1, \dots, b_m) \in R^\B$, as the following picture depicts:
    \[
    \renewcommand{\arraystretch}{1.5} 
    \large 
    \begin{array}{c@{}c@{\hspace{1.8mm}}c@{\hspace{1.8mm}}c@{}c@{}} 
    \begin{tikzpicture}[baseline=-0.6ex]
        \node at (-5,0) {$\big( R^\A \big)^n \ni$};
    \end{tikzpicture}
    &
    \begin{bmatrix}
    a_1^{(1)} & a_1^{(2)} & \cdots & a_1^{(n)} \\
    a_2^{(1)} & a_2^{(2)} & \cdots & a_2^{(n)} \\
    \vdots    & \vdots    & \ddots & \vdots    \\
    a_m^{(1)} & a_m^{(2)} & \cdots & a_m^{(n)} \\
    \end{bmatrix}
    &
    \begin{tikzpicture}[baseline=-2ex]
        \foreach \i in {1,...,4} {
            \draw[-Latex, thick] (-0.2,-\i*0.73+1.5) -- (1.5,-\i*0.73+1.5) 
                node[midway, above] {$\varphi$};
        }
    \end{tikzpicture}
    &
    \begin{bmatrix}
    b_1 \\ b_2 \\ \vdots \\ b_m
    \end{bmatrix}
    &
    \begin{tikzpicture}[baseline=-0.6ex]
        \node at (-5,0) {$\in R^\B$};
    \end{tikzpicture}
    \end{array}
    \]    
\end{definition}

Alternatively, polymorphisms can be defined as \kl{homomorphisms} of the form $\A^n \to \B$, where the power of relational structure is defined in a standard way. We denote the set of polymorphisms of $(\A, \B)$ by $\Pol(\A, \B)$. In the special case where $\A = \B$, we write $\Pol(\A)$ and simply call it the set of structure polymorphisms. Arguably, the most important polymorphisms are \intro{projections}, i.e. functions of the form $(x_1, \dots, x_n) \mapsto x_i$. Indeed, it is an easy exercise to show that all projections of all arities over the universe of $\A$ belong to $\Pol(\A)$ (see \cref{fig:K_3_polymorphism} for an example of a projection as a polymorphism).

\begin{figure}[hbtp]
    \centering
    \includegraphics[scale=0.85]{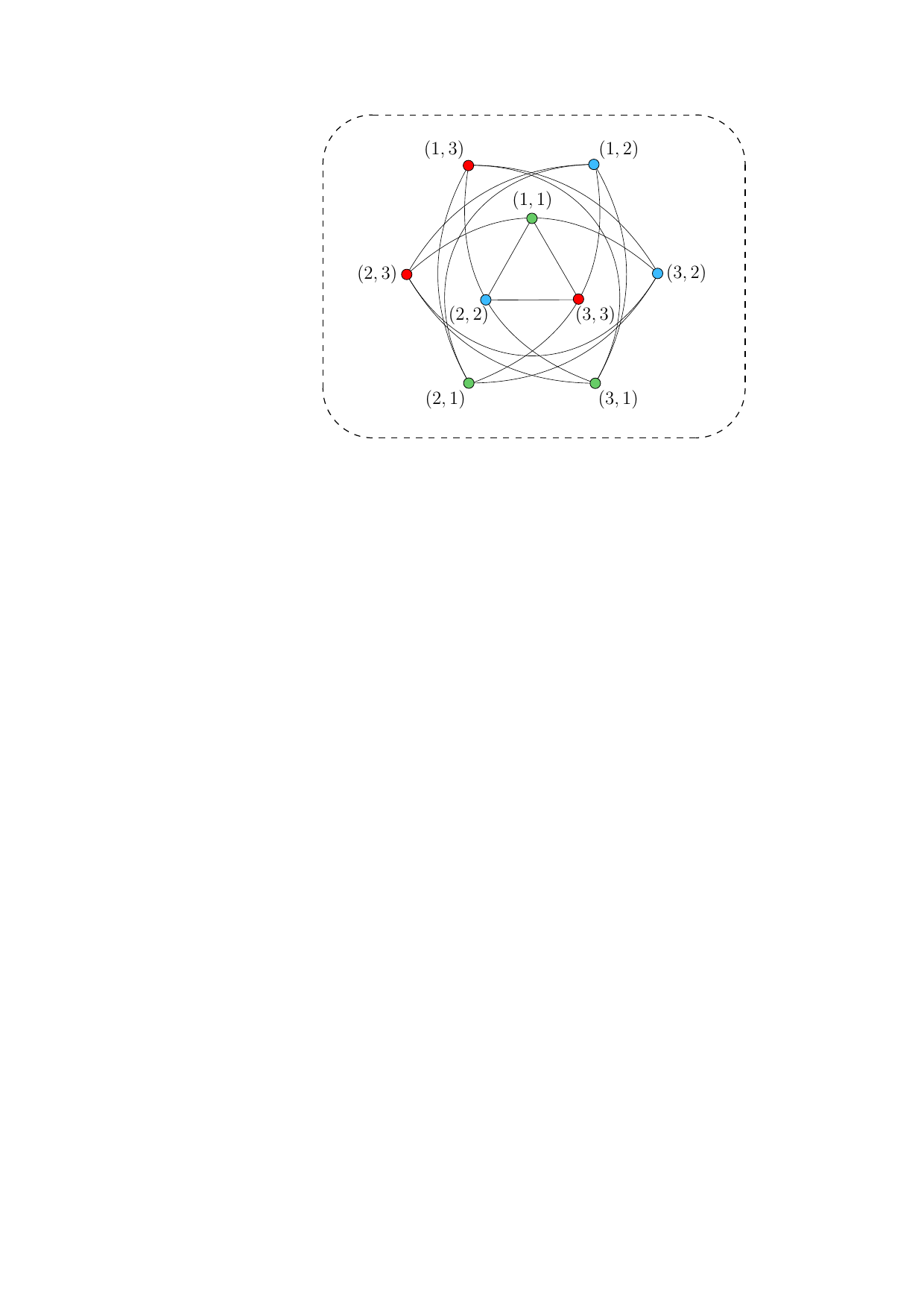}
    \caption{
        $K_3^2$ is a structure consisting of universe $[3]^2$ and a single symmetric, binary relation $\leftrightarrow$ defined as: $(a,b) \leftrightarrow (c,d)$ if and only if $a \leftrightarrow b$ and $c \leftrightarrow d$. It is simple to verify that the function $f: (x, y) \mapsto y$ is a \kl{homomorphism} of $K_3^2$ to $K_3$. As a consequence, $f$ is a \kl{polymorphism} of $K_3$.
    }\label{fig:K_3_polymorphism}
\end{figure}

The first significant progress in general understanding of $\CSP$s was the invention of the \textit{algebraic approach} \cite{algebraic_approach_1, algebraic_approach_2, algebraic_approach_3}. It established a tight connection between the complexity of $\CSP(\A)$ and the set $\Pol(\A)$. What was quickly observed is that the richer the set of polymorphisms is, the easier the $\CSP$ gets, in the sense that if $\Pol(\A) \subseteq \Pol(\B)$, then $\CSP(\B)$ reduces to $\CSP(\A)$ in logarithmic space. Therefore, if \kl{CSP Dichotomy Conjecture} was supposed to be true, the question remained: How \textit{exactly} poor must the set $\Pol(\A)$ be for $\CSP(\A)$ to be $\NP$-hard? To answer this question, we have to understand \kl{minors}, which are the next central notion in the algebraic approach to $\CSP$s.

\begin{definition}[\intro{Minor}]
    Suppose that $f : A^n \to B$ is a function and $\pi : [n] \to [m]$ is a map. By $f \xrightarrow[]{\pi} g$ or $g = f^\pi$, we denote that $g$ is an $m$-ary \textit{minor} of $f$ with respect to \intro{minor map} $\pi$, which means that for every choice of $a_1, \dots, a_m \in A$, we have
    \[
        g(a_1, a_2, \dots, a_m) = f\big(a_{\pi(1)}, a_{\pi(2)}, \dots, a_{\pi(n)}\big).
    \]
\end{definition}

In other words, a minor is a function that can be obtained from the original function by permuting, adding, and identifying variables. We are particularly interested in sets of functions that are closed under the operation of taking minors --- such sets are called \intro{minions}. Crucially, $\Pol(\A, \B)$ is always a minion, i.e. every minor of a polymorphism is also a polymorphism of the same structures. Similarly, the set of all \kl{projections} over a fixed set is a minion. In particular, we distinguish $\mathcal{P}_2$ --- the set of all projections over a set of size 2. 

\AP
In 2005, Bulatov, Jeavons and Krokhin \cite{algebraic_approach_1} proposed a universal hardness condition for $\CSP$s. They showed that if $\Pol(\A)$ resembles $\mathcal{P}_2$ in a certain way, then $\CSP(\A)$ must be $\NP$-hard. To formalize their condition, we consider the notion of \intro{choice functions}: Given any \kl{minion} $\minion$, we say that $C : \minion \to 2^{\mathbb{N}}$ is a choice function for $\minion$ if for every function $f \in \minion$ of arity $n$, we have $C(f) \subseteq [n]$. In other words, a choice function is a selection of some distinguished coordinates in every function of $\minion$. We are mainly interested in choice functions that are \textit{compatible} with \kl{minors}, i.e. such that if $f \xrightarrow[]{\pi} g$, then $\pi(C(f)) \cap C(g) \neq \emptyset$. It is quite clear that such a choice exists for $\mathcal{P}_2$ --- in addition, it is sufficient to choose only one coordinate in every function. This is exactly what is called the \kl{single choice condition}, which serves as the formal notion of ``resemblance'' to $\mathcal{P}_2$.

\begin{condition}[\intro{Single choice condition}]
    A minion $\minion$ satisfies the \textit{single choice condition} if there exists a \kl{choice function} $C$ for $\minion$ such that for every $f \in \minion$:
    \begin{itemize}
        \item $|C(f)| = 1$, and
        \item if $f \xrightarrow[]{\pi} g$, then $\pi(C(f)) =~C(g)$.
    \end{itemize}
\end{condition}

What the authors of \cite{algebraic_approach_1} proved was that if $\Pol(\A)$ admits a so-called \textit{clone homomorphism}\footnote{The word \textit{clone} in this context is not coincidental --- it is a known fact that $\Pol(\A)$ is always closed under composition.} to $\mathcal{P}_2$, then $\CSP(\A)$ is $\NP$-hard. This condition was later weakened to admitting a \textit{minion homomorphism}\footnote{Minion homomorphisms were originally referred as \textit{h1 clone homomorphisms}, i.e. preserving all \textit{height 1} identitiies --- opposed to regular clone homomorphisms which are supposed to capture identities of all heights (with arbitrarily deep compositions).} to $\mathcal{P}_2$ \cite{wonderlands}, which can be equivalently presented as the \kl{single choice condition}. All that remained was to show the second part: that the problem is tractable in the opposite case. It took more than a decade to finish, but it was eventually achieved in 2017 independently by Zhuk \cite{zhuk_dichotomy_conf} and Bulatov \cite{bulatov_dichotomy} --- wrapping up the proof of the \kl{CSP Dichotomy Conjecture}.

\begin{theorem}[\intro{CSP Dichotomy Theorem}]
    Suppose that $\A$ is a \kl{relational structure} with finite \kl{universe}. Then
    \begin{enumerate}[label=(\arabic*)]
        \item either $\Pol(\A)$ satisfies the \kl{single choice condition} and $\CSP(\A)$ is $\NP$-hard, or
        \item $\Pol(\A)$ does not satisfy the \kl{single choice condition} and $\CSP(\A)$ is solvable in polynomial time.
    \end{enumerate}
\end{theorem}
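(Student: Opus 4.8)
The statement has two halves of completely different weight, and the plan treats them separately.

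\emph{Hardness half.} This direction is essentially already assembled from the facts recalled in \cref{sec:introduction}. By construction, the \kl{single choice condition} on $\Pol(\A)$ is equivalent to the existence of a \kl{minion homomorphism} $\Pol(\A) \to \mathcal{P}_2$; and, combining the reduction of \cite{algebraic_approach_1} with its refinement to \kl{minion homomorphism}s in \cite{wonderlands}, such a homomorphism yields a log-space reduction $\CSP(\B_0) \leq \CSP(\A)$ for every template $\B_0$ with a \kl{minion homomorphism} from $\Pol(\B_0)$ to $\mathcal{P}_2$. Taking $\B_0$ to be the Boolean structure of $\mathsf{3}$-$\mathsf{SAT}$ from \cref{example:3-SAT}, whose only polymorphisms are \kl{projections} and which Schaefer's dichotomy \cite{schaefer} puts on the $\NP$-hard side, closes this half. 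The real content of the theorem is the tractability half, and the rest of the plan concerns only that.

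\emph{From the failure of the single choice condition to algebra.} The first step is to turn the purely combinatorial hypothesis into a usable term operation. One may first pass to the core of $\A$ and adjoin all singleton unary relations without changing the complexity, so that $\Pol(\A)$ becomes idempotent; then, by the cyclic/Siggers-term characterisations of Taylor algebras (Maróti--McKenzie, Barto--Kozik, Siggers), the \emph{absence} of a \kl{minion homomorphism} to $\mathcal{P}_2$ is equivalent to $\Pol(\A)$ containing a weak near-unanimity operation --- indeed a cyclic operation of every prime arity exceeding $|A|$. This situates $(A;\Pol(\A))$ among the Taylor algebras, where the two algorithmic endpoints are already understood: a strong form of local consistency alone (bounded width) suffices when the algebra omits affine behaviour, and a Gaussian-elimination / few-subpowers algorithm handles the Mal'cev case. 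The task is to cover everything in between by interleaving the two.

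\emph{The recursive algorithm.} Given an instance, first enforce a suitably strong form of local (cycle) consistency, rejecting if some constraint becomes empty. Then attempt to shrink the problem: if a variable's domain admits a proper absorbing or central subuniverse, restrict to it; if there is a congruence whose quotient is polynomially complete, branch over its blocks; if there is a congruence whose quotient is affine (a module), quotient by it. In each case recurse on the smaller instance and lift the solution back. When none of these reductions is available, the consistent instance is forced to carry a global linear structure, and it is solved directly by Gaussian elimination over the relevant finite modules.

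\emph{The main obstacle.} The crux --- the part that took two decades and roughly a hundred pages in each of \cite{zhuk_dichotomy} and \cite{bulatov_dichotomy} --- is to prove that this loop is both \emph{sound} (every reduction preserves solvability in both directions, introducing no spurious solutions) and \emph{exhaustive} (when all reductions are blocked, the residual instance genuinely carries a single linear structure compatible across \emph{all} of its constraints, so the linear-algebra step is correct). The hardest sub-case is precisely the interaction of the affine part with the rest: a constraint can look affine locally while the affine structures attached to different variables fail to glue, and controlling this needs the full machinery of absorption theory together with a delicate case analysis --- Zhuk's split into binary-absorbing, central, polynomially-complete, and linear reductions, or Bulatov's colored-graph and centraliser argument. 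Establishing this dichotomy, between a clean linear remainder on one side and an applicable domain reduction on the other, is the genuine difficulty; it is what finally settled the \kl{CSP Dichotomy Conjecture}, and everything upstream of it is comparatively routine.
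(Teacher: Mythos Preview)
The paper does not prove this theorem at all: it is stated as a known result and attributed to \cite{zhuk_dichotomy_conf, bulatov_dichotomy, zhuk_dichotomy}, with the surrounding text merely sketching the historical progression from \cite{algebraic_approach_1} and \cite{wonderlands} for the hardness side to the eventual resolution by Zhuk and Bulatov. There is therefore nothing in the paper to compare your proposal against.

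Your outline is a faithful high-level summary of Zhuk's approach (absorbing/central subuniverses, polynomially complete and linear quotients, interleaved with consistency), and you correctly identify where the real difficulty lies. As a proof \emph{proposal} it is honest about its own incompleteness --- you explicitly flag that the exhaustiveness and soundness of the reduction loop is the hundred-page part. That is the right assessment, but it also means this is a roadmap rather than a proof: nothing here could be checked line by line, and the hard structural lemmas (e.g., that every Taylor algebra admits one of the four reduction types, or that linear congruences glue consistently across constraints after the other cases are exhausted) are named rather than argued. For a result the paper only cites, that seems like the appropriate level of detail.
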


\subsection{Adding the promise to CSPs}

The class of $\PCSP$s is clearly an extension of $\CSP$s: $\CSP(\A)$ can always be equivalently defined as $\PCSP(\A, \A)$. Furthermore, the algebraic approach to $\CSP$s quite neatly generalizes to $\PCSP$s. The complexity of $\PCSP(\A, \B)$ is fully captured by $\Pol(\A, \B)$: if $\Pol(\A, \B) \subseteq \Pol(\A', \B')$, then $\PCSP(\A', \B')$ log-space reduces to $\PCSP(\A, \B)$ \cite{algebraic_approach_to_pcsps}. This motivates the quest to extend the results in the field of $\CSP$s to the more general framework of $\PCSP$s. The following question seems to be a reasonable starting point:
\begin{center}
    \textit{Are analogues of Schaefer's and Hell-Nešetřil dichotomies true for Boolean and graph PCSPs?}
\end{center}
Unfortunately, it seems that we are still far from resolving this problem. As it turns out, $\PCSP$s are significantly more complex than $\CSP$s, even if the structure domains are restricted to $\{0,1\}$, or both are required to be undirected graphs. In this paper, we contribute to the former case by providing a new dichotomy result for Boolean $\PCSP$s with structurally limited \kl{polymorphisms}. Before we state our results, we explore the tools for studying $\PCSP$s developed in recent years.

One of the most apparent manifestations of $\PCSP$ complexity is that there is no apparent candidate for a universal hardness condition analogous to \kl{single choice condition}. In fact, even weaker conditions used for reductions in multiple partial hardness results have already been proven to be insufficient to capture the whole landscape of $\NP$-hard $\PCSP$s. The first results in the study of $\PCSP$s \cite{hastad_guruswami_2017, symmetric_boolean_dichotomy, symmetric_boolean_pcsps, 3_coloring_of_H_colorable_graphs} utilized the \textit{multiple choice condition}, which is a direct relaxation of the \kl{single choice condition}.

\begin{condition}[\intro{Multiple choice condition}]
    A minion $\minion$ satisfies the \textit{multiple choice condition} if there exists a number $M$ and a \kl{choice function} $C$ for $\minion$, such that for every $f \in \minion$:
    \begin{itemize}
        \item $|C(f)| \leq M$ for every $f \in \minion$, and 
        \item if $f \xrightarrow[]{\pi} g$, then $\pi(C(f)) \cap C(g) \neq \emptyset$.
    \end{itemize}
\end{condition}

The \kl{multiple choice condition} provides more flexibility in hardness results by allowing us to select more than one ``significant'' coordinate in every polymorphism. The fact that \kl{single choice condition} implies $\CSP$ hardness follows from a direct reduction from $\mathsf{3}$-$\mathsf{SAT}$. In contrast, the proof that $\PCSP(\A, \B)$ whose polymorphism minion satisfies the \kl{multiple choice condition} is $\NP$-hard is slightly more involved. Here, we have to use more sophisticated starting points for hardness reductions, which are various variants of the \textit{Label Cover} problem.

\begin{definition}[\intro{Label Cover}]
    An instance of \textit{Label Cover} $\Psi = (L \cup R, E, \Sigma_L, \Sigma_R, \Pi)$ consists of a bipartite graph $(L \cup R, E)$, alphabets $[\Sigma_L]$, $[\Sigma_R]$ and a set $\Pi$ of functional \intro{constraints} $\pi_e : [\Sigma_L] \to [\Sigma_R]$ for every $e \in E$. Given a \intro{labeling} $\sigma$, which assigns an element of $[\Sigma_L]$ to every vertex in $L$ and an element of $[\Sigma_R]$ to every vertex in $R$, we say that constraint $\pi_e \in \Pi$ for $e = (u, v) \in L \times R$ is satisfied if $\pi_e(\sigma(u)) = \sigma(v)$.
\end{definition}

It is $\NP$-hard to decide whether for a given instance of \kl{Label Cover}, a labeling satisfying all constraints exists. However, to obtain results regarding hardness of approximation (including $\PCSP$s), we consider the promise version of \kl{Label Cover}, called \textit{Gap Label Cover} problem.

\begin{definition}[\intro{Gap Label Cover}]
    Suppose that $n \geq 1$ and $s, t \in [0,1]$ are constants such that $s \leq t$. By $\Gap_n[t, s]$ we denote the decision problem defined as follows: Given a \kl{Label Cover} instance $\Psi$ with $\Sigma_L, \Sigma_R \leq n$, answer $\mathsf{YES}$ if there is a \kl{labeling} satisfying at least a $t$-fraction of \kl{constraints}, and answer $\mathsf{NO}$ if all labelings satisfy at most an $s$-fraction of constraints.
\end{definition}

The interesting aspect of \kl{Gap Label Cover} is that it is $\NP$-hard even in the \intro{perfect completeness} version, i.e. given that $t = 1$. Specifically, for every $s \in (0,1)$, the problem $\Gap_n[1, s]$ is $\NP$-hard (for sufficiently large $n$). This fact is an implication of the \textit{PCP Theorem} \cite{PCP_1, PCP_2} combined with \textit{Raz’s Parallel Repetition Theorem} \cite{parallel_repetition}, or alternatively, a combinatorial surrogate of PCP called \textit{Baby PCP Theorem} proposed in \cite{baby_pcp}. To show that \kl{multiple choice condition} implies hardness for $\PCSP$s, we have to reduce \kl{Gap Label Cover} with perfect completeness to a $\PCSP$. This is typically done with an intermediate problem called \textit{Promise Minor Condition}. We elaborate on this topic and explicitly describe such a reduction in \cref{sec:random_condition}.

Unfortunately, we already know that \kl{multiple choice condition} is too restrictive. In search of even weaker hardness conditions for $\PCSP$s, two main advances have been obtained. In 2021, Brands, Wrochna and Živný \cite{layers} proposed a \textit{layered} variant of \kl{Gap Label Cover} and showed that it is hard. This gave rise to the \textit{layered choice condition}, which has already proved to be useful in several new hardness results for $\PCSP$s \cite{layers, symmetric_pcsps_beyond_boolean, ciardo2023complexity}, where the vanilla \kl{multiple choice condition} was already lacking. 

\begin{condition}[\intro{Layered choice condition}]
    A minion $\minion$ satisfies the \textit{layered choice condition} if there exists a number $M$ and a \kl{choice function} $C$ for $\minion$, such that 
    \begin{itemize}
        \item $|C(f)| \leq M$ for every $f \in \minion$, and 
        \item for every $f_1 \in \minion$ and a chain of minor maps 
        \[
            f_1 \xrightarrow{\pi_{1,2}} f_2 \xrightarrow{\pi_{2,3}} \dots \xrightarrow{\pi_{M-1, M}} f_M,
        \] 
        there are $i < j$ satisfying $\pi_{i,j}(C(f_i)) \cap C(f_j) \neq \emptyset$ (where $\pi_{i, j} = \pi_{j-1, j} \circ \dots \circ \pi_{i, i+1}$ is a composition of minor maps).
    \end{itemize}
\end{condition}

\begin{theorem}[Corollary 4.2 in \cite{layers}]\label{theorem:reductions:layered_condition_implies_hardness}
    Let $(\mathbb{A}, \mathbb{B})$ be $\PCSP$ \kl{template}. If $\Pol(\mathbb{A}, \mathbb{B})$ satisfies the \kl{layered choice condition}, then $\PCSP(\mathbb{A}, \mathbb{B})$ (the decision version) is $\NP$-hard.
\end{theorem}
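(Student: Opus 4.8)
The plan is to mirror the classical gadget reduction establishing that the \kl{multiple choice condition} implies $\NP$-hardness, but to start from the layered variant of \kl{Gap Label Cover} shown to be $\NP$-hard in \cite{layers} instead of from the ordinary one. Following the route sketched in \cref{sec:random_condition}, I would first invoke the standard log-space equivalence between $\PCSP(\A,\B)$ and the \kl{Promise Minor Condition} over the fixed \kl{minion} $\minion = \Pol(\A,\B)$, so that it suffices to reduce (in polynomial time) a layered label-cover instance to the promise problem ``is the given system of \kl{minor conditions} satisfiable by \kl{projections}, or unsatisfiable in $\minion$?''. Take a layered instance $\Psi$ with exactly $M$ layers $V_1,\dots,V_M$, where $M$ is the constant of the \kl{layered choice condition}, and with soundness parameter $s$ to be fixed later; introduce one function symbol $F_u$ of arity $k_u$ for each vertex $u$ (with $k_u$ the size of the label alphabet of $u$'s layer), and for each constraint edge $e=(u,v)$ with map $\pi_e$ the minor identity $F_v(y_1,\dots,y_{k_v}) \approx F_u(y_{\pi_e(1)},\dots,y_{\pi_e(k_u)})$, which asserts $F_v = F_u^{\pi_e}$.

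\emph{Completeness.} If $\Psi$ is a $\mathsf{YES}$ instance with a labeling $\sigma$, interpreting each $F_u$ as the $\sigma(u)$-th \kl{projection} satisfies every identity, since the edge identity $F_v = F_u^{\pi_e}$ holds between the $\sigma(u)$-th and $\sigma(v)$-th projections precisely when $\pi_e(\sigma(u)) = \sigma(v)$. Hence the minor-condition instance is satisfiable by projections, which under the equivalence above corresponds exactly to the case $\X \to \A$ of the produced $\PCSP(\A,\B)$ instance.

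\emph{Soundness.} If the minor-condition instance is satisfiable in $\minion$, we obtain \kl{polymorphisms} $f_u \in \minion$ with $f_v = f_u^{\pi_e}$ along every edge. Apply the \kl{choice function} $C$ furnished by the \kl{layered choice condition}, getting sets $C(f_u)$ of size at most $M$, and let $\sigma(u)$ be uniform in $C(f_u)$, independently over vertices. Along any chain of $M$ consecutive layers $u_1 \in V_1,\dots,u_M \in V_M$ linked by edges, the maps $\pi_e$ compose so that the $f_{u_t}$ form a chain of \kl{minor maps}, and the composition of these maps from $u_a$ to $u_b$ coincides with the map $\pi_{a,b}$ of the condition; thus the \kl{layered choice condition} yields $a<b$ with $\pi_{a,b}(C(f_{u_a})) \cap C(f_{u_b}) \neq \emptyset$, and since both choice sets have size at most $M$ and $\sigma(u_a),\sigma(u_b)$ are independent, the composed label-cover constraint between $u_a$ and $u_b$ is satisfied by $\sigma$ with probability at least $1/M^2$. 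Averaging over chains shows $\sigma$ satisfies in expectation at least a constant fraction $\delta = \delta(M) > 0$ of the composed constraints of $\Psi$; fixing the soundness of the layered label cover below $\delta$ makes this impossible once $\Psi$ is a $\mathsf{NO}$ instance. So the minor-condition instance is unsatisfiable in $\minion$, i.e. $\X \not\to \B$, and the reduction is complete.

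\emph{Main obstacle.} The substantive difficulty is not the gadget but aligning the parameters of the layered \kl{Gap Label Cover} with the \kl{layered choice condition}: one needs the hard source problem to come with the right number of layers, constraints between the relevant layer pairs, and a $\mathsf{NO}$-soundness guarantee robust against $M$-bounded ``multi-labelings'' even after composing up to $M$ projection constraints (which contracts alphabets and compounds error). Producing such a layered PCP is precisely the hardness theorem that underlies Corollary~4.2 of \cite{layers}; granting it, the remaining points --- polynomial-time computability (each function symbol has constant arity and the translation to a $\PCSP(\A,\B)$ instance enlarges sizes only by a template-dependent constant) and turning the expected fraction $\delta$ into a genuine soundness gap via an averaging/derandomization step --- are routine, and I would simply cite that theorem and carry out the reduction above.
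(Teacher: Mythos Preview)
The paper does not give its own proof of this statement; it is quoted verbatim as Corollary~4.2 of \cite{layers} and used as a black box. Your sketch is a faithful reconstruction of the argument in \cite{layers}, following exactly the template the present paper spells out in \cref{sec:random_condition} for the \kl{random 2-to-1 condition}: reduce a layered \kl{Gap Label Cover} instance to $\PMC$ by the identity \eqref{eq:minor_condition_construction_from_label_cover}, get completeness from projections, and get soundness by randomly labeling each vertex from its choice set.

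One step in your soundness argument is stated too loosely. For each chain $u_1,\dots,u_M$ the \kl{layered choice condition} only guarantees \emph{some} pair $a<b$ with $\pi_{a,b}(C(f_{u_a}))\cap C(f_{u_b})\neq\emptyset$, and this pair may vary from chain to chain; ``averaging over chains'' does not by itself produce a constant fraction of satisfied constraints against which the layered soundness bites. The fix is a pigeonhole step: since there are only $\binom{M}{2}$ possible pairs, some fixed pair $(a,b)$ works for at least a $1/\binom{M}{2}$ fraction of chains, and then the random labeling satisfies in expectation at least a $1/(M^2\binom{M}{2})$ fraction of the composed constraints \emph{between layers $a$ and $b$}. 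The soundness of the layered label cover in \cite{layers} is phrased exactly so that this contradicts the $\mathsf{NO}$ case (no labeling satisfies more than $s$ of the constraints between any pair of layers). This is precisely the ``parameter alignment'' you flag as the main obstacle, so you have identified the right place where the real work of \cite{layers} enters.
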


The most recent refinement in this line is by Banakh and Kozik \cite{injective_hardness_condition}, who identified an $\NP$-hard Boolean $\PCSP$ that does not satisfy the \kl{layered choice condition}. As a remedy, they proposed a \textit{smooth} version of \kl{Gap Label Cover} with layers and obtained the currently weakest $\PCSP$ hardness condition: the \intro{injective layered condition}, which requires the choice function to be compatible only with minor maps that are injective on the coordinates distinguished by the \kl{choice function}. This is the first $\PCSP$ hardness condition that imposes a \textit{structural} restriction on minor maps with which the choice function must be compatible. 

To obtain even stronger structural assertions in hardness conditions for $\PCSP$s, one has to introduce restrictions on \kl{constraints} in \kl{Label Cover} instances. In this paradigm, the famous \intro{Unique Games Conjecture} \cite{unique_games} asks whether the \kl{Gap Label Cover} problem remains $\NP$-hard if all constraints are bijective --- we denote this problem by $\GapUnique_n[t, s]$. It is straightforward to see that we cannot hope for perfect completeness as $\GapUnique[1, s]$ is solvable in polynomial time. However, if we allow the constraints to be \intro{2-to-1 functions}, i.e. such that $|\pi_e^{-1}(i)| = 2$, then the question of hardness with perfect completeness remains open. This problem is denoted by $\GapDtoOne{2}_n[t, s]$ and has seen significant traction in recent years. In a striking series of papers \cite{2TO1_1, 2TO1_side1, 2TO1_side2, 2TO1_2, 2TO1_3, 2TO1_4}, it has been shown that for every $\varepsilon > 0$, the problem $\GapDtoOne{2}_n[1-\varepsilon, \varepsilon]$ is $\NP$-hard for sufficiently large $n$. As a corollary, this implies the hardness of $\GapUnique_n[1/2, \varepsilon]$, reinforcing \kl{Unique Games Conjecture}. We note that hardness of $\GapDtoOne{2}_n[1, s]$ would have significant consequences for $\PCSP$s. In particular, it would imply hardness of \kl{Approximate Coloring} \cite{dinur_approximate_graph_coloring, dTo1Hardness}. 

More recently, Braverman, Khot and Minzer \cite{braverman_et_al:LIPIcs.ITCS.2021.27} proposed another variant of restriction on \kl{Label Cover} constraints, which they called \intro{richness}. We say that $\Psi$ is an instance of \intro{Rich 2-to-1 Label Cover}, if (1) all constraints are 2-to-1, and (2) the process of uniformly choosing a constraint adjacent to any fixed vertex in $L$ yields a uniform distribution over all 2-to-1 functions $[2n] \to [n]$\footnote{We note that in the original formulation of the conjecture, the richness condition is weaker: the distribution of paritions of $[2n]$ into pairs induced by pre-images of constraints is required to be uniform. Our formulation is more convenient for our considerations --- it comes from \cite{brakensiek_et_al:LIPIcs.ICALP.2021.37}.}. This variant of \kl{Gap Label Cover} is denoted by $\GapRich_n[t,s]$. Interestingly, its conjectured hardness without perfect completeness is equivalent to \kl{Unique Games Conjecture} \cite[Theorem 8]{braverman_et_al:LIPIcs.ITCS.2021.27}. This fact motivated the authors to propose the \textit{Rich 2-to-1 Conjecture}.

\begin{conjecture}[\intro{Rich 2-to-1 Conjecture}]
    For every $s \in (0,1)$, the problem $\GapRich_n[1, s]$ is $\NP$-hard for sufficiently large $n$.
\end{conjecture}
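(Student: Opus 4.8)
The displayed statement is an open \emph{conjecture}, not a theorem, and the present paper uses it only as a hypothesis (for the conditional hardness result advertised in the abstract), not as something it establishes. So what follows is a plan of attack rather than a proof. The only unconditional progress in this direction is the 2-to-1 Games Theorem \cite{2TO1_1, 2TO1_side1, 2TO1_side2, 2TO1_2, 2TO1_3, 2TO1_4}, which gives $\NP$-hardness of $\GapDtoOne{2}_n[1-\varepsilon,\varepsilon]$ but crucially only with \emph{imperfect} completeness $1-\varepsilon$. The natural plan is therefore to start from that body of work and push the completeness parameter all the way to $1$ while simultaneously enforcing the richness condition of \cite{braverman_et_al:LIPIcs.ITCS.2021.27}; equivalently, via the stated equivalence of the imperfect-completeness variants with $\GapUnique$, one is really being asked to separate perfect completeness from the Unique Games regime.

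First I would isolate precisely where perfect completeness is lost in the existing 2-to-1 proofs. In the Grassmann/Johnson-graph framework the loss enters in two places: the outer smooth Label Cover layer, and the soundness analysis, where a small amount of noise is deliberately injected so that the global structure theorem (resolving the Grassmann expansion hypothesis) applies — a perfectly satisfying assignment to the composed instance would need a zero-noise version of that analysis. So step one is a ``noiseless'' Grassmann/Johnson structure theorem, or a replacement dictatorship-test gadget whose completeness case survives without any planted noise while keeping soundness Fourier-analytically tractable over the relevant biased product space.

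Second, I would re-engineer the long-code/dictatorship test so that the distribution of 2-to-1 constraints it produces between a fixed left vertex and its neighbourhood is \emph{exactly} uniform over all 2-to-1 maps $[2n]\to[n]$, as richness demands. The natural move is to symmetrize the test under the action of $S_{2n}$ (and the relevant wreath product on coordinate blocks) and verify that both completeness and soundness are invariant under this symmetrization; the paper's own later lemma — that a random 2-to-1 minor map retains significant coordinate influence over the $p$-biased cube with constant probability — is exactly the kind of statement suggesting such symmetrization is compatible with the influence-based soundness machinery.

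The main obstacle is step one: perfect completeness for 2-to-1 games is \emph{the} open problem here, widely regarded as at least as hard as settling Approximate Graph Coloring for all constants, and no current technique produces a gap Label Cover instance that is simultaneously perfectly satisfiable, 2-to-1, and smooth/expanding enough to run the soundness argument. A realistic near-term target is instead a \emph{conditional} statement: prove that the Rich 2-to-1 Conjecture follows from the plain 2-to-1 Conjecture with perfect completeness $\GapDtoOne{2}_n[1,s]$ by a richness-enforcing reduction, mirroring the unconditional equivalence \cite[Theorem 8]{braverman_et_al:LIPIcs.ITCS.2021.27} between the imperfect-completeness versions — which is all the present paper needs, since it assumes the conjecture rather than proving it.
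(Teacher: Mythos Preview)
You are correct that the displayed statement is an open conjecture which the paper does not attempt to prove; it is merely stated (citing \cite{braverman_et_al:LIPIcs.ITCS.2021.27}) and then used as a hypothesis for the conditional hardness results, so there is no proof in the paper to compare your proposal against. Your ``plan of attack'' is reasonable speculation but goes well beyond anything the paper claims or needs.
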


The application of \kl{Rich 2-to-1 Conjecture} in $\PCSP$s has been pioneered by Brakensiek, Guruswami and Sandeep \cite{brakensiek_et_al:LIPIcs.ICALP.2021.37}, who established a complete conditional dichotomy for Boolean $\PCSP$s with polymorphism minions consisting of \intro{monotone} functions, i.e. functions of the form $f : \Bool^n \to \Bool$ such that if $\tuple x \leq \tuple y$\footnote{Here the relation $\leq$ is coordinate-wise, i.e. $\tuple x \leq \tuple y$ if and only if $x_i \leq y_i$ for every coordinate $i$.}, then $f(\tuple x) \leq f(\tuple y)$. They showed that \kl{Rich 2-to-1 Label Cover} can be reduced to $\PCSP$s satisfying the \textit{random 2-to-1 condition}, which requires the choice function to be compatible only with randomly chosen 2-to-1 minor maps.

\begin{condition}[\intro{Random 2-to-1 condition}]
    Suppose that $\minion$ is a minion. We say that $\minion$ satisfies the \textit{random 2-to-1 condition} if there exist numbers $M, \tau > 0$ and a \kl{choice function} $\mathcal{C}$ for $\minion$, such that for every $f \in \minion$:
    \begin{enumerate}[label=(\arabic*)] 
        \item $|C(f)| \leq M$, and 
        \item if $f$ has even parity $\,2n$, then
        \[
            \Pr_{\pi} \big[ \pi(C(f)) \cap  C(f^\pi) \neq \emptyset \big] \geq \tau,
        \]
        where $\pi : [2n] \to [n]$ is a uniformly random $2$-to-$1$ minor map. 
    \end{enumerate}
\end{condition}

The \kl{random 2-to-1 condition} is highly flexible; we will use it for one of our conditional hardness results in \cref{sec:influence}. In \cref{sec:random_condition} we prove the following result, which has already been obtained in the setting of Boolean $\PCSP$s \cite[Theorem 4.8]{brakensiek_et_al:LIPIcs.ICALP.2021.37}. 

\begin{theorem}\label{theorem:reductions:random_condition_implies_hardness}
    Let $(\mathbb{A}, \mathbb{B})$ be a $\PCSP$ \kl{template} such that $\, \Pol(\mathbb{A}, \mathbb{B})$ satisfies the \kl{random 2-to-1 condition}. Then $\PCSP(\mathbb{A}, \mathbb{B})$ (the decision version) is $\NP$-hard, assuming the \kl{Rich 2-to-1 Conjecture}.
\end{theorem}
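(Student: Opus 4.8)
The plan is to reduce the promise problem $\GapRich_n[1,s]$ --- which, under the \kl{Rich 2-to-1 Conjecture}, is $\NP$-hard for a suitably chosen fixed alphabet size $n$ --- to the decision version of $\PCSP(\A,\B)$, via the standard ``free structure'' gadget reduction from variants of \kl{Label Cover} to $\PCSP$s, now instantiated for the \kl{random 2-to-1 condition}. Write $\minion = \Pol(\A,\B)$, with associated constants $M$ and $\tau$ and \kl{choice function} $C$. Given an instance $\Psi = (L \cup R, E, [2n], [n], \Pi)$ of \kl{Rich 2-to-1 Label Cover}, I would build a \kl{structure} $\X_\Psi$ consisting of one block of variables per vertex: for $u \in L$ a copy of the power structure $\A^{2n}$ (whose universe has $|A|^{2n}$ elements), and for $v \in R$ a copy of $\A^{n}$. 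Then, for each edge $e = (u,v) \in E$ with constraint $\pi_e : [2n] \to [n]$ and each tuple $\tuple{b} \in A^{n}$, I would identify the variable of the $u$-block indexed by $(b_{\pi_e(1)}, \dots, b_{\pi_e(2n)})$ with the variable of the $v$-block indexed by $(b_1, \dots, b_n)$. Since \kl{polymorphisms} of $(\A,\B)$ are precisely \kl{homomorphisms} $\A^{k} \to \B$ and the only relations of $\X_\Psi$ are those internal to the blocks, a homomorphism $\X_\Psi \to \B$ is exactly a choice of $F_u \in \minion$ of arity $2n$ for each $u \in L$ and $F_v \in \minion$ of arity $n$ for each $v \in R$ with $F_v = F_u^{\pi_e}$ for every $e = (u,v) \in E$; likewise, assigning \kl{projections} to the blocks yields a homomorphism into $\A$. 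As $n$ and $|A|$ are constants once $(\A,\B)$ and $s$ are fixed, every block has constant size, so $\X_\Psi$ is computable from $\Psi$ in polynomial time.

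For \kl{completeness}, if $\Psi$ is a $\mathsf{YES}$ instance I would fix a \kl{labeling} $\sigma$ satisfying every \kl{constraint} and set $F_u$ to be the $2n$-ary \kl{projection} onto coordinate $\sigma(u)$ and $F_v$ the $n$-ary projection onto $\sigma(v)$ (as operations on $A$); since $\pi_e(\sigma(u)) = \sigma(v)$ for every edge, $F_u^{\pi_e} = F_v$, all the minor identifications are respected, and hence $\X_\Psi \to \A$. For \kl{soundness} I would argue the contrapositive: assume $\X_\Psi \to \B$ and let $F_u, F_v \in \minion$ be the witnessing functions. Using the \kl{choice function} from the \kl{random 2-to-1 condition}, define a randomized \kl{labeling} $\sigma$ by picking $\sigma(u)$ uniformly from $C(F_u)$ and $\sigma(v)$ uniformly from $C(F_v)$, independently across vertices. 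For an edge $e = (u,v)$, whenever $\pi_e(C(F_u)) \cap C(F_v) \neq \emptyset$ there is $\ell \in C(F_u)$ with $\pi_e(\ell) \in C(F_v)$, and then $\sigma$ satisfies $e$ with probability at least $(1/M)\cdot(1/M) = 1/M^2$, since $|C(F_u)|, |C(F_v)| \le M$.

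The crux --- and the step I expect to be the main obstacle --- is to show that a constant fraction of the edges $e$ satisfy $\pi_e(C(F_u)) \cap C(F_v) \neq \emptyset$, because the \kl{random 2-to-1 condition} only controls \emph{randomly drawn} 2-to-1 minor maps, not the specific constraints $\pi_e$ appearing in $\Psi$. This is exactly where \kl{richness} enters: fix $u \in L$; then $F_u$ has even arity $2n$, a uniformly random edge at $u$ has $\pi_e$ distributed uniformly over all 2-to-1 maps $[2n] \to [n]$, and $C(F_v) = C(F_u^{\pi_e})$ since $F_v = F_u^{\pi_e}$. The \kl{random 2-to-1 condition} therefore gives $\Pr_{e \ni u}[\pi_e(C(F_u)) \cap C(F_v) \neq \emptyset] \ge \tau$, and averaging over $u$ with the weights that recover the uniform distribution over $E$ shows that an expected $\tau$-fraction of edges is ``good''. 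Combined with the previous paragraph, $\EX_\sigma[\text{fraction of constraints satisfied by } \sigma] \ge \tau/M^2$, so some \kl{labeling} satisfies at least a $\tau/M^2$ fraction of \kl{constraints}. Fixing $s := \tau/(2M^2)$, a constant depending only on $(\A,\B)$, this means every $\mathsf{NO}$ instance of $\GapRich_n[1,s]$ is mapped to a structure $\X_\Psi$ with $\X_\Psi \not\to \B$. Hence the construction is a valid reduction of $\GapRich_n[1,s]$ to $\PCSP(\A,\B)$, and under the \kl{Rich 2-to-1 Conjecture} the former is $\NP$-hard for some sufficiently large but fixed $n$, so $\PCSP(\A,\B)$ is $\NP$-hard as well. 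Apart from the richness step, the argument is routine and parallels the proof of \cite[Theorem 4.8]{brakensiek_et_al:LIPIcs.ICALP.2021.37}.
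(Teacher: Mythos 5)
Your proof is correct and follows essentially the same route as the paper's: the combinatorial heart (richness makes a uniformly random constraint at each $u \in L$ a uniformly random $2$-to-$1$ map, the random $2$-to-$1$ condition then guarantees a $\tau$-fraction of ``good'' edges, and the $1/M^2$ factor converts good edges into satisfied constraints under a random labeling drawn from the choice sets) is identical. The only presentational difference is that the paper factors the reduction through the intermediate problem $\PMC_{2n}(\minion)$ and invokes the standard $\PMC$-to-$\PCSP$ reduction from \cite{algebraic_approach_to_pcsps}, whereas you inline that step by constructing the free-structure gadget $\X_\Psi$ directly.
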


No matter how weak the hardness conditions are established, clearly there must be a point where we cannot hope for an appropriate choice function anymore. The extreme scenario is when $\Pol(\A, \B)$ contains \intro{symmetric} functions of arbitrarily large arities, that is, invariant over permutations of variables; since all coordinates are ``identical'', there is no indication for a choice function. Luckily, in this case we know that the $\PCSP$ (in the decision version) is solvable in polynomial time via an algorithm based on linear programming relaxation \cite{BLP_tractability}. Interestingly, symmetric polymorphisms do not take part in the algorithm itself, but ensure correctness of the rounding phase. To recover the solution and solve the search version of $\PCSP$, one needs a bit more: Brakensiek and Guruswami \cite{LP} provided an algorithm for search $\PCSP$s with polymorphism minions containing some specific, \textit{structured} subsets of functions. An example of such a set of structured functions is the set of all \textit{threshold functions} with a fixed threshold boundary.

\begin{definition}[\intro{Threshold function}]
    We say that $f : \Bool^n \to \Bool$ is a \textit{threshold function} if there is a number $t$ such that 
    \[
        f(\tuple x) = \begin{cases}
            1 & \text{ if } \tuple x \text{ has at least $t$ ones}, \\
            0 & \text{ otherwise.}
        \end{cases}
    \]
\end{definition}

For any $q \in (0,1)$, by $\THR_q$ we denote the set of all threshold functions that admit value $1$ if and only if at least a $q$-fraction of input values is equal to $1$. By \cite{LP} we know that if $\THR_q \subseteq \Pol(\A, \B)$ for any $q$, then the search version of $\Pol(\A, \B)$ is solvable in polynomial time. However, if $\Pol(\A, \B)$ contains symmetric functions of arbitrarily large arities, but we have no more guarantees about their structure, then the algorithm of \cite{BLP_tractability} is only applicable to the decision version of $\PCSP(\A, \B)$.

\begin{theorem}[Theorem 2 in \cite{BLP_tractability}]\label{theorem:blp_for_pcsps_with_symmetric_polymorphisms}
    Suppose that $(\mathbb{A}, \mathbb{B})$ is a $\PCSP$ template such that $\Pol(\mathbb{A}, \mathbb{B})$ has symmetric polymorphisms of arbitrarily large arities. Then the BLP+Affine algorithm solves $\PCSP(\mathbb{A}, \mathbb{B})$ (the decision version) in polynomial time.
\end{theorem}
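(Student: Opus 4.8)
The plan is the standard two‑sided argument for a relaxation‑based algorithm, with the symmetric polymorphisms entering only in the rounding step. First I would spell out the algorithm. On input $\X$ with variable set $V$ and constraints given by pairs $(R,\tuple v)$, $\tuple v=(v_1,\dots,v_r)$, the BLP+Affine algorithm tests feasibility of the \emph{combined relaxation}: a rational point $p$ assigning a probability distribution $\lambda_v$ on $A$ to each $v\in V$ and a distribution $\mu_{(R,\tuple v)}$ on $R^\A$ to each constraint, with the marginal conditions $\mathrm{marg}_i\,\mu_{(R,\tuple v)}=\lambda_{v_i}$; together with an integer point $z$ of the same shape satisfying the marginal equations and the normalisations $\sum_{a}z_v(a)=1$, $\sum_{\tuple t}z_{(R,\tuple v)}(\tuple t)=1$, and supported inside the support of $p$ (coordinatewise, $\mathrm{supp}(z)\subseteq\mathrm{supp}(p)$). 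Feasibility is decidable in polynomial time: compute the inclusion‑maximal support $S$ over all BLP solutions (standard), fix the corresponding rational BLP solution $p$, and check by linear algebra over $\mathbb{Z}$ (Smith normal form) whether the affine equations admit an integral solution supported on $S$. \emph{Completeness} is immediate: if $\X\to\A$ via $\varphi$, concentrating $\lambda_v,z_v$ on $\varphi(v)$ and $\mu_{(R,\tuple v)},z_{(R,\tuple v)}$ on $(\varphi(v_1),\dots,\varphi(v_r))\in R^\A$ is an accepting pair.

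The substance is \emph{soundness}: assuming the algorithm accepts, I would construct a homomorphism $\X\to\B$. Fix an accepting pair $(p,z)$ and let $Q$ be a common denominator of $p$. Form the \emph{blend} $q_\alpha:=(1-\alpha)p+\alpha z$. Outside $\mathrm{supp}(p)$ it vanishes (since $\mathrm{supp}(z)\subseteq\mathrm{supp}(p)$), and on $\mathrm{supp}(p)$ it is strictly positive for all sufficiently small $\alpha\ge 0$; being an affine combination of $p$ and $z$, it still has consistent marginals and all block‑sums equal to $1$. Hence there is $\alpha_0>0$, depending only on $(p,z)$, such that $q_\alpha$ is a genuine BLP solution for every $\alpha\in[0,\alpha_0)$. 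Now for any sufficiently large integer $N$, pick $c\in\{0,1,\dots,Q-1\}$ with $c\equiv N\pmod Q$ and set $\alpha=c/N<\alpha_0$; then $N\,q_\alpha=(N-c)\,p+c\,z$ is a nonnegative integer vector (nonnegative because $q_\alpha\ge 0$, integral because $Q\mid(N-c)$ and $z$ is integral), with block‑sums $N$ and consistent marginals. Reading it off yields, for each $v\in V$, a multiset $P_v$ of exactly $N$ elements of $A$, and for each constraint $(R,\tuple v)$ a multiset of $N$ tuples of $R^\A$ whose projection to coordinate $i$ equals $P_{v_i}$; in particular the profile $P_v$ depends only on $v$, not on the constraint through which $v$ is viewed.

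Finally I would use the hypothesis. Choose $N$ large enough for the above \emph{and} admitting a symmetric polymorphism $f_N\in\Pol(\A,\B)$ of arity $N$ — possible since such polymorphisms exist for arbitrarily large arities. For a constraint $(R,\tuple v)$, list its $N$ realising tuples as the columns $\tuple t^{(1)},\dots,\tuple t^{(N)}\in R^\A$ of an $r\times N$ matrix; applying $f_N$ to the rows gives $(b_1,\dots,b_r)\in R^\B$ by the polymorphism property, and since $f_N$ is symmetric each $b_i$ depends only on the multiset of row $i$, i.e.\ on $P_{v_i}$. Therefore $h(v):=f_N(P_v)$ is well defined (independent of the chosen listing) and, by the last paragraph, independent of the constraint; the computation above shows $h$ satisfies every constraint, so $h$ is a homomorphism $\X\to\B$. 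Thus acceptance implies $\X\to\B$, equivalently $\X\not\to\B$ implies rejection, which with completeness settles correctness on the two promised cases.

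The point I expect to be the real obstacle — and the place where the affine relaxation is genuinely needed — is the passage from ``realisable'' along an arithmetic progression of arities to ``realisable for all large $N$''. The BLP solution alone lets one realise $\X$ only with $N$ tuples when $N$ is a multiple of the denominator $Q$, and that progression need not meet the (otherwise completely uncontrolled) set of arities at which symmetric polymorphisms happen to live; the affine correction term $c\,z$, made legitimate precisely by the support‑linking $\mathrm{supp}(z)\subseteq\mathrm{supp}(p)$, removes this divisibility obstruction, after which mere unboundedness of the available symmetric arities suffices. Getting the formulation of the combined relaxation and its polynomial‑time feasibility test exactly right, and checking that the blend stays simultaneously nonnegative, integral and marginal‑consistent, is the delicate bookkeeping that underlies this.
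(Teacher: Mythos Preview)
The paper does not supply its own proof of this statement: it is quoted verbatim as Theorem~2 of \cite{BLP_tractability} and used as a black box for the tractability side of \cref{theorem:dichotomy_for_positive_ptfs_extended}. So there is no in-paper proof to compare against.

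That said, your sketch is essentially the standard argument from the cited source (the BLP+AIP algorithm of Brakensiek, Guruswami, Wrochna, and \v{Z}ivn\'{y}). The two nontrivial ingredients --- (i) forming the blend $q_\alpha=(1-\alpha)p+\alpha z$ and choosing $\alpha=c/N$ with $c\equiv N\pmod Q$ so that $N q_\alpha$ is a nonnegative integer vector with block-sums $N$ for \emph{every} sufficiently large $N$, and (ii) rounding via a symmetric $N$-ary polymorphism, whose symmetry makes the value at $v$ depend only on the multiset $P_v$ --- are exactly the ones in the original proof. Your diagnosis that the affine part is what removes the divisibility obstruction (so that ``arbitrarily large arities'' suffices rather than ``all arities'') is correct and is precisely the point of combining BLP with the affine relaxation. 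One small remark: in the usual formulation one takes $p$ to be a \emph{relative interior} BLP solution (equivalently, of inclusion-maximal support) and then asks for an integral affine solution $z$ with $\mathrm{supp}(z)\subseteq\mathrm{supp}(p)$; you do say this, but it is worth being explicit that maximality of $\mathrm{supp}(p)$ is what guarantees the blend stays nonnegative for a uniform $\alpha_0>0$.
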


We will use \cref{theorem:blp_for_pcsps_with_symmetric_polymorphisms} for the tractability part of our dichotomy theorem in \cref{sec:positive_polynomials}. Before we proceed to stating our results, we need to introduce one last piece of theory --- the \textit{Fourier Analysis of Boolean functions}, which provides crucial tools for analysis of Boolean polymorphisms.  

\subsection{Fourier Analysis and Polynomial Threshold Functions}

\intro{Fourier analysis of Boolean functions} is a field of research which developed a plethora of powerful tools to study the nature and behavior of various classes of Boolean functions, i.e. functions of the form $f: \{0,1\}^n \to \{0,1\}$. The heart of Fourier analysis is \kl{Fourier decomposition}. Briefly speaking, it is a representation of a given function as an element of a vector space spanned by an orthonormal basis consisting of functions indexed by subsets of coordinates. We defer the formalization of these notions to \cref{sec:influence}, but the important thing is that we can deduce a lot about the nature of the function from the coefficients of this decomposition. We point to \cite{kkl, friedguttheorem, Mossel2005NoiseSO, hypergraphremovallemmas, global_hypercontractivity} for some examples of prominent results in the field of \kl{Fourier analysis}.

In this paper, we attempt to apply the theory of Fourier analysis to polymorphisms of Boolean $\PCSP$s. As we have seen, the bulk of hardness results for $\PCSP$s consists of finding an appropriate choice function that selects some distinguished coordinates in every polymorphism. The difficulty in finding such a function lies in the fact that the selected coordinates must be compatible with the operation of taking minors. This motivates the search for measures of coordinate significance, which would be transported through minor maps. A natural fit for such a measure seems to be one of the central notions of Fourier analysis: the \textit{influence}. If $p \in (0,1)$, by $\cube{p}$ we denote the (product) $p$-\textit{biased distribution} over $\{0,1\}^n$, where each bit is independently set to $1$ with probability $p$ and to $0$ with probability $1-p$. For a tuple $\tuple x \in \{0,1\}^n$ and $s \in \{0,1\}$, by $\tuple x^{i \to s}$ we denote the tuple equal to $\tuple x$ with the value of $x_i$ set to $s$.

\begin{restatable}[\intro{Influence}]{definition}{definitionofinfluence}
    Suppose that $f: \Bool^n \to \mathbb{R}$ and $i \in [n]$. The \kl{influence} of coordinate $i$ in $f$ over the $p$-biased distribution is defined as 
    \[
        \Inf[(p)]{f, i} = \EX_{\tuple x \sim \cube{p}} \bigg[\Big(f(\tuple x) - \EX_{s \sim \cube{p}} \big[  f(\tuple x^{\, i \to s}) \big] \Big)^2 \bigg].
    \]
    The \intro{total influence} of $f$ is the sum of individual influences of all coordinates, i.e. $\I[(p)]{f} = \sum_{i=1}^n \Inf[(p)]{f,i}$.     
\end{restatable}

Simply put, the coordinate influence measures the probability that changing the input value on this coordinate changes the function output. Sadly, it is not difficult to construct an example of a \kl{minor map} that does not preserve significant coordinate influence in the original function.

\begin{example}[Influence collapse]\label{example:minors_destroy_significance}
    Let $f(x_1, x_2, x_3, x_4)$ be a function defined as follows:
    $$
    f(x_1, x_2, x_3, x_4) = \begin{cases}
        0 & \text{if $x_2 = x_3 = x_4$}, \\
        x_1 & \text{otherwise}
    \end{cases}
    $$
    We can observe that $\Inf[(1/2)]{f, 1} = 3/4$, while all other coordinates have influence $1/4$. A naive way of defining a choice function $C$ based on this fact would be to set $C(f) = \{1\}$, since $1$ has a significantly larger influence than all other coordinates. However, identifying coordinates $2,3,4$ yields a minor in which $1$ has influence equal to $0$. 
\end{example}

However, we are able to show that in a certain class of Boolean functions, significant coordinate influence is preserved with constant probability over a uniformly random \kl{2-to-1} minor. This result is strongly inspired by \cite[Lemma 23]{braverman_et_al:LIPIcs.ITCS.2021.27}, although it has a slightly different flavor. An analogous result has been obtained for the measure of so-called \textit{Shapley values} in \cite{brakensiek_et_al:LIPIcs.ICALP.2021.37}, but its drawback is that it only applies to \kl{monotone} Boolean functions. Our result allows us to discard monotonicity in exchange for a different assumption: that the sum of coordinate influences is not too large (see \cref{proposition:influence:our_main_result}). We are able to apply our tools to $\PCSP$s whose set of polymorphisms consists of functions from a certain class of restricted complexity, called \textit{Polynomial Threshold Functions} ($\PTF$s), which is a broader class of functions than \kl{threshold functions}.

\begin{definition}[\intro{Polynomial Threshold Function}] 
    We say that $f : \{0,1\}^n \to \{0,1\}$ is a \textit{Polynomial Threshold Function} ($\PTF$) if there exists a multilinear\footnote{I.e. a polynomial whose all terms consist of variables with degree at most $1$.} polynomial $Q : \{0,1\}^n \to \mathbb{R}$ with real coefficients, such that 
    \[
        f(\tuple x) = \begin{cases}
            1 & \text{ if } Q(\tuple x) \geq 0, \\ 
            0 & \text{ otherwise.}
        \end{cases}
    \]
    If $Q$ has degree $k$, then $f$ is a $\PTF$ of degree $k$. For every $k \geq 0$, by $\PTF_k$ we denote the class of all $\PTF$s of degree $k$.
\end{definition}

Observe that $\PTF_k$ is a \kl{minion} for every $k \geq 0$. This follows from the fact that if $Q$ is a polynomial, then all its minors are polynomials of degree not larger than the degree of $Q$. $\PTF$s have found significant interest in learning theory \cite{ptf_learning_1, learning_ptf_2, learning_ptf_3, learning_ptf_4, learning_ptf_5} and have been studied from a structural and extremal perspective \cite{extremal_properties_of_ptfs, Harsha2009BoundingTS, bounding_ptfs_2}. In the context of $\PCSP$s, a notable result is the complete dichotomy \cite[Theorem 1.2]{injective_hardness_condition} for $\PTF$s of degree $1$, also known as \intro{Linear Threshold Functions} ($\LTF$s), obtained with the \kl{injective layered condition}.

The property of $\PTF$s that interests us the most is that they admit strong stability to perturbations of input values, which has been independently proven in \cite[Theorem 1.3]{Harsha2009BoundingTS} and \cite[Theorem 1.3]{bounding_ptfs_2}. As we shall see in \cref{sec:influence}, this implies the so-called \textit{low-degree concentration} of $\PTF$s, which makes them ideal candidates for the application of our results on influence translation. 

\subsection{Our contribution}

We believe that $\PTF$s of bounded degree are good candidates for Boolean functions that could be fully understood in the context of $\PCSP$s, and our work takes the next step towards this goal. Although we were unable to achieve a complete characterization as authors of \cite{injective_hardness_condition} managed to do for the special case of $\LTF$s, an additional assumption of polynomial threshold representations having non-negative coefficients proved to be enough for a complete characterization. We present our dichotomy result in an informal and shortened version; see \cref{sec:positive_polynomials} for the full statement.

\begin{theorem}\label{theorem:dichotomy_for_positive_polynomials}
    Suppose that $(\A, \B)$ is a $\PCSP$ \kl{template} such that $\Pol(\A, \B)$ consists of \kl{Polynomial Threshold Functions} admitting representations with non-negative coefficients of bounded degree. Then $\PCSP(\A, \B)$ (the decision version) is in P or is NP-complete.
\end{theorem}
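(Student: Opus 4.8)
The plan is to prove a dichotomy for $\PCSP(\A,\B)$ where every polymorphism is a $\PTF$ admitting a \emph{positive representation} (a multilinear polynomial with non-negative coefficients) of degree at most $k$, for some fixed bound $k$. The natural strategy is the standard algebraic-approach bifurcation: either the polymorphism minion contains symmetric polymorphisms of arbitrarily large arities --- in which case \cref{theorem:blp_for_pcsps_with_symmetric_polymorphisms} gives tractability of the decision problem --- or it does not, in which case we must extract a choice function witnessing one of the known hardness conditions, and here the natural target is the \kl{layered choice condition}, so that \cref{theorem:reductions:layered_condition_implies_hardness} yields $\NP$-hardness. So the real content is: \emph{if a positive-$\PTF$ minion has no symmetric polymorphisms of unbounded arity, then it satisfies the layered choice condition.}

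**Building the choice function.** First I would normalize: for $f \in \Pol(\A,\B)$ fix a positive representation $Q_f = \sum_{S} \hat q_S \prod_{i \in S} x_i$ with $\hat q_S \ge 0$, $\deg Q_f \le k$. Because the coefficients are non-negative, $Q_f$ is monotone non-decreasing in each variable on $\{0,1\}^n$ — this is the key structural payoff of the positivity assumption — so $f$ itself is a \kl{monotone} Boolean function, and moreover the ``influence'' of a coordinate is governed by the total coefficient mass $w_i(f) := \sum_{S \ni i} \hat q_S$ sitting on that coordinate. I would then define the choice function by collecting the coordinates of largest weight: $C(f) = \{\, i : w_i(f) \ge \delta \cdot \max_j w_j(f)\,\}$ for a suitable constant $\delta = \delta(k)$ chosen below, or, more robustly, the coordinates lying in the top $O(1)$-fraction of total mass. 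The bound $|C(f)| \le M$ for a constant $M = M(k,\delta)$ should follow because a degree-$k$ multilinear polynomial's total coefficient mass, normalized, cannot have too many coordinates each carrying a $\delta$-fraction of the maximum — a pigeonhole/counting argument using that each monomial of degree $\le k$ touches $\le k$ coordinates, so $\sum_i w_i(f) \le k \sum_S \hat q_S$, while each heavy coordinate contributes $\ge \delta \max_j w_j \ge \delta \cdot \frac{1}{n}\sum_i w_i(f)$ — the usual normalization subtlety is handled by dividing $Q_f$ by $\max_S \hat q_S$ or by the value $Q_f(\mathbf 1)$, so ``max weight'' is bounded below by an absolute constant whenever $f$ is non-constant. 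Constant polymorphisms and those of bounded arity are dealt with separately (only finitely many arities matter for the layered condition's chain of length $M$, or one shows $C$ can be defined arbitrarily there).

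**Chain compatibility — the main obstacle.** The crux is verifying the layered compatibility: given a chain $f_1 \xrightarrow{\pi_{1,2}} f_2 \to \cdots \to f_M$, some $\pi_{i,j}$ maps a heavy coordinate of $f_i$ onto a heavy coordinate of $f_j$. Here I would use that minoring transports coefficient mass \emph{upward}: if $g = f^\pi$ then $w_{\ell}(g) = \sum_{i \in \pi^{-1}(\ell)} (\text{mass of monomials of } f \text{ that collapse onto } \ell\text{-monomials})$, and in particular $\sum_{i \in \pi^{-1}(\ell)} w_i(f) \ge w_\ell(g)$ (identification can only merge mass, and degree cannot increase). So a coordinate $\ell$ that is heavy in $g$ has a pre-image $i$ with $w_i(f) \ge w_\ell(g)/k$ — but $i$ need not be heavy in $f$, which is exactly the \kl{influence collapse} phenomenon of \cref{example:minors_destroy_significance}. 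The layered condition is designed precisely to circumvent this: along a chain of length $M$, the number of ``still-relevant'' coordinates can collapse at most $M-1$ times, and since $|C(f_1)| \le M$, by a potential/counting argument (tracking the image of $C(f_1)$ down the chain and how much mass it retains) two of the sets must overlap. I would formalize this by showing that the \emph{maximum weight} $\max_j w_j(f_t)$, appropriately normalized, is monotone along the chain (minoring does not decrease the top weight, since the heaviest coordinate of $f_t$ maps somewhere and accumulates at least its own mass), so heaviness is ``sticky'' enough that a chain longer than the constant $M$ forces a heavy-to-heavy hit — mirroring the argument of \cite{layers} for monotone-type minions and the monotone-function dichotomy of \cite{brakensiek_et_al:LIPIcs.ICALP.2021.37}. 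The delicate point I expect to fight with is getting the quantitative bookkeeping to close: choosing $\delta$, $M$, and the normalization so that the ``mass can shrink by at most a factor $1/k$ at each step, over at most $M-1$ steps'' argument is consistent with $|C(f)| \le M$; I would tune these constants jointly at the end. Finally, combining the two branches gives: either tractability via BLP+Affine, or $\NP$-hardness via the layered choice condition, and membership in $\NP$ is automatic for finite templates — yielding the stated P-versus-$\NP$-complete dichotomy.
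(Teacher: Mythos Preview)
Your high-level bifurcation (tractability via \cref{theorem:blp_for_pcsps_with_symmetric_polymorphisms}, hardness via the \kl{layered choice condition}) and your identification of coordinate weight $w_i = \sum_{S \ni i}\hat q_S$ as the right invariant are both in line with the paper. But two concrete gaps prevent the argument from closing.

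\textbf{Representation non-uniqueness.} You fix one positive representation $Q_f$ per polymorphism and build $C(f)$ from the weights in $Q_f$. Your chain argument then relies on ``minoring does not decrease the top weight'': if $g=f^\pi$ then $\omega_{\pi(i)}[Q_f^\pi]\ge \omega_i[Q_f]$. This is true for the \emph{induced} representation $Q_f^\pi$, but $C(g)$ was defined using the \emph{fixed} representation $Q_g$, which bears no relation to $Q_f^\pi$. So the inclusion $\pi(C(f))\cap C(g)\neq\emptyset$ does not follow. The paper resolves this by making $C(f)$ a \emph{$(k,\varepsilon/2)$-heavy set}: a set that meets $\{i:\omega_i[Q]\ge\varepsilon/2\}$ for \emph{every} degree-$\le k$ representation $Q$ of $f$. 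Existence of such a set of size $O(k/\varepsilon^2)$ is obtained by an averaging trick over finitely many representations (\cref{lemma:positive_polynomials:existence_of_heavy_sets}). With this representation-independent choice, the chain argument goes through: starting from any representation $Q_1$ of $f_1$, the induced representations $Q_j$ of $f_j$ are genuine representations, so each $C(f_j)$ contains a coordinate of weight $\ge\varepsilon/2$ in $Q_j$; pulling these back to disjoint subsets of $[n_1]$ forces the total weight in $Q_1$ to exceed $k$.

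\textbf{The dichotomy boundary and bounding $|C(f)|$.} You split on ``symmetric polymorphisms of unbounded arity exist'', which makes tractability immediate but gives no quantitative hypothesis on the hardness side. Your bound on $|C(f)|$ then fails: with the relative threshold $\delta\cdot\max_j w_j$, a function whose (normalized) weights are all $\approx k/n$ has $|C(f)|=n$. Your claim that ``max weight is bounded below by an absolute constant'' after normalizing by $Q_f(\mathbf 1)$ is false (take $Q_f$ proportional to the elementary symmetric polynomial of degree $k$). The paper instead places the boundary at the \emph{regularity condition}: either for every $\varepsilon>0$ some representation has all weights $<\varepsilon$, in which case a random-minor plus McDiarmid argument manufactures threshold minors (\cref{proposition:positive_polynomials:tractability}), or there is a fixed $\varepsilon$ such that \emph{every} representation of every polymorphism has a coordinate of weight $\ge\varepsilon$. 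This $\varepsilon$ is precisely what makes heavy sets nonempty and of bounded size, and it is what your approach is missing.
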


Since $\PTF$s admitting representations with non-negative coefficients are necessarily monotone, \cref{theorem:dichotomy_for_positive_polynomials} can be seen as a step toward a potential strengthening of the conditional dichotomy for Boolean $\PCSP$s with monotone polymorphisms of \cite{brakensiek_et_al:LIPIcs.ICALP.2021.37}.

\vspace{3mm}

\begin{figure}[hbtp]
    \centering
    \includegraphics[scale=0.7]{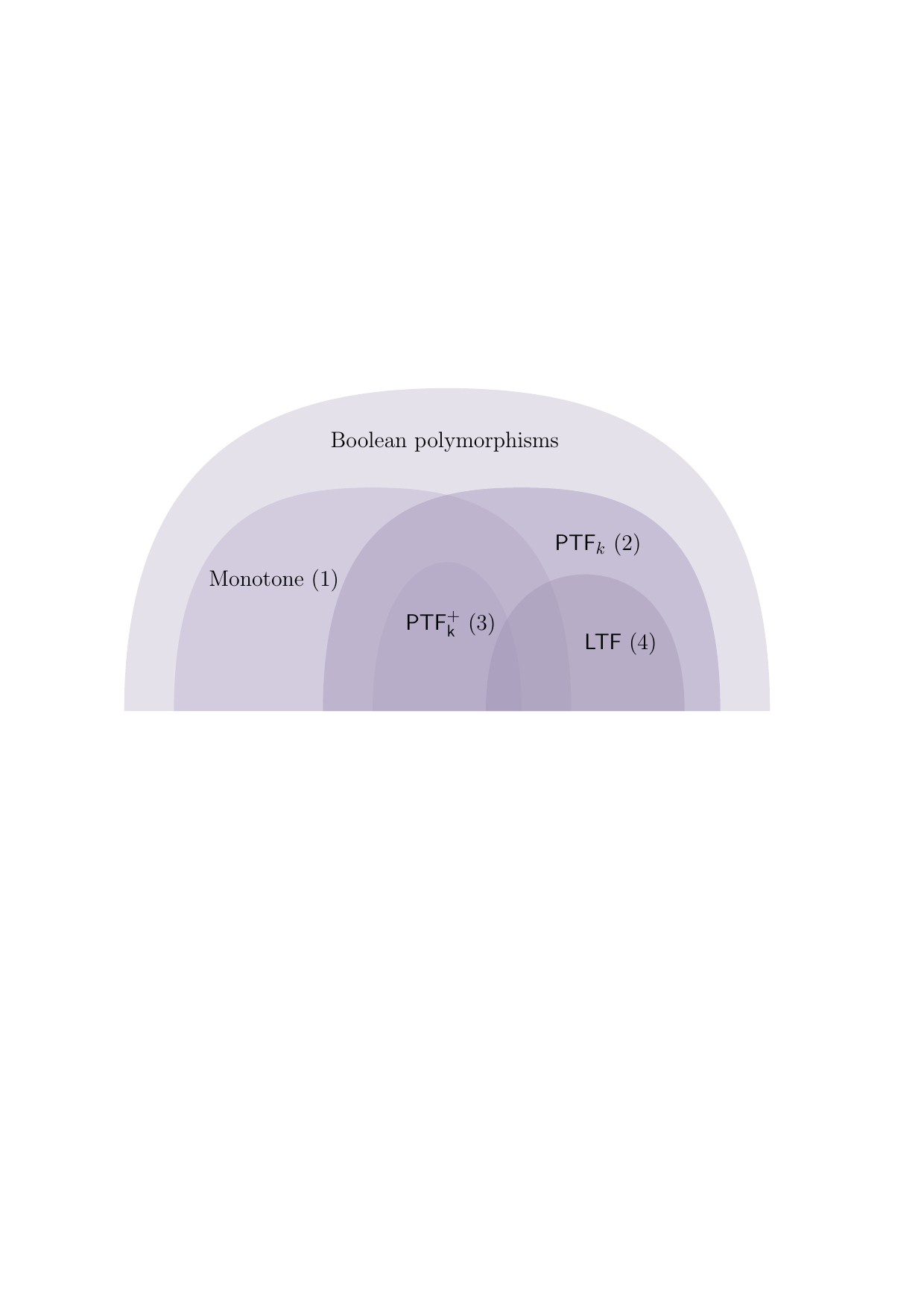}
    \caption{
        Currently there are two notable characterizations of Boolean $\PCSP$s with structurally restricted polymorphisms: (1) conditional dichotomy for monotone polymorphisms \cite{brakensiek_et_al:LIPIcs.ICALP.2021.37} and (4) dichotomy for $\LTF$s \cite{injective_hardness_condition}. We contribute to this landscape with (3) a dichotomy for $\PTF$s of bounded degree with non-negative coefficients (denoted by $\PTF_k^+$ on the figure) with \cref{theorem:dichotomy_for_positive_polynomials}, and (2) a hardness result for $\PTF$s of bounded degree with influential coordinates with \cref{theorem:hardness_for_ptfs_with_large_influences}. 
    }
\end{figure}

We also provide some insight into minions with $\PTF$s of bounded degree with possibly negative coefficients. Under the assumption of the \kl{Rich 2-to-1 Conjecture}, we obtain the following hardness result.

\begin{theorem}\label{theorem:hardness_for_ptfs_with_large_influences}
    Suppose that $(\A, \B)$ is a $\PCSP$ \kl{template} such that $\Pol(\A, \B) \subseteq \PTF_k$ for some $k \geq 0$. If
    \[
        \exists \, \delta > 0 : \forall \, f \in \minion : \exists \, i : \Inf[(1/2)]{f, i} \geq \delta,
    \]
    then $\PCSP(\A, \B)$ (the decision version) is $\NP$-hard, assuming the \kl{Rich 2-to-1 Conjecture}.
\end{theorem}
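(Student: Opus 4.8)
The plan is to apply \cref{theorem:reductions:random_condition_implies_hardness}, which reduces everything to showing that, under the stated hypothesis, $\minion:=\Pol(\A,\B)$ satisfies the \kl{random 2-to-1 condition}. The structural input that makes this possible is the \kl{low-degree concentration} of bounded-degree $\PTF$s: each $f\in\PTF_k$ is stable under noise \cite{Harsha2009BoundingTS, bounding_ptfs_2}, so its \kl{noise sensitivity} $\NS_\varepsilon(f)$ is at most a quantity $\beta_k(\varepsilon)$ depending only on $k$ and vanishing as $\varepsilon\to 0$; since noise sensitivity dominates a constant times the Fourier weight on \kl{Fourier characters} of size exceeding $1/\varepsilon$, for every $\gamma>0$ there is a degree $d=d(k,\gamma)$ with $\sum_{|S|>d}\widehat f(S)^2\leq\gamma$ for all $f\in\PTF_k$. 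I would fix a small $\gamma=\gamma(\delta)$ (pinned down at the end), the corresponding $d$, and write $\mathrm{I}^{\leq d}_i(f):=\sum_{S\ni i,\ |S|\leq d}\widehat f(S)^2$ for the degree-$d$ part of $\Inf[(1/2)]{f,i}$ --- equivalently, the influence of coordinate $i$ in the \kl{low-degree truncation} $f^{\leq d}$.

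Next I would take as \kl{choice function} $C(f):=\{\,i:\mathrm{I}^{\leq d}_i(f)\geq\theta\,\}$ for a small constant $\theta=\theta(\delta)$. It is bounded: $\sum_i\mathrm{I}^{\leq d}_i(f)=\sum_{|S|\leq d}|S|\,\widehat f(S)^2\leq d\,\norm{f}_2^2\leq d$ forces $|C(f)|\leq d/\theta=:M$, a constant depending only on $k$ and $\delta$. It is nonempty: by hypothesis some $i_0$ has $\Inf[(1/2)]{f,i_0}\geq\delta$, so removing the high-degree contribution (at most $\gamma$) gives $\mathrm{I}^{\leq d}_{i_0}(f)\geq\delta-\gamma\geq\theta$ once $\gamma,\theta$ are small enough. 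Thus only the probabilistic clause of the \kl{random 2-to-1 condition} remains, and it is vacuous for odd arities.

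So fix $f\in\minion$ of arity $2n$ with $i_0$ as above, and a uniformly random $2$-to-$1$ minor map $\pi:[2n]\to[n]$; it suffices to prove $\Pr_\pi[\pi(i_0)\in C(f^\pi)]\geq\tau$ for an absolute $\tau=\tau(k,\delta)>0$. I would apply \cref{proposition:influence:our_main_result} to the real-valued function $g:=f^{\leq d}$, which has $\Inf[(1/2)]{g,i_0}=\mathrm{I}^{\leq d}_{i_0}(f)\geq\delta-\gamma$ and bounded \kl{total influence} $\I[(1/2)]{g}\leq d$: the proposition yields that, with probability at least a positive constant depending only on $\delta$ and $d$, the coordinate $c:=\pi(i_0)$ retains $\Inf[(1/2)]{g^\pi,c}$ within an arbitrarily small additive slack of $\delta-\gamma$. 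It then remains to transfer this to $f^\pi$. Since $f^\pi=g^\pi+(f^{>d})^\pi$, and $f^\pi$ is again a degree-$k$ $\PTF$ so that $\sum_{|T|>d}\widehat{f^\pi}(T)^2\leq\gamma$, the quantity $\mathrm{I}^{\leq d}_c(f^\pi)$ differs from $\Inf[(1/2)]{g^\pi,c}$ only by a ``contamination'' term --- the degree-$\leq d$ Fourier mass through $c$ that a random $2$-to-$1$ merge drags down from the high-degree part of $f$ --- together with a Cauchy--Schwarz cross term and the re-truncation loss. A moment computation bounds the expected contamination weight by a quantity that is small once $\gamma$ is small and $n$ is large (a random pairing leaves a monomial of size $m$ with $m-O(m^2/n)$ coordinates of odd multiplicity, so bounded-size monomials above degree $d$ survive with probability $1-o(1)$, while super-constant-size monomials carry Fourier weight $o(1)$). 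Choosing $\gamma$ (and hence $\theta$) small enough and the proposition's slack tiny, all these losses are dominated and $\mathrm{I}^{\leq d}_c(f^\pi)\geq\theta$, so $c\in C(f^\pi)$, on an event of probability $\geq\tau$ for $n$ large; the finitely many remaining small arities (there being only finitely many Boolean functions of each arity, and the reduction behind \cref{theorem:reductions:random_condition_implies_hardness} needing only large instances) are dispatched by hand.

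The genuine difficulty sits in two coupled places: the core of \cref{proposition:influence:our_main_result} --- proving that a random $2$-to-$1$ merge of a bounded-total-influence function preserves an influential coordinate with constant probability, in the spirit of \cite[Lemma 23]{braverman_et_al:LIPIcs.ITCS.2021.27} but without the monotonicity used there --- and the contamination estimate in the transfer step, which is precisely why one must route through the bounded-total-influence truncation rather than $f$ itself (whose total influence is unbounded already for $\LTF$s). What remains is bookkeeping: coordinating $\gamma$, the degree $d$, and $\theta$ so that the single threshold $\theta$ both defines $C(f)$ and is recovered in $C(f^\pi)$.
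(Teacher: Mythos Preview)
Your overall strategy --- verify the \kl{random 2-to-1 condition} and invoke \cref{theorem:reductions:random_condition_implies_hardness} --- is correct, and your bound on $|C(f)|$ via low-degree concentration is exactly right. The gap is in the compatibility step: you apply \cref{proposition:influence:our_main_result} to the real-valued truncation $g=f^{\leq d}$, but that proposition is stated and proved only for Boolean functions $\{0,1\}^{2n}\to\{0,1\}$. The proof genuinely uses Boolean-ness (it works with events of the form $f(\tuple x)\neq f(\tuple x\oplus i)$, builds an auxiliary indicator $h$ from such an event, and applies \cref{lemma:influence:l2_norm_remains_large_in_pull_back} to that $\{0,1\}$-valued $h$). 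The truncation $f^{\leq d}$ is not even bounded in $L^\infty$, so there is no cheap fix. This in turn forces you into the ``contamination/transfer'' step, which you correctly flag as delicate and do not actually carry out.

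The paper avoids both problems by applying \cref{proposition:influence:our_main_result} directly to the Boolean function $f$ itself. The point you are missing is that the hypothesis $\I[(1/2)]{f}\leq\gamma n$ does not require a constant bound on total influence: low-degree concentration gives $\norm{f^{>d}}^2\leq\gamma/3$, and then \cref{corollary:influence:total_influence_formula} yields $\I[(1/2)]{f}\leq d+\tfrac{2}{3}\gamma\cdot(2n)\leq\gamma n$ for all sufficiently large $n$. With this observation, take the choice function to be $C(f)=\{i:\Inf[(1/2)]{f,i}\geq\tau\}$ for large arity (and $C(f)=[n]$ for small arity); then \cref{proposition:influence:our_main_result} applied to $f$ gives $\Pr_\pi[\Inf[(1/2)]{f^\pi,\pi(i_0)}\geq\tau]\geq\tau$ outright, so $\pi(i_0)\in C(f^\pi)$ with probability at least $\tau$ --- no truncation, no transfer, no contamination estimate.
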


\cref{theorem:hardness_for_ptfs_with_large_influences} can also be derived from \cite[Lemma 23]{braverman_et_al:LIPIcs.ITCS.2021.27} with a hardness reduction similar to ours. However, we believe that the analytical tools we develop in our proof might be of independent interest for future research; in particular, the following result on \kl{influence} translation in $p$-\kl{biased distributions}, which is our main technical lemma.

\begin{restatable}{proposition}{influencepreservation}\label{proposition:influence:our_main_result}
    Suppose that $\lambda \in (0,1/2)$ and $\delta > 0$. There are positive constants $\gamma = \gamma(\lambda, \delta)$, $\tau =~\tau(\lambda, \delta)$ and $n_0 = n_0(\lambda, \delta)$ such that the following holds. Suppose that $f : \{0,1\}^{2n} \to \{0,1\}$ is a Boolean function with $n \geq n_0$ and $p \in (\lambda, 1-\lambda)$. If $\, \I[(p)]{f} \leq \gamma \cdot n$, then 
    \[
        \forall \, i \in [2n] : \Inf[(p)]{f, i} \geq \delta \implies \Pr \Big[ \Inf[(p)]{f^\pi, \pi(i)} \geq \tau \Big] \geq \tau,
    \]
    where $\pi : [2n] \to [n]$ is a uniformly random $2$-to-$1$ minor map.
\end{restatable}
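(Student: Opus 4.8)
\emph{Setup.} Fix the $p$-biased cube throughout, and recall the identity $\Inf[(p)]{f,i}=p(1-p)\,\Pr_{\tuple x\sim\cube{p}}[f(\tuple x^{i\to1})\neq f(\tuple x^{i\to0})]$ for Boolean $f$. A uniformly random $2$-to-$1$ minor map $\pi:[2n]\to[n]$ is, up to an irrelevant relabelling of $[n]$, a uniformly random perfect matching $M$ of $[2n]$; let $e_0=\{i,i'\}$ be the pair of $M$ containing the given coordinate $i$, so that $i'$ is uniform over $[2n]\setminus\{i\}$. Since $\sum_k\Inf[(p)]{f,k}=\I[(p)]{f}\le\gamma n$, Markov's inequality gives $\Pr[\Inf[(p)]{f,i'}>\sqrt\gamma]\le\sqrt\gamma/2$, so I first condition on a ``light'' partner with $\Inf[(p)]{f,i'}\le\sqrt\gamma$ (a $\ge1-\sqrt\gamma/2$ event), leaving random only the matching $M'$ of $[2n]\setminus\{i,i'\}$. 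Unwinding the definition of influence, and using that under $\pi$ the variable $\pi(i)$ forces $x_i=x_{i'}$, one gets
\[
\Inf[(p)]{f^\pi,\pi(i)}=p(1-p)\,\EX_{\tuple w\sim\nu_{M'}}\big[\Phi(\tuple w)\big],\qquad \Phi(\tuple w):=\big(f(\tuple w;x_i{=}1,x_{i'}{=}1)-f(\tuple w;x_i{=}0,x_{i'}{=}0)\big)^2\in\{0,1\},
\]
where $\nu_{M'}$ is the ``matched'' law on $\{0,1\}^{[2n]\setminus\{i,i'\}}$ that assigns a common $\cube{p}$-bit to the two coordinates of each pair of $M'$, independently across pairs. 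As $\Phi\le1$, a reverse Markov inequality reduces the whole statement to proving $\EX_{M'}\EX_{\tuple w\sim\nu_{M'}}[\Phi]\ge c$ for a constant $c=c(\lambda,\delta)>0$.

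\emph{Lower bound in the product cube.} Using only heaviness of $i$ and lightness of $i'$, I would show $\EX_{\tuple w\sim\cube{p}}[\Phi]\ge2\delta$ once $\gamma$ is small. Writing $A_b=\Pr_{\tuple w}[f(\tuple w;1,b)\neq f(\tuple w;0,b)]$ and $B_a=\Pr_{\tuple w}[f(\tuple w;a,1)\neq f(\tuple w;a,0)]$, heaviness forces $\max(A_0,A_1)\ge\Inf[(p)]{f,i}/(p(1-p))\ge4\delta$, lightness forces $\max(B_0,B_1)\le\sqrt\gamma/(\lambda^2(1-\lambda))$, and the two-step triangle bound $\Pr_{\tuple w}[f(\tuple w;1,1)\neq f(\tuple w;0,0)]\ge\max\big(A_0-B_1,\;A_1-B_0\big)\ge\max(A_0,A_1)-\max(B_0,B_1)$ does it for $\gamma\le 4\delta^2\lambda^4(1-\lambda)^2$.

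\emph{Transfer to the matched measure.} This is the heart. From $\Phi=\sum_S\widehat\Phi(S)\phi_S$ one computes $\EX_{\tuple w\sim\nu_{M'}}[\Phi]=\sum_{T\subseteq M'}\widehat\Phi\big(\bigcup_{e\in T}e\big)$, and averaging over a uniform $M'$,
\[
\EX_{M'}\EX_{\tuple w\sim\nu_{M'}}[\Phi]=\EX_{\tuple w\sim\cube{p}}[\Phi]+\sum_{j\ge1}\rho_j\!\!\sum_{|S|=2j}\!\!\widehat\Phi(S),\qquad \rho_j=\binom{n-1}{j}\Big/\binom{2n-2}{2j}=\Theta(n^{-j}).
\]
The correction is genuinely dangerous --- a matched measure can shift the mean of a Boolean function by a constant (think of a narrow symmetric threshold) --- and the crude estimate $|\sum_{|S|=2j}\widehat\Phi(S)|\le\sqrt{\binom{2n-2}{2j}W_{2j}(\Phi)}$ only bounds it by $O(1)$. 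To beat this I would use $\I[(p)]{f}\le\gamma n$ twice more. First, $\Phi$ inherits a small total influence, $\I[(p)]{\Phi}\le C(\lambda)\,\I[(p)]{f}\le C(\lambda)\gamma n$, which bounds all level weights via $\sum_j 2j\,W_{2j}(\Phi)\le\I[(p)]{\Phi}$ and justifies first truncating $f$, hence $\Phi$, to degree $O(\gamma n/\varepsilon)$ at cost $\varepsilon$. Second, $\Phi=g^2$ is a square, with $g=f(\cdot;1,1)-f(\cdot;0,0)$ and $\|g\|_2^2=\EX[\Phi]\in[2\delta,1]$, so $\sum_{|S|=2j}\widehat\Phi(S)=\hat g^{\top}Q_{2j}\hat g$ where $(Q_{2j})_{T_1,T_2}=\mathbf{1}[|T_1\triangle T_2|=2j]$ --- a quadratic form whose relevant (Krawtchouk) eigenvalues, restricted to the support of $\hat g$ and weighed against the decay $\rho_j=\Theta(n^{-j})$, are what should confine the total correction to $\ge-\delta$. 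Combining this with the previous step yields $\EX_{M'}\EX_{\nu_{M'}}[\Phi]\ge\delta$; unwinding the reductions (a reverse Markov over $M'$, then the conditioning on a light $i'$, taking $\tau$ a suitable multiple of $\lambda(1-\lambda)\delta$ and $n_0$ large enough that the $j$-sum beyond the truncation level is negligible) gives the conclusion.

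\emph{The main obstacle.} Every step except the transfer is routine; the real content is the bound $\sum_{j\ge1}\rho_j\sum_{|S|=2j}\widehat\Phi(S)\ge-\delta$. This is exactly where the low-total-influence hypothesis is indispensable --- it rules out the narrow-threshold behaviour for which a random pairwise identification of coordinates really does collapse the mean --- and pinning down the Fourier estimate (most plausibly by reducing $f$, hence $\Phi$, to bounded degree via $\I[(p)]{f}\le\gamma n$ and then controlling the finitely many level sums $\sum_{|S|=2j}\widehat\Phi(S)$ against the weights $\rho_j$) is the technical core, playing here the role that monotonicity and Shapley values play in \cite{brakensiek_et_al:LIPIcs.ICALP.2021.37} and strengthening the influence-preservation argument of \cite[Lemma 23]{braverman_et_al:LIPIcs.ITCS.2021.27}.
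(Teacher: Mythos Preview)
Your setup and the first two ingredients --- conditioning on a light partner $i'$ via Markov, and the triangle-inequality bound giving $\EX_{\cube{p}}[\Phi]\ge 2\delta$ --- are essentially what the paper does (its Lemma~4.5 is your ``lower bound in the product cube'' in slightly different packaging). The gap is exactly where you flag it: the transfer from $\cube{p}$ to the matched measure $\nu_{M'}$. You write down the Fourier expansion $\EX_{M'}\EX_{\nu_{M'}}[\Phi]=\EX_{\cube{p}}[\Phi]+\sum_{j\ge1}\rho_j\sum_{|S|=2j}\widehat\Phi(S)$ and then try to control the correction via Krawtchouk eigenvalues and degree truncation, but this argument is never actually carried out --- you yourself note that the crude Cauchy--Schwarz bound only gives $O(1)$, and the claimed refinement (``eigenvalues restricted to the support of $\hat g$ and weighed against the decay $\rho_j$'') is a hope, not a proof. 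Degree truncation alone does not help: the dangerous contribution can sit entirely at bounded degree, and nothing in your outline explains why $\sum_{|S|=2j}\widehat\Phi(S)$ should be small or have a favourable sign.

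The paper avoids this Fourier difficulty entirely by a pointwise comparison of measures. Its key lemma (Lemma~4.3) shows, via Stirling's approximation, that the pull-back distribution $\nu_p$ satisfies $\nu_p(\tuple z)\ge C\cdot\cube{p}(\tuple z)$ for \emph{every} $\tuple z$ with an even number of ones. Given this, transferring $\EX_{\cube{p}}[\Phi]\ge\delta$ to $\EX_{\nu_p}[\Phi]\ge\beta$ (Lemma~4.6) reduces to showing that the set $\Phi^{-1}(1)$ has constant $\cube{p}$-mass on even-parity points; and \emph{this} is where low total influence is used --- pick any coordinate $\ell$ with $\Inf[(p)]{\Phi,\ell}$ small, and flip it to move mass from odd-parity to even-parity points at negligible cost. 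This is elementary and completely sidesteps the level-sum estimate you are stuck on. Your Fourier route may be salvageable, but as written it is not a proof; the missing idea is precisely a mechanism (like the pointwise domination $\nu_p\ge C\cube{p}$ on even parity) that converts ``$\Phi$ is big on the cube and has low total influence'' into ``$\Phi$ is big on the matched measure'' without having to estimate signed Fourier sums.
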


\subsection{Organization of the paper}

We start with \cref{sec:random_condition}, in which we illustrate the standard reduction approach from \kl{Label Cover} problems to $\PCSP$s and prove \cref{theorem:reductions:random_condition_implies_hardness} by providing an explicit construction for $\PCSP$s satisfying the \kl{random 2-to-1 condition}.

The following two sections are focused on our contributions. \cref{sec:positive_polynomials} introduces \textit{Positive Polynomial Threshold Functions} and establishes relevant notation. We then proceed to prove \cref{theorem:dichotomy_for_positive_polynomials}. Moving forward, in \cref{sec:influence}, we outline essential concepts of the $p$-biased \kl{Fourier analysis} of Boolean functions. Building on these, we develop the necessary technical tools and establish \cref{theorem:hardness_for_ptfs_with_large_influences}.

Finally, \cref{sec:future} explores potential extensions of our results and directions for future research on Boolean $\PCSP$s.

\section{Hardness from random 2-to-1 condition}\label{sec:random_condition}
In this section, we give an overview of how hardness conditions for $\PCSP$s are typically obtained. Our goal is to show that \kl{Rich 2-to-1 Gap Label Cover} can be reduced to any $\PCSP$ satisfying \kl{random 2-to-1 condition}, which will immediately imply \cref{theorem:reductions:random_condition_implies_hardness}. We note that this result has already been shown in \cite[Section 4.2]{brakensiek_et_al:LIPIcs.ICALP.2021.37} in the special case of Boolean $\PCSP$s with monotone polymorphisms --- we describe it in a more general setting. The reduction follows a standard framework and is split into two parts; this approach was originally established in \cite{algebraic_approach_to_pcsps}.

Before we proceed to the construction, we introduce the necessary notions. A \intro{minor condition} is a finite set of \intro{identities} of the form
\[
    f(x_{\pi(1)}, \dots, x_{\pi(n)}) \approx g(x_1, \dots, x_m),
\]
where $f$ and $g$ are functional symbols and $\pi : [n] \to [m]$. In addition, we require that the sets of symbols on left and right-hand sides are disjoint. Given a minion $\minion$ on $(A, B)$, we say that an identity is \intro{satisfied} in $\minion$ if there exists a \intro{symbol interpretation} $\zeta$ that assigns a function in $\minion$ to every symbol, such that 
\[
    \forall x_1, \dots, x_n \in A : \zeta(f)(x_{\pi(1)}, \dots, x_{\pi(n)}) = \zeta(g)(x_1, \dots, x_m).
\]
Furthermore, a minor condition is satisfied in $\minion$ if there exists an interpretation that satisfies all identities simultaneously.

A minor condition is \intro{trivial} if it is satisfied in every \kl{minion}. We note that if a minor condition $\Sigma$ is satisfied in a minion of \kl{projections} on a set of at least two elements, then it is automatically trivial. To see why, observe that if such a satisfying interpretation assigns projections $\zeta(f)(\tuple x) = x_i$ and $\zeta(g)(\tuple y) = y_j$, then it must hold that $x_i = x_{\pi(j)}$ for every choice of $\tuple x$, and since the underlying set has a size of at least two, this implies $i = \pi(j)$. Now, to find a satisfactory interpretation of $\Sigma$ in another minion on $(A, B)$, we use a small trick: take a unary function $h : A \to B$ in the minion and generate all minors of $h$ of the form $(x_1, \dots, x_n) \mapsto h(x_i)$. The functions obtained are not exactly projections, but behave similarly. In particular, the interpretation $\zeta'$ defined as $\zeta'(f)(\tuple x) = h(x_i)$ if and only if $\zeta(f)(\tuple x) = x_i$, satisfies $\Sigma$. 

We now define the intermediate problem in our reduction, called \textit{Promise Minor Condition}.

\begin{definition}[\intro{Promise Minor Condition}]
    Given $n \geq 1$ and a minion $\minion$, the \textit{Promise Minor Condition} problem $\PMC_n(\minion)$ is the problem defined as follows: for an input \kl{minor condition} $\Sigma$ with \kl{identities} of arity at most $n$, answer $\mathsf{YES}$ if $\Sigma$ is \kl{trivial}, and answer $\mathsf{NO}$ if $\Sigma$ is not \kl{satisfied} in $\minion$.
\end{definition}

We emphasize that the parameter $n$ is treated as a constant and the size of an instance is the number of identities. As it turns out, if $(\A, \B)$ is a $\PCSP$ \kl{template} and $\minion = \Pol(\A, \B)$, then $\PMC_n(\minion)$ is log-space reducible to $\PCSP(\A, \B)$. This fact follows from a construction that resembles \textit{long code tests}, and is one of the most fundamental results in algebraic approach to $\PCSP$s. We refer to \cite[Section 3.3]{algebraic_approach_to_pcsps} for more details.

The remaining part is to show how \kl{Rich 2-to-1 Gap Label Cover} can be reduced to \kl{Promise Minor Condition}. Suppose that $\Psi = (L \cup R, E, 2n, n, \Pi)$ is an instance of \kl{Rich 2-to-1 Label Cover}. We construct a minor condition from $\Psi$ in the following way. We identify every vertex $u \in L$ with a symbol $f_u$ of arity $2n$ and $v \in R$ with a symbol $g_v$ of arity $n$. Moreover, for every edge $e = (u, v)$ and its corresponding constraint $\pi_e \in \Pi$, we write the following identity:
\[
    f_u\big(x_{\pi_e(1)}, \dots, x_{\pi_e(2n)}\big) \approx g_v (x_1, \dots, x_n). \tag{{\color{magenta}$*$}} \label{eq:minor_condition_construction_from_label_cover}
\]
Denote the resulting minor condition by $\Sigma$. Observe that $\Sigma$ is essentially an equivalent presentation of $\Psi$: the \kl{labelings} of $\Psi$ are in 1-to-1 correspondence with the interpretations of $\Sigma$ in \kl{projections}. This automatically yields the \intro{completeness} of our reduction, i.e. preservation of $\mathsf{YES}$-instances --- if all \kl{constraints} of $\Psi$ can be simultaneously satisfied, then $\Sigma$ is \kl{trivial}.

It is the other side that is challenging: the preservation of $\mathsf{NO}$-instances, which is called \intro{soundness}. For this part to work, we need additional assumptions about $\minion$. In our case, this assumption is the \kl{random 2-to-1 condition}. We show how to utilize this condition in the proof of the following proposition, which, together with the fact that $\PMC_n(\Pol(\A, \B))$ reduces to $\PCSP(\A, \B)$, finishes the reduction and implies \cref{theorem:reductions:random_condition_implies_hardness}.

\begin{proposition}
    Let $\minion$ be a minion. If $\minion$ satisfies the \kl{random 2-to-1 condition}, then there is a positive constant $\varepsilon = \varepsilon(\minion)$ such that for every $n \geq 1$, $\GapRich_n[1, \varepsilon]$ poly-time reduces to $\PMC_{2n}(\minion)$.
\end{proposition}

\begin{proof}
    Fix a minion $\minion$, which satisfies the \kl{random 2-to-1 condition} with constants $M, \tau > 0$ and a choice function $C$. Let $\varepsilon = \tau/(2M^2)$ and fix any $n \geq 1$. We want to show that $\GapRich_n[1, \varepsilon]$ reduces to $\PMC_{2n}(\minion)$. We start with a \kl{Rich 2-to-1 Label Cover} instance $\Psi = (L \cup R, E, 2n, n, \Pi)$. Next, we construct a minor condition $\Sigma$ as in \eqref{eq:minor_condition_construction_from_label_cover}. It is obvious that $\Sigma$ can be constructed in polynomial time. Therefore, it remains to show that this is a proper reduction.

    \vspace{3mm}
    \noindent
    \textbf{Completeness.} Suppose that $\Psi$ is a $\mathsf{YES}$-instance of $\GapRich_n[1,\varepsilon]$ and all constraints in $\Pi$ are satisfied with a \kl{labeling} $\sigma : L \cup R \to [2n]$. As discussed above, $\sigma$ induces an \kl{interpretation} that assigns the \kl{projection} $\tuple x \mapsto x_{\sigma (w)}$ to the symbol corresponding to vertex $w$. This interpretation satisfies all identities, and thus $\Sigma$ is \kl{trivial}.

    \vspace{3mm}
    \noindent
    \textbf{Soundness.} By contraposition, suppose that $\Sigma$ is satisfied in $\minion$, which is witnessed by an interpretation $\zeta$. Our goal is to show there is a labeling $\sigma : L \cup R \to [2n]$, which satisfies more than $\varepsilon$-fraction of constraints in $\Pi$. Consider a random labeling $\sigma$, such that for every $u \in L$, the value $\sigma(u)$ is chosen uniformly at random from $C(\zeta(f_u))$. Similarly, $\sigma(v)$ is chosen uniformly at random from $C(\zeta(g_v))$ for every $v \in R$. 
    
    Observe that for every identity as in \eqref{eq:minor_condition_construction_from_label_cover}, the function $\zeta(g_v)$ is a 2-to-1 minor of $\zeta(f_u)$ with respect to the map $\pi_e$. Fix any $u \in L$ and let $\Sigma(u)$ be the set of identities in $\Sigma$ involving the symbol $f_u$. \kl{Richness} of $\Psi$ gives that if we choose, uniformly at random, an identity in $\Sigma(u)$, take the symbol $g_v$ on the right-hand side and return $\zeta(g_v)$, we obtain a uniform distribution over all 2-to-1 minors of $\zeta(f_u)$. From the \kl{random 2-to-1 condition}, we have that at least a $\tau$-fraction of identities in $\Sigma(u)$ satisfies $\pi_e(C(\zeta(f_u))) \cap C(\zeta(g_v)) \neq \emptyset$. For every such identity, since both sets $C(\zeta(f_u))$ and $C(\zeta(g_v))$ have sizes at most $M$, we have $\pi_e(\sigma(u)) = \sigma(v)$ with probability at least $1/M^2$ over the choice of $\sigma$. Hence, the expected fraction of satisfied constraints in $\Pi$ adjacent to $u$ is at least $\tau/M^2$. By summing over $u \in L$, we obtain that the expected number of constraints satisfied by $\sigma$ is at least $\tau/M^2 > \varepsilon$.
\end{proof}

\section{Dichotomy for Positive Polynomial Thresholds of bounded degree}\label{sec:positive_polynomials}
Our main goal in this section is to prove \cref{theorem:dichotomy_for_positive_polynomials}, but first we introduce \textit{Positive Polynomial Threshold Functions} and relevant notation. If $\tuple x \in \Bool^n$ and $i \in [n]$, then by $\tuple x \oplus i$ we denote the tuple $\tuple x$ with entry on $i$-th position flipped. Suppose that $Q : \{0,1\}^n \to \mathbb{R}$ is a multilinear polynomial. For every $S \subseteq [n]$, we let $\widehat{Q}(S)$ be the coefficient of $Q$ corresponding to term $\prod_{i \in S} x_i$. We say that $Q$ is \intro{unbiased} if the constant term of $Q$ is equal to $0$ and that it is \intro{positive} if all non-constant coefficients of $Q$ are non-negative.

\begin{definition}[\intro{Positive Polynomial Threshold Function}]
    A function $f : \{0,1\}^n \to \{0,1\}$ is a \textit{Positive Polynomial Threshold Function} ($\PTF^+)$ if there exists an \kl{unbiased}, and \kl{positive} polynomial $Q : \{0,1\} \to \mathbb{R}$ and $t \geq 0$ such that $f = \repr{Q}{t}$, where 
    \[
        \repr{Q}{t}(\tuple x) = \begin{cases}
            0 & \text{ if } Q(\tuple x) < t, \\ 
            1 & \text{ if } Q(\tuple x) \geq t.
        \end{cases}
    \]
    If $Q$ has degree $k$, we say that $\repr{Q}{t}$ is a \intro{representation} of $f$ of degree $k$ and that $f$ is a $\PTF^+$ of degree $k$. For every $k \geq 0$, by $\PTF_k^+$ we denote the class of \kl{Positive Polynomial Thresholds} of degree $k$.
\end{definition}

For the remainder of this section, we will assume that all representations are \intro{normalized}, i.e. that the sum of coefficients of the underlying polynomial is $1$. Observe that every $\PTF^+$ must be \kl{monotone}. Furthermore, it is clear that every $\PTF^+$ is a $\PTF$ of the same degree. In other words, $\PTF_k^+ \subseteq \PTF_k$ for every $k \geq 0$.

We now introduce the notion of coordinate significance that we utilize in our dichotomy result --- the \textit{weight} which is simply the sum of coefficients corresponding to terms including the coordinate.

\begin{definition}[\intro{Weight of coordinate}]
    Suppose that $Q : \{0,1\}^n \to \mathbb{R}$ is a multilinear polynomial. We define the \textit{weight} of coordinate $i$ in $Q$ as 
    \[
        \Deg{Q}{i} = \sum_{S \ni i} \widehat{Q}(S). 
    \]
\end{definition}

We are in a position to reveal the boundary between tractable and $\NP$-hard problems in \cref{theorem:dichotomy_for_positive_polynomials}. The condition that captures the dichotomy boundary is the following \textit{regularity condition}.

\begin{condition}[\intro{Regularity condition}]
    Suppose that $k \geq 1$ and $\minion \subseteq \PTF_k^+$ is a minion. We say that $\minion$ satisfies the $k$-regularity condition if
    \[
        \forall \, \varepsilon > 0 : \exists \, t \geq 0, Q: \{0,1\}^n \to \mathbb{R} \text{ of degree $k$} : \Big( \repr{Q}{t} \in \minion \, \text{ and } \, \max_{i \in [n]} \, \Deg{Q}{i} < \varepsilon \Big).
    \]
\end{condition}

\setcounter{section}{1}
\setcounter{theorem}{19}
\begin{theorem}[Extended version]\label{theorem:dichotomy_for_positive_ptfs_extended}
    Suppose that $k \geq 0$ and $(\mathbb{A}, \mathbb{B})$ is a Boolean $\PCSP$ template such that $\Pol(\mathbb{A}, \mathbb{B}) \subseteq \PTF_k^+$. Then
    \begin{enumerate}[label=(\arabic*)] 
        \item either $\Pol(\mathbb{A}, \mathbb{B})$ satisfies the $k$-\kl{regularity condition} and $\PCSP(\mathbb{A}, \mathbb{B})$ is tractable, or
        \item $\Pol(\mathbb{A}, \mathbb{B})$ does not satisy the $k$-\kl{regularity condition} and $\PCSP(\mathbb{A}, \mathbb{B})$ is $\NP$-complete,
    \end{enumerate}
    where in both cases $\PCSP(\A, \B)$ is the decision version of the problem.
\end{theorem}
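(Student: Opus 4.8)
The plan is to establish the dichotomy by matching a tractability algorithm (item (1)) against a hardness reduction (item (2)), using the $k$-regularity condition as the pivot. For the tractable direction, I would argue that the $k$-regularity condition produces symmetric polymorphisms of arbitrarily large arities, so that \cref{theorem:blp_for_pcsps_with_symmetric_polymorphisms} applies. Concretely, suppose $\minion = \Pol(\A,\B)$ satisfies $k$-regularity. Given $\varepsilon > 0$, take a representation $\repr{Q}{t} \in \minion$ of degree $k$ with $\max_i \Deg{Q}{i} < \varepsilon$. The idea is that a positive polynomial $Q$ of degree $k$ with all coordinate weights small is ``close to symmetric'': by averaging $Q$ over all permutations of its $n$ variables one obtains a genuinely symmetric polynomial $\overline{Q}$, and I would show that for suitably chosen $\varepsilon$ (depending on $n$, $k$), the threshold function $\repr{\overline{Q}}{t}$ agrees with a symmetric function that also lies in $\minion$ — or, more carefully, that $\minion$ directly contains symmetric functions of arbitrarily large arity because one can take minors of $\repr{Q}{t}$ that identify blocks of variables; since all weights are $<\varepsilon$, no single identified block of bounded relative size can push $Q$ across the threshold $t$ in an asymmetric way, forcing the minor to depend only on the Hamming weight of its input. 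This is the step I expect to require the most care: turning ``all coordinate weights are small'' into ``there is an honest symmetric polymorphism'' rigorously, since one must control how identifying variables redistributes weight and interacts with the fixed threshold $t$ and the degree-$k$ cross terms.

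For the hard direction, I would show that failure of the $k$-regularity condition yields a bounded-arity hardness condition — the natural candidate being the \kl{layered choice condition}, after which \cref{theorem:reductions:layered_condition_implies_hardness} gives $\NP$-hardness (and membership in $\NP$ is automatic, giving $\NP$-completeness). If $\minion$ does not satisfy $k$-regularity, then there is a fixed $\varepsilon_0 > 0$ such that every representation $\repr{Q}{t} \in \minion$ of degree $k$ has some coordinate $i$ with $\Deg{Q}{i} \ge \varepsilon_0$; and since $\minion \subseteq \PTF_k^+$, every $f \in \minion$ has \emph{some} degree-$k$ representation, so every $f$ has a ``heavy'' coordinate. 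I would define the choice function $C(f)$ to be the set of coordinates of weight $\ge \varepsilon_0$ (in a canonically chosen normalized representation of $f$), and bound $|C(f)| \le 1/\varepsilon_0 =: M$ using the fact that coefficients are non-negative and sum to $1$, so the weights over a fixed representation sum to at most $k$ (each coefficient $\widehat{Q}(S)$ is counted $|S| \le k$ times). The crux is then the compatibility-along-chains clause: I would show that if $f_1 \xrightarrow{\pi_{1,2}} \cdots \xrightarrow{\pi_{M-1,M}} f_M$ is a chain in which no heavy coordinate is ever preserved, we can compose a representation of $f_1$ through the minor maps and track where the weight of each heavy coordinate of $f_1$ goes; weight can only be transferred, never created (positivity again), so if the heavy coordinates of $f_1$ all map into light coordinates of $f_2$, light into light, etc., after $M$ steps we have accumulated $M$ coordinates each absorbing at most $\varepsilon_0/M \cdot (\text{something})$ — leading to a contradiction with the total-weight bound once $M$ is chosen large enough relative to $\varepsilon_0$ and $k$.

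The main obstacle, I expect, is the same on both sides and is essentially the bookkeeping for how coordinate weight behaves under minors: when a minor map $\pi$ identifies a set $\pi^{-1}(j)$ of coordinates of $f$ into a single coordinate $j$ of $f^\pi$, the weight $\Deg{(Q\circ\pi)}{j}$ is \emph{not} simply $\sum_{i \in \pi^{-1}(j)} \Deg{Q}{i}$ — cross-terms among the identified variables collapse ($x_i x_{i'} \mapsto x_j$ since we are on $\Bool$), which can only increase the effective weight, while terms involving two distinct identified classes behave additively. Getting the inequality in the right direction (weight is ``superadditive'' under identification, so heavy coordinates cannot hide) is what makes the hardness argument go through, and getting a matching upper bound (so the algorithm side is not spoiled) is what makes the tractability argument subtle. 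I would handle this by working with a fixed normalized positive representation, passing to the canonical form where repeated-variable monomials are reduced, and proving a clean lemma: for any minor map $\pi : [n] \to [m]$ and any $j \in [m]$, $\Deg{(Q^\pi)}{j} \ge \sum_{i \in \pi^{-1}(j)} \Deg{Q}{i}$ when $Q$ is positive, with a controlled overcount governed by the degree $k$. With that lemma in hand, both directions reduce to elementary pigeonhole-style arguments, and the remaining verifications ($\NP$ membership, polynomial-time constructibility of the reductions, and checking the technical hypotheses of \cref{theorem:reductions:layered_condition_implies_hardness} and \cref{theorem:blp_for_pcsps_with_symmetric_polymorphisms}) are routine.
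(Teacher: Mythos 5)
Your high-level plan (pivot on $k$-regularity; reduce tractability to \cref{theorem:blp_for_pcsps_with_symmetric_polymorphisms} and hardness to \cref{theorem:reductions:layered_condition_implies_hardness}) matches the paper. But both directions, as you sketch them, have gaps that the paper's actual argument is specifically designed to avoid.

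On the hardness side, the central obstacle is exactly the one you half-notice and then sidestep: \emph{representations are not unique}. You propose to define $C(f)$ as the heavy coordinates of ``a canonically chosen normalized representation.'' There is no canonical choice, and the issue is not cosmetic: a coordinate with weight $\geq \varepsilon_0$ in one normalized degree-$k$ representation of $f$ may have negligible weight in another. Your chain argument needs the chosen coordinates of $f_j$ to be heavy in the representation of $f_j$ \emph{induced by pushing a representation of $f_1$ through $\pi_{1,j}$}, but nothing forces your canonical representation of $f_j$ to be that induced one. The paper resolves this by defining $C(f)$ to be a $(k,\varepsilon/2)$-\kl{heavy set}: a set of coordinates that meets the heavy coordinates of \emph{every} normalized degree-$k$ representation. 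Lemma~\ref{lemma:positive_polynomials:existence_of_heavy_sets} shows such sets of bounded size exist, by an averaging argument over finitely many representations (if no small heavy set existed, averaging would produce a representation with all weights $<\varepsilon$, contradicting failure of $k$-regularity). With heavy sets, the chain argument goes through cleanly because ``heavy in the induced representation'' is guaranteed by definition. Also, your proposed inequality $\Deg{Q^\pi}{j} \geq \sum_{i\in\pi^{-1}(j)} \Deg{Q}{i}$ has the wrong sign: a coefficient $\widehat Q(T)$ with $|T\cap\pi^{-1}(j)|=\ell$ is counted $\ell$ times on the right but once on the left, so for positive $Q$ the correct statement is $\Deg{Q^\pi}{j} \leq \sum_{i\in\pi^{-1}(j)} \Deg{Q}{i}$, while what the argument actually needs and uses is the per-coordinate monotonicity $\Deg{Q^\pi}{\pi(i)} \geq \Deg{Q}{i}$ (which does hold by positivity).

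On the tractability side, your proposal to ``force the minor to depend only on the Hamming weight'' by identifying blocks, or to symmetrize $Q$ by averaging over permutations, does not obviously produce an element of the minion: the minion is not closed under averaging representations of \emph{different} functions, and a random block-identification minor of a $\PTF^+$ is generally not exactly symmetric. The paper does something more indirect that sidesteps this: it takes a uniformly random minor to $[m]$, applies \kl{McDiarmid's inequality} to show that with positive probability all same-size coefficients of the induced representation are within $\tau$ of each other, and then uses two lemmas — \cref{lemma:positive_polynomials:similar_edges_give_small_sv} (near-equal coefficients imply small \kl{Shapley values}) and \cref{lemma:positive_polynomials_small_sv_give_thresholds} from \cite{brakensiek_et_al:LIPIcs.ICALP.2021.37} (small Shapley values imply a \kl{threshold function} minor) — to conclude the minion contains threshold functions of arbitrarily large arity. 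Threshold functions are symmetric, so \cref{theorem:blp_for_pcsps_with_symmetric_polymorphisms} applies. You would need some replacement for this Shapley-value route; ``no single block can push $Q$ across the threshold asymmetrically'' is not, as stated, a proof that the minor is symmetric, and I do not see how to complete that step without essentially reinventing the paper's machinery.
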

\setcounter{section}{3}
\setcounter{theorem}{3}

We divide the proof of \cref{theorem:dichotomy_for_positive_ptfs_extended} into two parts. For tractability, we show that if a minion $\minion$ satisfies the \kl{regularity condition}, then it contains \kl{threshold functions} of arbitrarily large arities --- we achieve this with the help of results of \cite{brakensiek_et_al:LIPIcs.ICALP.2021.37}, which establish a connection between threshold functions and so-called \textit{Shapley values}. This is enough to imply tractability due to \cref{theorem:blp_for_pcsps_with_symmetric_polymorphisms}. 

On the other hand, we can show that if $\minion$ does not satisfy the \kl{regularity condition}, then it must satisfy the \kl{layered choice condition}, which implies hardness by \cref{theorem:reductions:layered_condition_implies_hardness}. These two results are encapsulated in \cref{proposition:positive_polynomials:tractability} and \cref{proposition:positive_polynomials:hardness}, which we prove in the following two subsections.

\subsection{Tractability from regularity condition}

The main focus of this subsection is to prove the following result, which implies the tractability part of \cref{theorem:dichotomy_for_positive_polynomials} by \cref{theorem:blp_for_pcsps_with_symmetric_polymorphisms}.

\begin{proposition}\label{proposition:positive_polynomials:tractability}
    Suppose $k \geq 1$ is an integer and $\minion \subseteq \PTF_k^+$ is a minion satisfying the $k$-\kl{regularity condition}. Then $\minion$ contains threshold functions of arbitrarily large degree.
\end{proposition}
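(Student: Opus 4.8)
The plan is to feed the low-\kl{weight} polymorphisms guaranteed by the $k$-\kl{regularity condition} into the machinery of \cite{brakensiek_et_al:LIPIcs.ICALP.2021.37} linking \kl{Shapley values} with \kl{threshold functions}. Concretely, for every $\varepsilon>0$ the $k$-\kl{regularity condition} supplies a polymorphism $f_\varepsilon=\repr{Q_\varepsilon}{t_\varepsilon}\in\minion$ whose underlying \kl{unbiased}, \kl{positive}, \kl{normalized} polynomial $Q_\varepsilon$ has degree at most $k$ and satisfies $\max_{i}\Deg{Q_\varepsilon}{i}<\varepsilon$; each $f_\varepsilon$ is \kl{monotone}. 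The two things I would prove are: (i) as $\varepsilon\to 0$, the \kl{Shapley value} $\phi_i(f_\varepsilon)$ of every coordinate $i$ tends to $0$; and (ii) uniformly small \kl{Shapley values} inside a minion of bounded-degree positive $\PTF$s force \kl{threshold functions} of arbitrarily large arity to appear in the minion. Combined with closure of $\minion$ under \kl{minors} this gives the proposition, and then \cref{theorem:blp_for_pcsps_with_symmetric_polymorphisms} yields tractability. Passing through $\PTF_k^+$ (rather than arbitrary \kl{monotone} minions) is essential: degeneracies of the \cref{example:minors_destroy_significance} type are exactly what positivity, \kl{normalization} and small \kl{weights} rule out.

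For (i) --- the analytic core --- set $P_i=Q_\varepsilon|_{x_i=0}$, again an \kl{unbiased}, \kl{positive} polynomial of degree at most $k$ with coefficient sum $1-\Deg{Q_\varepsilon}{i}$. Since flipping coordinate $i$ changes $Q_\varepsilon$ by at most $\Deg{Q_\varepsilon}{i}$, the coordinate $i$ is pivotal for $f_\varepsilon$ at $\tuple{x}_{-i}$ only when $P_i(\tuple{x}_{-i})$ lies in the window $[\,t_\varepsilon-\Deg{Q_\varepsilon}{i},\,t_\varepsilon)$ of width $<\varepsilon$. Writing the \kl{Shapley value} as $\phi_i(f_\varepsilon)=\int_0^1\Pr_{\tuple{x}_{-i}\sim\cube{p}}[\,i\text{ pivotal}\,]\,dp$, I would bound it from two ingredients: the variance estimate $\mathrm{Var}_{\cube{p}}[P_i]\le k\cdot\max_j\Deg{Q_\varepsilon}{j}<k\varepsilon$ --- which holds because positivity and degree $\le k$ imply that only overlapping monomials contribute, with total contribution at most $\sum_j\Deg{Q_\varepsilon}{j}^2\le k\max_j\Deg{Q_\varepsilon}{j}$ --- and the fact that $p\mapsto\EX_{\cube{p}}[P_i]$ is nondecreasing and convex, which (using that it is dominated by $p$) confines the set of $p$ where its value approaches the window to a short interval. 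Chebyshev on the ``far'' $p$'s together with the ``near'' set of small measure, and an optimization of the Chebyshev radius, yields $\phi_i(f_\varepsilon)\to 0$ uniformly in $i$ when $t_\varepsilon$ stays bounded away from $0$ and $1$; the remaining regimes ($t_\varepsilon$ near $0$ or near $1$, handled by complementation) are easier, since there the pivotality window forces $P_i$ to be near one of its extremes, an event whose probability, integrated over $p$, is driven to $0$ by the same variance bound.

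For (ii) I would invoke the Shapley-value-to-threshold argument of \cite{brakensiek_et_al:LIPIcs.ICALP.2021.37}: when all \kl{Shapley values} of a \kl{monotone} function are at most $\delta$, a balanced identification of the coordinates into on the order of $1/\delta$ classes produces a \kl{minor} that, in the bounded-degree positive setting, is close to symmetric --- and a genuinely symmetric \kl{positive} polynomial $Q'$ automatically makes $\repr{Q'}{t}$ a \kl{threshold function}, because $Q'$ is then a nondecreasing function of the Hamming weight of its argument. Thus for every target arity $m$, taking $\varepsilon$ small enough in step (i) and the corresponding identification in step (ii) puts a \kl{threshold function} of arity at least $m$ into $\minion$.

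The main obstacle is step (i): it is an anti-concentration statement for low-degree positive polynomials under biased product measures, and the delicate part is making it uniform in the threshold $t_\varepsilon$, which the \kl{regularity condition} does not control. A second subtlety lies in step (ii) --- approximate symmetry of the identified polynomial is not by itself enough to conclude an \emph{exact} \kl{threshold function}, as the threshold $t_\varepsilon$ could a priori fall in a bad window; this is precisely the point where I would lean on the technology of \cite{brakensiek_et_al:LIPIcs.ICALP.2021.37}, adapting it to $\PTF_k^+$ if a black-box citation does not suffice.
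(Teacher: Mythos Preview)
Your plan is sound and takes a genuinely different route from the paper. Both arguments reduce to exhibiting functions in $\minion$ with arbitrarily small \kl{Shapley values} and then cite \cref{lemma:positive_polynomials_small_sv_give_thresholds} (Lemma~3.4 of \cite{brakensiek_et_al:LIPIcs.ICALP.2021.37}), which already applies to arbitrary \kl{monotone} functions --- so your step~(ii) is just that black-box citation, and the ``approximate symmetry'' and ``bad window'' worries you raise are absorbed there. The real difference is step~(i). You bound the Shapley values of $f_\varepsilon$ \emph{directly}: the variance estimate $\mathrm{Var}_{\cube{p}}[P_i]\le\sum_j\Deg{Q_\varepsilon}{j}^2\le k\varepsilon$ together with a near/far Chebyshev split of $\int_0^1\Pr_p\big[P_i\in[t-\varepsilon,t)\big]\,dp$ and a case analysis on $t$. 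The paper \emph{never} bounds the Shapley values of $f_\varepsilon$ itself; instead it first passes to a random $m$-ary minor via a uniform $\pi:[n]\to[m]$, uses \kl{McDiarmid's inequality} (with difference bounds $\Deg{Q}{i}<\varepsilon$) to show that with positive probability the induced representation $R$ satisfies $|\widehat R(S)-\widehat R(T)|\le\tau$ whenever $|S|=|T|$, and then a purely combinatorial permutation-counting lemma (\cref{lemma:positive_polynomials:similar_edges_give_small_sv}) converts this near-symmetry into small Shapley values, uniformly in the threshold. Your route is more analytic and skips the random-minor detour; the paper's is more combinatorial and sidesteps precisely the anti-concentration issue you flag as the main obstacle. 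Two small caveats on your sketch: complementation $x_j\mapsto 1-x_j$ does \emph{not} preserve positivity of $Q$, so the $t\approx 1$ case needs its own (still easy) direct argument rather than symmetry; and in the $t\approx 0$ case the variance bound alone does not drive $\int_0^1\Pr_p[P_i<\eta]\,dp$ to zero --- the dominant term is $\sup\{p:g(p)\le\eta\}\lesssim\eta^{1/k}$, so one must fix the cutoff $\eta$ in terms of the target Shapley bound $\delta$ \emph{before} sending $\varepsilon\to 0$. With that two-stage parameter choice your three-case argument closes.
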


Intuitively, the \kl{regularity condition} asserts that there are arbitrarily ``regular'' functions in $\minion$, i.e. such that no coordinate is visibly more significant than others. This intuition of regularity in general functions can be captured by various notions of coordinate significance, which are not always equivalent, or even comparable. In this section, we focus on one of such notions, called \textit{Shapley values}. We provide the following combinatorial definition of Shapley values from \cite[Definition 2.7]{brakensiek_et_al:LIPIcs.ICALP.2021.37}, but note that there are also equivalent analytical definitions \cite[Exercise 8.31]{O’Donnell_2014}. 

\begin{definition}[\intro{Shapley values}]
  Suppose $f : \{0,1\}^n \to \{0,1\}$ is a \kl{monotone} Boolean function and $i \in [n]$. Consider the following probabilistic experiment. Draw uniformly a permutation $\sigma$ of elements of $[n]$. Let $\tuple x_0 = (0, \dots, 0)$ be of length $n$ and $\tuple x_j = \tuple x_{j-1} \oplus \sigma(j)$ for every $j \in [n]$. The \kl{Shapley value} of coordinate $i$ in $f$ is defined as
    \[
      \Phi_i[f] = \Pr\Big[f\big(\tuple x_{\sigma^{-1}(i)-1}\big) = 0 \text{ and } f\big(\tuple x_{\sigma^{-1}(i)}\big) = 1\Big].
    \]
\end{definition}

In other words, the \kl{Shapley value} of coordinate $i$ is the probability that when flipping bits from $0$ to $1$ in order according to a uniformly drawn permutation of $[n]$, flipping the value on $i$-th coordinate is a moment when value of $f$ changes from $0$ to $1$. Observe that if $f$ is not constant, then there is exactly one such moment for every permutation --- and therefore the sum of \kl{Shapley values} of coordinates in $f$ must be $1$. 

The importance of the notion of Shapley values lies in its connection with the so-called \textit{sharp threshold phenomenon}. Suppose that $f: \{0,1\}^n \to \{0,1\}$ is a Boolean function and $p \in [0,1]$. We consider the function $\zeta : p \mapsto \EX_p[f]$, where $\EX_p[f]$ is the expected value of $f$ in the product $p$-biased distribution. It is known that if $f$ is non-constant and monotone, then $\zeta$ is a continuous, strictly increasing function. We say that $f$ admits a \textit{sharp threshold} if $\zeta(p)$ grows from values close to $0$ to values close to $1$ in a short interval around the \textit{critical probability}, i.e. the value $q \in (0,1)$ for which $\EX_{q}[f] = 1/2$. For every $\varepsilon > 0$, we say that $[s, t]$ is the $\varepsilon$-threshold interval if $\EX_s[f] = \varepsilon$ and $\EX_t[f] = 1-\varepsilon$. As long as the critical probability is bounded away from $0$ and $1$ by a constant, it is well understood what classes of monotone functions admit arbitrarily small threshold intervals for all $\varepsilon > 0$. It turns out that they are exactly the classes of functions with arbitrarily small \kl{Shapley values} \cite[Theorems 3.3 and 3.5]{kalaiSocialIndeterminancy}. 

The phenomenon of sharp thresholds has also found its usage in the context of $\PCSP$s. The simplest class of Boolean monotone functions exhibiting arbitrarily sharp thresholds are the \textit{majority functions}, or more generally, \kl{threshold functions} (this fact follows from classic concentration inequalities for binomial distribution). Therefore, it seems plausible to conjecture that as long as a Boolean minion $\minion$ contains functions of arbitrarily small threshold intervals (or equivalently, arbitrarily small \kl{Shapley values}), then $\minion$ must contain threshold functions of arbitrary large arities. This turns out to be true, as has been proven with a probabilistic argument in \cite{brakensiek_et_al:LIPIcs.ICALP.2021.37}.

\begin{lemma}[Lemma 3.4 in \cite{brakensiek_et_al:LIPIcs.ICALP.2021.37}]\label{lemma:positive_polynomials_small_sv_give_thresholds}
    For every $m \geq 2$, there exists a constant $\delta = \delta(m) > 0$ such that the following holds. Suppose $f : \{0,1\}^n \to \{0,1\}$ is a \kl{monotone} Boolean function such that
    \[
        \max_{i \in [n]} \, \Phi_i[f] \leq \delta.
    \]
    Then $f$ has a \kl{threshold function} of arity at least $m$ as a \kl{minor}.
\end{lemma}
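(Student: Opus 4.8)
The plan is to follow the probabilistic argument underlying this statement: produce the threshold minor by a \emph{random block partition}. First note that constant functions are threshold functions and all their minors are constant, so we may assume $f$ is non-constant; then $\sum_{i=1}^n \Phi_i[f] = 1$, and the hypothesis forces $n \ge 1/\delta$, so $n$ is automatically large once $\delta = \delta(m)$ is chosen small. Fix an arity $N$ (to be chosen) and a uniformly random map $\pi : [n] \to [N]$, and set $g = f^\pi$. For $S \subseteq [N]$ with indicator $\mathbf 1_S \in \Bool^N$ we have $g(\mathbf 1_S) = f(\mathbf 1_{\pi^{-1}(S)})$, where for each fixed $S$ the tuple $\mathbf 1_{\pi^{-1}(S)}$ is distributed exactly as the $(|S|/N)$-biased measure on $\Bool^n$; hence $\Pr[g(\mathbf 1_S) = 1] = \EX_{|S|/N}[f]$. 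Since $g$ is monotone (minors of monotone functions are monotone), $g$ equals the threshold function $\mathrm{Thr}^N_t$ of arity $N$ with threshold $t$ as soon as $g(\mathbf 1_S) = 1$ for all $|S| = t$ and $g(\mathbf 1_S) = 0$ for all $|S| = t - 1$; a union bound over these at most $2^N$ sets gives
\[
  \Pr\big[\, g \neq \mathrm{Thr}^N_t \,\big] \;\le\; 2^N\big(1 - \EX_{t/N}[f]\big) + 2^N\,\EX_{(t-1)/N}[f].
\]
So it suffices to exhibit $N \ge m$ and $t \in \{1,\dots,N\}$ with $\EX_{t/N}[f] \ge 1 - 2^{-N-2}$ and $\EX_{(t-1)/N}[f] \le 2^{-N-2}$: then the right-hand side is at most $2 \cdot 2^N \cdot 2^{-N-2} = \tfrac12 < 1$, and such a $\pi$ exists.

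These two inequalities say that the ``jump'' of the strictly increasing map $p \mapsto \EX_p[f]$ fits between two consecutive points of the grid $\tfrac1N\mathbb Z$. I would fix $\varepsilon_0 = 2^{-2m-2}$ and take $[s,t^\ast]$ to be the $\varepsilon_0$-threshold interval of $f$ (so $\EX_s[f] = \varepsilon_0$, $\EX_{t^\ast}[f] = 1 - \varepsilon_0$, and $0 < s \le t^\ast < 1$ by continuity and strict monotonicity of $p \mapsto \EX_p[f]$). The place where the hypothesis on Shapley values is used is the quantitative form of the Shapley-value/sharp-threshold correspondence of Kalai \cite[Theorems~3.3 and~3.5]{kalaiSocialIndeterminancy}: combined with a direct estimate for functions whose critical probability is within $\tfrac1{8m^2}$ of $0$ or $1$, it lets one choose $\delta = \delta(m) > 0$ small enough that $\max_i \Phi_i[f] \le \delta$ forces $t^\ast - s < \tfrac1{4m^2}$.

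Finally I would choose the arity by a short number-theoretic argument. Any two \emph{distinct} fractions with denominators at most $2m$ differ by at least $\tfrac1{4m^2} > t^\ast - s$, so the open interval $(s,t^\ast)$ contains at most one rational $j/N$ with $m \le N \le 2m$. If this interval contained some $j_N/N$ with $1 \le j_N \le N-1$ for \emph{every} $N \in \{m,\dots,2m\}$, then all $m+1$ of these points would coincide with a single reduced fraction $p/q$ whose denominator divides every integer in $[m,2m]$, hence $q \mid \gcd(m,m+1) = 1$; but then $p/q$ would be an integer in $(0,1)$, which is impossible. So there is $N \in \{m,\dots,2m\}$ with $(s,t^\ast) \cap \tfrac1N\mathbb Z \cap (0,1) = \varnothing$; taking $t = \lceil N t^\ast \rceil \in \{1,\dots,N\}$ then gives $\tfrac{t-1}{N} \le s$ and $\tfrac tN \ge t^\ast$, so $\EX_{(t-1)/N}[f] \le \varepsilon_0 \le 2^{-N-2}$ and $\EX_{t/N}[f] \ge 1 - \varepsilon_0 \ge 1 - 2^{-N-2}$ (using $N \le 2m$). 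Together with the union bound this yields a minor of $f$ equal to $\mathrm{Thr}^N_t$, a threshold function of arity $N \ge m$, as required.

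The main obstacle is the middle step: turning a uniform bound on the \emph{individual} Shapley values into an explicit, genuinely small bound on the width of a threshold interval that holds for \emph{all} monotone $f$ --- including those whose critical probability lies near the boundary of $[0,1]$, where the standard sharp-threshold statements are formulated under the extra assumption that the critical probability is bounded away from $0$ and $1$. The random-partition union bound and the number-theoretic choice of $N$ are elementary.
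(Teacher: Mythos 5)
The paper does not prove this lemma itself --- it is cited verbatim from \cite{brakensiek_et_al:LIPIcs.ICALP.2021.37} --- so I can only judge your argument on its own terms. The skeleton is sound and is consistent with the ``probabilistic argument'' the present paper attributes to that work: pull back a threshold function via a uniformly random block map $\pi:[n]\to[N]$, use the distributional identity that for fixed $S\subseteq[N]$ the pullback of the indicator of $S$ is $(|S|/N)$-biased on $\Bool^n$ (so the relevant failure probabilities are $1-\EX_{t/N}[f]$ and $\EX_{(t-1)/N}[f]$), union-bound over at most $2^N$ slices, and pick a denominator $N\in\{m,\dots,2m\}$ avoiding the threshold interval by a short coprimality argument. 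The union-bound arithmetic and the number-theoretic selection of $N$ both check out.

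The genuine gap --- which you flag yourself --- is the step from $\max_i\Phi_i[f]\le\delta$ to ``the $\varepsilon_0$-threshold interval has width $<1/(4m^2)$,'' with $\delta=\delta(m)$ uniform over all arities $n$ and all critical probabilities. Kalai's Theorems 3.3 and 3.5, as paraphrased in this paper, give the Shapley-value/sharp-threshold correspondence only under the extra hypothesis that the critical probability is bounded away from $0$ and $1$ by a constant, and they are formulated as an equivalence for \emph{families} of functions rather than an explicit, single-function $\delta$-to-width bound. The ``direct estimate'' you invoke for the boundary regime (critical probability within $1/(8m^2)$ of $0$ or $1$) is asserted but never exhibited, and it is not obvious how to get it cheaply: $\sum_i\Phi_i[f]=1$ together with Russo's formula does not by itself control the length of a threshold interval, and the Friedgut--Kalai machinery behind Kalai's theorems typically introduces $\log n$ factors that would have to be absorbed into a bound depending only on $m$. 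This middle step is exactly where the lemma's content lives; as written, the proof does not close.
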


To prove \cref{proposition:positive_polynomials:tractability}, we will show that \kl{regularity condition} implies that the minion contains functions with arbitrarily small \kl{Shapley values}. To this end, we will use probabilitic method to find minors in $\minion$ that satisfy even stronger regularity assumptions than \kl{regularity condition}: that every two coefficients of their representations corresponding to sets of the same size are almost equal. This turns out to be a sufficient condition for small \kl{Shapley values}.

\begin{lemma}\label{lemma:positive_polynomials:similar_edges_give_small_sv}
    For every $k \geq 1$ and $\delta > 0$, there exist constants $m = m(k, \delta) \geq 1$ and $\tau = \tau(k, \delta) > 0$, such that the following holds. Suppose that $f : \{0,1\}^m \to \{0,1\}$ is a \kl{monotone} Boolean function with \kl{representation} $f = \repr{Q}{t}$ of degree at most $k$. Then 
    \[
        \bigg( \forall \, S,T \subseteq [m] \text{ with } |S| = |T| : \Big| \widehat{Q}(S)-  \widehat{Q}(T) \Big| \leq \tau \bigg) \implies \max_{i \in [m]} \,\Phi_i[f] \leq \delta.
    \]
\end{lemma}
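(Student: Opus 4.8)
The plan is to bound each Shapley value $\Phi_i[f]$ directly from the combinatorial definition, using the fact that along a uniformly random $0\to 1$ flipping order, the moment coordinate $i$ ``fires'' is governed by the behavior of $Q$ near its threshold $t$. First I would normalize: since $f=\repr{Q}{t}$ with $Q$ of degree at most $k$ and all coefficients pairwise within $\tau$ of each other, for each size $\ell\in\{0,1,\dots,k\}$ there is a single value $c_\ell$ such that $\widehat Q(S)=c_\ell+O(\tau)$ for all $|S|=\ell$. Evaluating $Q$ at a point $\tuple x$ with exactly $j$ ones then gives $Q(\tuple x)=\sum_{\ell\le k} c_\ell \binom{j}{\ell} + O\!\big(\tau\, m^k\big)$, i.e. up to an additive error $Q$ depends only on the number of ones $j$ and is (approximately) a fixed univariate polynomial $p(j)=\sum_\ell c_\ell\binom{j}{\ell}$ of degree $\le k$ in $j$. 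So the lemma reduces, modulo the error term, to analyzing the symmetric threshold function $j\mapsto \mathds{1}[p(j)\ge t]$.

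Next I would argue that for a symmetric function the Shapley value has an exact, small closed form. In the flipping experiment the sequence of weights is $\tuple x_0,\tuple x_1,\dots,\tuple x_m$ with $\tuple x_j$ having $j$ ones; coordinate $i$ occupies a uniformly random position, and conditioned on $i$ being in position $j$ (so $\tuple x_{j-1}$ has $j-1$ ones and $\tuple x_j$ has $j$ ones) the event ``$f(\tuple x_{j-1})=0$ and $f(\tuple x_j)=1$'' depends only on $j$. Hence $\Phi_i[f]=\frac1m\sum_{j=1}^m \mathds{1}\big[f\text{ goes }0\to1\text{ between }j-1\text{ and }j\text{ ones}\big]$, which is $(\text{number of sign changes of } j\mapsto\mathds 1[p(j)\ge t])/m$. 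Since $p$ has degree $\le k$, the polynomial $p(j)-t$ changes sign at most $k$ times, so a genuinely symmetric degree-$k$ threshold function has at most $k$ ``jumps'' and every Shapley value is at most $k/m$. Choosing $m=m(k,\delta)$ with $k/m<\delta/2$ handles the idealized case.

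The remaining, and main, obstacle is controlling the error term so that the perturbed function $f=\repr{Q}{t}$ still has few $0\to1$ jumps along the Hamming path of \emph{every} coordinate. Because $Q$ is not exactly symmetric, whether $f(\tuple x_{j-1})=0,f(\tuple x_j)=1$ can depend on which coordinates are set, not just on $j$; the clean averaging over the position of $i$ must be replaced by an averaging over the random permutation $\sigma$. Here I would use that the error $|Q(\tuple x)-p(\#\text{ones}(\tuple x))|\le \tau\binom{m}{k}\le \tau m^k$ is uniform, so if $\tau$ is chosen small enough (depending on $k$, on the separation of consecutive values $|p(j)-p(j{-}1)|$, and on $m$) then $Q(\tuple x)\ge t$ iff $p(\#\text{ones})\ge t$ \emph{except possibly when $\#\text{ones}$ lies in one of the $O(k)$ ``critical'' bands where $p(j)$ is within $\tau m^k$ of $t$}. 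Outside those bands the behavior is frozen and symmetric; inside, the number of bands is $O(k)$ and each band has width $O(1)$ (unless $p$ is near-constant, which forces $f$ near-constant, where Shapley values are trivially tiny), so across the whole path coordinate $i$ can fire at most $O(k)$ times regardless of $\sigma$. Summing/averaging then gives $\Phi_i[f]=O(k)/m<\delta$ for $m=m(k,\delta)$ large and $\tau=\tau(k,\delta)$ small. The delicate point I expect to spend the most care on is making the ``critical band'' argument uniform in the permutation $\sigma$ and ruling out the degenerate regime where $p$ is so flat that there is no clean separation — this is where the quantifier order (choose $m$ first, then $\tau$ depending on $m$) matters.
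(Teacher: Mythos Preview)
Your high-level strategy differs from the paper's: you approximate $Q$ by a symmetric polynomial $p(j)=\sum_\ell c_\ell\binom{j}{\ell}$ and bound the number of ``ambiguous'' levels $j$ where the approximation does not pin down $f$'s value, whereas the paper works per permutation $\sigma$ of $[m]\setminus\{i\}$ and directly bounds the length of the interval of insertion positions at which $i$ can fire. Both routes ultimately hinge on the same structural fact, but your proposal does not yet supply it, and the gap is real.

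The issue is uniformity of $\tau$. You write that $\tau$ should be chosen ``depending on $k$, on the separation of consecutive values $|p(j)-p(j{-}1)|$, and on $m$'', but $p$ depends on the particular function $f$ through the $c_\ell$'s, so as stated your $\tau$ would depend on $f$ rather than only on $k$ and $\delta$. Your escape hatch ``unless $p$ is near-constant, which forces $f$ near-constant'' does not work: since $Q$ is normalized we have $p(0)=0$ and $p(m)\approx 1$, so $p$ is never globally near-constant; yet it can be very flat on a long interior interval straddling $t$ without $f$ being close to constant. The missing ingredient is exactly what the paper uses: from normalization there is some $\ell^*\le k$ with $c_{\ell^*}\gtrsim 1/(k\binom{m}{\ell^*})$, which gives a \emph{uniform} lower bound on the growth of $p$ over any gap of length $c$ (concretely, $p(j_2)-p(j_1)\ge p(j_2-j_1)\ge c_{\ell^*}\binom{j_2-j_1}{\ell^*}$ via Vandermonde). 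This, combined with an upper bound on the weight of any single coordinate (again from normalization and the $\tau$-hypothesis), forces the ambiguous band to have width $O(m^{(2k-1)/(2k)})=o(m)$, not $O(1)$. Once you insert this step your argument goes through and is essentially equivalent to the paper's bound on $s(\sigma)$.

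One minor point: you invoke ``$p$ has degree $\le k$ so at most $k$ sign changes'', but since $Q$ is positive, $p$ is monotone increasing and there is at most one $0\to 1$ jump; you do not need the degree bound for that part.
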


\begin{proof}
  Fix values of $k \geq 1$ and $\delta > 0$. We define constants $m, \tau$ such that they satisfy the following properties:
  \begin{equation*}
    m^{-\frac{1}{2k}} \leq \delta/3, \text{\hspace*{1cm}} m^{-\frac{1}{2}} > \frac{2k^2}{m}, \text{\hspace*{1cm}} m^{\frac{2k-1}{2k}} \geq k, \text{\hspace*{1cm}} \tau = \frac{k}{m \cdot 2^{m+1}}. \tag{{\color{magenta}$*$}}\label{eq:positive_polynomials:assumptions}
  \end{equation*}
  Observe that the first three conditions of \eqref{eq:positive_polynomials:assumptions} can be satisfied with sufficiently large $m = m(k, \delta)$. Suppose that $f: \{0,1\}^m \to \{0,1\}$ is a monotone Boolean function as in the lemma statement. We will show that \kl{Shapley values} of $f$ are at most $\delta$.

  We start with the observation that we can bound \kl{weights} of coordinates. First, we bound the difference of weights of any two coordinates $i, j \in [m]$:
  \begin{align*}
    \Big| \Deg{Q}{i} - \Deg{Q}{j} \Big| &= \bigg| \sum \Big\{    \widehat{Q}(S) : i \in S \text{ and } j \not \in S \Big\} - \sum \Big\{ \widehat{Q}(T) : i \not \in S  \text{ and } j \in S \Big\} \bigg| \\  
                                        &= \bigg| \sum \Big\{ \widehat{Q}(S) - \widehat{Q}\big((S \setminus \{i\}) \cup \{j\} \big) : i \in S \text{ and } j \not \in S \Big\} \bigg| \leq 2^m \cdot \tau.
  \end{align*}
  Since $Q$ is \kl{normalized} and its degree is at most $k$, the sum of weights of coordinates is at most $k$. Therefore, the average weight is at most $k/m$ and so is the minimal weight. From the bound for differences of weights we obtain that
  \[
    \max_{i \in [m]} \, \Deg{Q}{i} \leq (k/m) + 2^m \cdot \tau.
  \]
  Futhermore, since $Q$ is \kl{normalized}, there must be $\ell \in [k]$ such that the sum of weights of coefficients corresponding to sets of size $\ell$ is at least $1/k$. With a reasoning similar to the previous bound, for every $S \subseteq [m]$ of size $\ell$, we have
  \[
    \widehat{Q}(S) \geq \frac{1}{k \cdot {m \choose \ell}} - \tau.
  \]
  Fix any $i \in [m]$. We proceed to bounding the \kl{Shapley value} of $i$ in $f$. Consider the set $\mathcal{S}$ of permutations of $[m] \setminus \{i \}$. For every $\sigma \in \mathcal{S}$, consider the sequence of its values $\sigma(1), \sigma(2), \dots, \sigma(m-1)$. Let $s(\sigma) \in \{0, \dots, m \}$ be the number of positions where $i$ can be inserted into this sequence, so that the resulting permutation of $[m]$ counts towards $\Phi_i[f]$, i.e. flipping the bit on coordinate $i$ is the moment when value of $f$ changes from $0$ to $1$, if bits are flipped in order corresponding to this permutation. It turns out that $s(\sigma)$ cannot be large; we will show that it is $o(m)$.
  
  Fix any $\sigma \in \mathcal{S}$. We can assume that $s(\sigma) > 0$. Observe that the set of positions counted by $s(\sigma)$ is a range: denote it by $[a, b]$, i.e. for every $j \in [a,b]$, if we insert $i$ after the element $\sigma(j)$, we obtain a permutation counting towards $\Phi_i[f]$ (if $a = 0$, then $i$ can be inserted before $\sigma(1)$). Let $\tuple x \in \{0,1\}^m$ be the characteristic vector of $\{ \sigma(1), \sigma(2), \dots \sigma(a) \}$, i.e.
  \[
    \tuple x_j = \begin{cases}
        1 & \text{ if } j \in \{ \sigma(1), \sigma(2), \dots, \sigma(a) \}, \\ 
        0 & \text{ otherwise}.
    \end{cases} 
  \]
  Similarly, let $\tuple y$ be the characteristic vector of $\{ \sigma(1), \sigma(2), \dots, \sigma(b) \}$. We have $f(\tuple x) = f(\tuple y) = 0$ and $f(\tuple x \oplus i) = f(\tuple y \oplus i) = 1$. In particular, this means that $Q(\tuple x), Q(\tuple y) < t$ but $Q(\tuple x \oplus i), Q(\tuple y \oplus i) \geq t$. We obtain that
  \[
     Q(\tuple y) - Q(\tuple x) < Q(\tuple x \oplus i) - Q(\tuple x)
  \]
  Observe that the left-hand side of this inequality is at least the sum of coefficients of $Q$ corresponding to subsets of $\{ \sigma(a+1), \dots, \sigma(b) \}$, while the right-hand side is at most $\Deg{Q}{i}$. Let $c = b-a$. There are ${c \choose \ell}$ subsets of size $\ell$ of $\{\sigma(a+1), \dots, \sigma(b) \}$ and we known that all of their coefficients in $Q$ are large. We obtain the following inequality:
  \[
      {c \choose \ell} \cdot \left( \frac{1}{k \cdot {m \choose \ell}} - \tau \right) < \frac{k}{m} + 2^m \cdot \tau.
  \]
  We bound ${c \choose \ell} \leq 2^m$ and plug the value of $\tau$ from \eqref{eq:positive_polynomials:assumptions} to further obtain that 
  \begin{equation*}
    \frac{{c \choose \ell}}{{m \choose \ell}} < k \cdot \left( \frac{k}{m} + 2^{m+1} \cdot \tau \right) = \frac{2k^2}{m}. \tag{{\color{magenta}$**$}}\label{eq:positive_polynomials:bound_for_ration_of_binomial_coeffs}
  \end{equation*}
  Let $\alpha = \frac{2k-1}{2k} \in (0,1)$. Note that $m^\alpha \geq k$ from \eqref{eq:positive_polynomials:assumptions}. We argue that $c < 2m^\alpha$. Indeed, otherwise we would have that
  \begin{align*}
     \frac{{c \choose \ell}}{{m \choose \ell}} &= \frac{(c-\ell+1) \cdot (c-\ell+2) \dots (c-1) \cdot c}{(m-\ell+1) \cdot (m-\ell+2) \dots (m-1) \cdot m} \geq \left( \frac{c-k}{m} \right)^k \geq m^{(\alpha - 1) \cdot k} = m^{-\frac{1}{2}} \overset{\eqref{eq:positive_polynomials:assumptions}}{>} \frac{2k^2}{m},
  \end{align*}
  which is a contradiction with \eqref{eq:positive_polynomials:bound_for_ration_of_binomial_coeffs}. Since $s(\sigma) = c + 1$, we have $s(\sigma) < 2m^\alpha + 1 \leq 3m^\alpha$ for every $\sigma \in \mathcal{S}$. We are ready to bound $\Phi_i[f]$:
  \[
    \Phi_i[f] = \frac{1}{m!} \cdot \sum_{\sigma \in \mathcal{S}} s(\sigma) < \frac{1}{m!} \cdot (m-1)! \cdot 3m^\alpha = 3m^{\alpha - 1} = 3m^{- \frac{1}{2k}} \overset{\eqref{eq:positive_polynomials:assumptions}}{\leq} \delta.
  \]
  The coordinate $i$ was chosen arbitrarily from $[m]$, therefore we showed that every coordinate has \kl{Shapley value} at most $\delta$ in $f$, which finishes the proof.
\end{proof}

Before we proceed to the main proof, we introduce the last piece of the puzzle, the classic \textit{McDiarmid's concentration inequality}, otherwise known as \textit{bounded differences inequality}. This will come in handy in analysis of random minors of functions satisfying the \kl{regularity condition}.

\begin{theorem}[\intro{McDiarmid's Inequality}, Lemma (1.2) in \cite{McDiarmid_1989}]
    Suppose $m,n \geq 1$ are integers and $g~:~[m]^n\to \mathbb{R}$ satisfies the \intro{bounded differences property} with bounds $c_1, c_2, \dots, c_n \geq 0$, i.e. for every $i \in [n]$ and $\tuple x, \tuple x' \in [m]^n$ which differ only in the $i$-th coordinate, we have $|g(\tuple x) - g(\tuple x')| \leq c_i$.
    Consider independent random variables $X_1, X_2, \dots, X_n \in [m]$. Let $\mu = \EX[g(X_1, \dots, X_n)]$. Then for any $t > 0$,
    \[
        \Pr\Big[ \big\lvert g(X_1, \dots, X_n) - \mu \big\rvert \geq t \Big] \leq 2 \exp\left( -\frac{2t^2}{\sum_{i = 1}^n c_i^2} \right).
    \]
\end{theorem}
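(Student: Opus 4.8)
The plan is to prove the inequality by the standard Azuma--Hoeffding martingale method applied to the Doob martingale of $g$. First I would set $\mu = \EX[g(X_1,\dots,X_n)]$, $Y_0 = \mu$, and for $k \in [n]$ let $Y_k = \EX[g(X_1,\dots,X_n) \mid X_1,\dots,X_k]$; then $(Y_k)_{k=0}^{n}$ is a martingale with respect to the filtration generated by $X_1,\dots,X_k$, with $Y_n = g(X_1,\dots,X_n)$. Writing $D_k = Y_k - Y_{k-1}$ for the martingale differences, it suffices to establish a tail bound for $Y_n - Y_0 = \sum_{k=1}^{n} D_k$.

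The key step is to control the conditional range of each $D_k$. Fix values $x_1,\dots,x_{k-1}$, and for any $x \in [m]$ let $h_k(x) = \EX[g(x_1,\dots,x_{k-1},x,X_{k+1},\dots,X_n)]$, the expectation taken only over $X_{k+1},\dots,X_n$. By independence of the $X_i$'s this conditional expectation factors as claimed, so that, conditioned on $X_1,\dots,X_{k-1}$, we have $Y_k = h_k(X_k)$ and $Y_{k-1} = \EX_{X_k}[h_k(X_k)]$. Applying the bounded differences hypothesis pointwise inside the expectation --- the distribution of $X_{k+1},\dots,X_n$ does not depend on the value assigned to the $k$-th coordinate, again by independence --- yields $|h_k(x) - h_k(x')| \le c_k$ for all $x, x'$. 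Hence, conditionally on $X_1,\dots,X_{k-1}$, the variable $D_k = h_k(X_k) - \EX_{X_k}[h_k(X_k)]$ has mean zero and takes values in an interval of length at most $c_k$.

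Next I would invoke Hoeffding's lemma: a random variable $Z$ with $\EX[Z] = 0$ taking values in an interval of length $\ell$ satisfies $\EX[e^{\lambda Z}] \le \exp(\lambda^2 \ell^2 / 8)$ for every $\lambda \in \mathbb{R}$ (proved by bounding $e^{\lambda z}$ on the interval by its chord and estimating the cumulant generating function of the resulting bound by a second-order Taylor expansion). Applied conditionally on $X_1,\dots,X_{k-1}$ this gives $\EX[e^{\lambda D_k} \mid X_1,\dots,X_{k-1}] \le \exp(\lambda^2 c_k^2/8)$ almost surely. Peeling off one martingale difference at a time,
\[
    \EX\big[e^{\lambda (Y_n - Y_0)}\big] = \EX\Big[e^{\lambda \sum_{k=1}^{n-1} D_k}\,\EX\big[e^{\lambda D_n} \mid X_1,\dots,X_{n-1}\big]\Big] \le e^{\lambda^2 c_n^2/8}\,\EX\big[e^{\lambda \sum_{k=1}^{n-1} D_k}\big],
\]
and iterating down to $k=1$ gives $\EX[e^{\lambda (g - \mu)}] \le \exp\big(\tfrac{\lambda^2}{8}\sum_{k=1}^{n} c_k^2\big)$.

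Finally I would apply the Chernoff bound: for $\lambda > 0$, $\Pr[g - \mu \ge t] \le e^{-\lambda t}\,\EX[e^{\lambda(g - \mu)}] \le \exp\big(-\lambda t + \tfrac{\lambda^2}{8}\sum_{k} c_k^2\big)$, and choosing $\lambda = 4t/\sum_{k} c_k^2$ yields $\Pr[g - \mu \ge t] \le \exp\big(-2t^2/\sum_{k} c_k^2\big)$. Replacing $g$ by $-g$, which has the same bounded differences constants, bounds $\Pr[g - \mu \le -t]$ by the same quantity, and a union bound over the two events produces the factor $2$ in the claimed estimate. I expect the only point requiring genuine care to be the conditional range estimate for the martingale differences: one must use independence twice --- once to write the conditional expectation $Y_k$ as an expectation over only the coordinates $X_{k+1},\dots,X_n$, and once to ensure that this distribution is unchanged when the $k$-th coordinate is varied, so that the bounded differences bound survives taking the expectation. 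The remainder is the routine Azuma--Hoeffding/Chernoff computation.
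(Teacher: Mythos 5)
The paper does not prove this statement; it is cited as an external result (Lemma~(1.2) in the referenced McDiarmid survey), so there is no internal proof to compare against. Your argument is the standard Doob-martingale / Azuma--Hoeffding derivation, which is indeed how McDiarmid proves it in the cited source, and it is correct: the two uses of independence you flag (to write the conditional expectation as an expectation over $X_{k+1},\dots,X_n$ alone, and to ensure that distribution is unchanged when the $k$-th coordinate is varied, so the bounded-differences bound passes through the expectation to give conditional range $\le c_k$) are exactly the points where care is needed, and the Hoeffding lemma plus Chernoff optimization at $\lambda = 4t/\sum_k c_k^2$ and the $\pm g$ union bound complete it cleanly.
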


\begin{proof}[Proof of \cref{proposition:positive_polynomials:tractability}]
  Suppose $k \geq 1$ and $\minion \subseteq \PTF_k^+$ is a Boolean minion satisfying the $k$-\kl{regularity condition}. Our goal is to show that $\minion$ contains threshold functions of arbitrary large arities. Thanks to \cref{lemma:positive_polynomials_small_sv_give_thresholds}, it is sufficient to show that $\minion$ contains functions with arbitrary small \kl{Shapley values}.
  
  Fix any $\delta > 0$. We want to find a function in $\minion$ with all \kl{Shapley values} bounded by $\delta$. Let $m,\tau$ be constants from \cref{lemma:positive_polynomials:similar_edges_give_small_sv} applied with $k$ and $\delta$. Additionally, let
  \[
    \varepsilon = \frac{\tau^2}{2\ln 2 \cdot k(m+1)}
  \]
  and let $f \in \minion$ be a function with representation $f = \repr{Q}{t}$ of degree at most $k$, such that $\Deg{Q}{i} < \varepsilon$ for every $i \in [n]$. The existence of such function is asserted by $k$-\kl{regularity condition} for $\minion$.
  
  As the next step, we construct a random $m$-ary minor of $f$ as follows. Let $\pi : [n] \to [m]$ be a random minor map obtained by drawing $\pi(i)$ uniformly from $[m]$, idependently for every $i \in [n]$. Finally, let $g = f^\pi$. Observe that the \kl{representation} $f = \repr{Q}{t}$ induces a \kl{representation} $g = \repr{R}{t}$ of degree at most $k$, where
  \[
    \widehat{R}(S) = \sum \Big\{ \widehat{Q}(T) : T \subseteq [n] \text{ and } \pi(T) = S \Big \}
  \]
  for every $S \subseteq [m]$. It is easy to see that since $Q$ is \kl{unbiased}, \kl{positive} and \kl{normalized}, then so must be $R$.
  
  We show that $R$ is regular with positive probability, i.e. that $|\widehat{R}(S) - \widehat{R}(T)| \leq \tau$  for every two sets $S, T \subseteq [m]$ of the same size. For every $S \subseteq [m]$, we can treat $\widehat{R}(S)$ as a random variable depending on $(\pi(1), \dots, \pi(n))$. In this context, $\widehat{R}(S) : [m]^n \to \mathbb{R}$ is a function with \kl{bounded differences property} for differences $(\Deg{Q}{1}, \dots, \Deg{Q}{n})$. Indeed, changing the value of $\pi(i)$ can change the value of $\widehat{R}(S)$ by at most the sum of coefficients of $Q$ corresponding to sets including $i$, which is exactly $\Deg{Q}{i}$. Additionaly, we can bound the sum of squares of differences by 
  \[
    \sum_{i=1}^n \big(\Deg{Q}{i}\big)^2 \leq \left( \max_{i \in [n]} \, \Deg{Q}{i} \right)\left( \sum_{i=1}^n \Deg{Q}{i} \right) < \varepsilon \cdot k.
  \]
  From symmetry, for any two sets $S,T \subseteq [m]$ of the same size, we have
  \[
    \EX\left[\widehat{R}(S)\right] = \EX\left[\widehat{R}(T)\right].
  \]
  Therefore, we are allowed to define numbers $\mu_1, \dots, \mu_m \geq 0$, where $\mu_i$ is the expected value of $\widehat{R}(S)$ for every $i \in [n]$ and set $S \subseteq [n]$ of size $i$. We apply \kl{McDiarmid's inequality} to deduce that the distribution of $\widehat{R}(S)$ is tightly concentrated around its expected value. For every $S \subseteq [n]$ of size $i$, we have
  \[
    \Pr\Big[ \big| \widehat{R}(S) - \mu_i \big| \geq \tau/2 \Big] \leq  2 \exp \left( - \frac{\tau^2/2}{\sum_{j=1}^n \big(\Deg{Q}{j}\big)^2} \right) < 2 \exp \left( - \frac{\tau^2/2}{\varepsilon \cdot k} \right) = 1/2^m,
  \]
  where the last equality follows from plugging the value of $\varepsilon$. Union bound implies that the probability that for every $i \in [m]$ and set $S \subseteq[m]$ of size $i$ we have $\widehat{R}(S) \in (\mu_i - \tau/2, \mu_i + \tau/2)$, is positive. Let $g = \repr{Q}{t}$ be the concrete minor witnessing this. In particular, for every pair of sets $S,T \subseteq [m]$ of the same size, we have 
  \[
    \Big| \widehat{R}(S) - \widehat{R}(T) \Big| \leq \tau.
  \]
  From our choice of $m$ and $\tau$, \cref{lemma:positive_polynomials:similar_edges_give_small_sv} implies that all \kl{Shapley values} of $g$ are bounded by $\delta$. Since $g \in \minion$ and $\delta$ was chosen as an arbitrary positive constant, the proof is finished.
\end{proof}

\subsection{Irregularity implies hardness}

This section is devoted to proving the following proposition, which implies the hardness part of \cref{theorem:dichotomy_for_positive_polynomials} due to \cref{theorem:reductions:layered_condition_implies_hardness}.

\begin{proposition}\label{proposition:positive_polynomials:hardness}
    Suppose $k \geq 1$ is an integer and $\minion \subseteq \PTF_k^+$ is a minion, which does not satisfy the $k$-\kl{regularity condition}. Then $\minion$ satisfies the \kl{layered choice condition}. 
\end{proposition}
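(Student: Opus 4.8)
The plan is to derive the \kl{layered choice condition} directly from the failure of the $k$-\kl{regularity condition}.

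\smallskip\noindent\textbf{Setup.} First I unpack the hypothesis: there is a constant $\varepsilon_0>0$ such that for every $f\in\minion$ and every \kl{normalized}, degree-$\le k$ \kl{representation} $\repr{Q}{t}$ of $f$, some coordinate satisfies $\Deg{Q}{i}\ge\varepsilon_0$. Three elementary properties of \kl{weights} do the work. \emph{(a)} For \kl{normalized} $Q$ of degree $\le k$, $\sum_i\Deg{Q}{i}=\sum_S|S|\,\widehat Q(S)\le k$, so at most $k/\varepsilon_0$ coordinates of $Q$ can have weight $\ge\varepsilon_0$. \emph{(b)} \kl{Weights} are non-decreasing under \kl{minors}: if $f\xrightarrow{\pi}g$ and $f=\repr{Q}{t}$, then $g=\repr{R}{t}$ with the \emph{same} threshold, $\widehat R(S)=\sum_{\pi(T)=S}\widehat Q(T)$, and since $Q$ is \kl{positive}, $\Deg{R}{\pi(i)}\ge\Deg{Q}{i}$ for every $i$. \emph{(c)} \kl{Representations} are closed under averaging: if $\repr{Q_1}{t_1},\dots,\repr{Q_r}{t_r}$ all represent $f$, so does $\repr{\frac1r\sum_\ell Q_\ell}{\frac1r\sum_\ell t_\ell}$, which is again \kl{unbiased}, \kl{positive}, \kl{normalized} and of degree $\le k$.

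\smallskip\noindent\textbf{The choice function.} For each $f\in\minion$ fix a \kl{normalized}, degree-$\le k$ \kl{representation} $\repr{Q_f}{t_f}$; it is convenient to take one that is \emph{extremal}, i.e.\ (approximately) minimizes $\max_i\Deg{Q}{i}$, so that this value is $\ge\varepsilon_0$ and the heavy set of $Q_f$ is ``as spread out as possible''. Put $C(f)=\{\,i:\Deg{Q_f}{i}\ge\varepsilon_0/2\,\}$: by the unpacked hypothesis $C(f)\ne\emptyset$, and $|C(f)|\le 2k/\varepsilon_0=:M$ by (a). I claim $\minion$ satisfies the \kl{layered choice condition} with this $C$, taking the number of layers $L$ slightly larger than $M$.

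\smallskip\noindent\textbf{Verification, and the main obstacle.} Given a chain $f_1\xrightarrow{\pi_{1,2}}f_2\xrightarrow{\pi_{2,3}}\cdots\xrightarrow{\pi_{L-1,L}}f_L$, suppose toward a contradiction that $\pi_{i,j}(C(f_i))\cap C(f_j)=\emptyset$ for all $i<j$. For each index $m$, push $\repr{Q_{f_m}}{t_{f_m}}$ forward to obtain \kl{representations} $P_j^{(m)}:=Q_{f_m}^{\,\pi_{m,j}}$ of $f_j$ (so $P_j^{(j)}=Q_{f_j}$); by (b) every coordinate of $\pi_{m,j}(C(f_m))$ carries \kl{weight} $\ge\varepsilon_0/2$ in $P_j^{(m)}$, and by the contradiction hypothesis these coordinates avoid $C(f_j)$ whenever $m<j$. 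The plan is then to average, for a well-chosen $j$, the extremal \kl{representation} $Q_{f_j}$ with a suitable convex combination of the pushed-forward $P_j^{(1)},\dots,P_j^{(j-1)}$ (via (c)) and to argue that, once $L>M$ layers have passed, too many pairwise ``fresh'' $\varepsilon_0/2$-heavy coordinates have been forced to lie outside the canonical sets for the resulting \kl{representation} to remain compatible with the total-weight bound (a) together with the lower bound $\varepsilon_0$ dictated by the hypothesis and by extremality of $Q_{f_j}$; this yields the contradiction, and making this counting argument precise is the crux. The difficulty to emphasise is that the \kl{choice function} is necessarily read off from \emph{one fixed} \kl{representation} of each $f_j$, while the monotonicity (b) only controls \kl{representations} that \emph{drift} as one advances along the chain (positive threshold \kl{representations} are far from unique); the quantitative claim that this drift cannot keep every pushed-forward heavy set disjoint from $C(f_j)$ for more than $M$ layers --- for which (a), (c) and the extremal choice of representatives are the key inputs --- is the technical heart of the proof. (One might even hope to obtain the stronger \kl{multiple choice condition}, but only the \kl{layered choice condition} is needed here.)
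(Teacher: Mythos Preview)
You have collected exactly the right raw materials --- the weight bound (a), the push-forward monotonicity (b), and closure under convex combinations (c) --- and you correctly isolate the central difficulty: your choice set $C(f_j)$ is read off from one fixed \kl{representation} $Q_{f_j}$, whereas the only \kl{representation} of $f_j$ you can control along the chain is the one pushed forward from $f_1$. But you do not close this gap. The ``counting argument'' you promise is never carried out, and I do not see how extremality of $Q_{f_j}$ helps: knowing that $\max_i\Deg{Q_{f_j}}{i}$ is minimal among all representations tells you nothing about \emph{which} coordinates of the pushed-forward representation $P_j^{(1)}$ are heavy, and the sets $\pi_{m,L}(C(f_m))$ need not be pairwise disjoint inside $[n_L]$, so averaging $Q_{f_L}$ with the $P_L^{(m)}$ and invoking (a) does not obviously force a contradiction. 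Concretely, nothing prevents all the $\varepsilon_0/2$-weight that each $P_L^{(m)}$ carries on $\pi_{m,L}(C(f_m))$ from piling up on a handful of coordinates outside $C(f_L)$.

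The paper resolves the ambiguity-of-representation problem \emph{before} touching the chain, by strengthening the choice function. It defines a \kl{heavy set} for $f$ to be a set that meets the $\varepsilon/2$-heavy coordinates of \emph{every} degree-$\le k$ \kl{representation} of $f$, and proves (using your averaging observation (c), but applied iteratively at the definition stage) that every $f\in\minion$ has a heavy set of size $O(k/\varepsilon^2)$. With $C(f)$ chosen to be such a heavy set, the chain argument becomes clean: push a single \kl{representation} $Q_1$ of $f_1$ forward along the whole chain; since $C(f_j)$ is heavy for the resulting representation of $f_j$, it contains a coordinate of weight $\ge\varepsilon/2$ there, hence the preimages $B_j=\pi_{1,j}^{-1}(C(f_j))$ in $[n_1]$ each carry $Q_1$-weight $\ge\varepsilon/2$, and they are pairwise disjoint by the contradiction hypothesis --- now (a) finishes. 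The moral is that your averaging trick (c) is the right idea, but it should be spent on building a representation-independent choice function, not on the verification step.
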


The fact that a minion $\minion$ fails to satisfy the \kl{regularity condition} implies that there exists a constant $\varepsilon > 0$, such that every \kl{representation} of every function in $\minion$ has a coordinate with \kl{weight} at least $\varepsilon$. Since the sum of weights in a representation of bounded degree is constant, this suggests a natural choice function for $\minion$: choose the set of coordinates with weight at least $\varepsilon$. While the general direction is correct, we have to deal with the fact that representations are ambiguous. Every function in $\minion$ has an infinite number of (\kl{normalized}) representations of bounded degree. Therefore, the fact that a coordinate has significant weight in one representation does not necessarily mean that there is no other representation in which its weight is negligible. This issue motivates the following definition of \textit{heavy sets}, which are designed to capture significant weight across all possible representations.

\begin{definition}[\intro{Heavy sets}]
    Suppose $k \geq 1$ is an integer, $\alpha > 0$ and $f$ is an $n$-ary function with a \kl{representation} of degree at most $k$. We say that $A \subseteq [n]$ is a $(k,\alpha)$-\textit{heavy set} of $f$ if
    \[
        A \cap \Big\{ i \in [n] : \Deg{Q}{i} \geq \alpha \Big\} \neq \emptyset,
    \]
    for every \kl{representation} $f = \repr{Q}{t}$ of degree at most $k$.
\end{definition}

Obviously, if $\minion$ fails to satisfy the $k$-\kl{regularity condition}, then there exists $\varepsilon > 0$ such that every function in $\minion$ has a $(k, \varepsilon)$-\kl{heavy set} --- the set of all coordinates. However, it turns out that we can do better than that: we can find heavy sets of constant size, i.e. independent of function arity.

\begin{lemma}\label{lemma:positive_polynomials:existence_of_heavy_sets}
    Suppose that $k \geq 1$ and $\minion \subseteq \PTF_k^+$ is a \kl{minion} that does not satisfy the $k$-\kl{regularity condition}, which is witnessed by a constant $\varepsilon > 0$. Then every function in $\minion$ has a $(k, \varepsilon/2)$-\kl{heavy set} of size at most $4k/\varepsilon^2$.
\end{lemma}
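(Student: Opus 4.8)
The plan is to exploit a structural feature of \kl{representations} that has not been used so far: for a \emph{fixed} function $f$, the set of its \kl{normalized}, \kl{positive}, \kl{unbiased} representations of degree at most $k$ is convex. Indeed, if $f = \repr{Q_1}{t_1} = \repr{Q_2}{t_2}$ are two such representations and $\mu \in [0,1]$, then $Q_\mu := \mu Q_1 + (1-\mu) Q_2$ is again normalized, positive and unbiased of degree at most $k$, and $f = \repr{Q_\mu}{\mu t_1 + (1-\mu) t_2}$: on inputs with $f = 1$ we have $Q_1 \ge t_1$ and $Q_2 \ge t_2$, on inputs with $f = 0$ we have $Q_1 < t_1$ and $Q_2 < t_2$, and both inequalities survive the convex combination. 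Since the weight vector depends linearly on the polynomial, the set $W_f \subseteq \mathbb{R}_{\ge 0}^{n}$ of vectors $\big(\Deg{Q}{1}, \dots, \Deg{Q}{n}\big)$ realized by degree-$\le k$ representations of $f$ is a nonempty convex set, every element of which has $\sum_{i} \Deg{Q}{i} \le k$ (the total weight of a normalized, positive polynomial of degree $\le k$) and, since $\minion$ fails the $k$-\kl{regularity condition} with witness $\varepsilon$, also $\max_i \Deg{Q}{i} \ge \varepsilon$. We may moreover assume $\varepsilon \le 1$: identifying all variables of any $f \in \minion$ yields a unary function in $\minion$ (a \kl{minion} is closed under \kl{minors}), and the only normalized positive unbiased polynomial in one variable is $Q(x) = x$, which has a coordinate of weight $1$.

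First I would produce a \emph{fractional} heavy set. On the product of the probability simplex $\Delta_n$ over coordinates and the convex set $W_f$, consider the bilinear payoff $(p, w) \mapsto \sum_{i} p_i w_i$. Since $\Delta_n$ is compact and convex, the minimax theorem (Sion's version suffices, as only one side need be compact) yields
\[
    \max_{p \in \Delta_n} \ \inf_{w \in W_f} \ \sum_{i \in [n]} p_i w_i \;=\; \inf_{w \in W_f} \ \max_{i \in [n]} w_i \;\ge\; \varepsilon,
\]
so there is a distribution $p^\ast$ on $[n]$ such that $\EX_{i \sim p^\ast}\big[\Deg{Q}{i}\big] \ge \varepsilon$ for \emph{every} degree-$\le k$ representation $f = \repr{Q}{t}$.

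Next I would round $p^\ast$ down to an actual set. Put $A := \big\{\, i \in [n] : p^\ast_i \ge \varepsilon^2/(4k) \,\big\}$, so that $|A| \le 4k/\varepsilon^2$. I claim $A$ is a $(k, \varepsilon/2)$-\kl{heavy set} of $f$. If not, some degree-$\le k$ representation $f = \repr{Q}{t}$ has $\Deg{Q}{i} < \varepsilon/2$ for all $i \in A$, so $\sum_{i \in A} p^\ast_i \Deg{Q}{i} < \varepsilon/2$; combined with $\EX_{i \sim p^\ast}\big[\Deg{Q}{i}\big] \ge \varepsilon$ this gives $\sum_{i \notin A} p^\ast_i \Deg{Q}{i} > \varepsilon/2$. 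But each $i \notin A$ has $p^\ast_i < \varepsilon^2/(4k)$, and $\sum_i \Deg{Q}{i} \le k$, so $\sum_{i \notin A} p^\ast_i \Deg{Q}{i} < \varepsilon^2/4 \le \varepsilon/2$ (using $\varepsilon \le 1$), a contradiction. Hence $A$ is the required heavy set.

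I expect the crux to be the first step: realizing that the space of representations of a fixed function is convex and checking that this convexity is compatible with the normalizations in play. Once that is in hand, the argument reduces to an elementary minimax/LP-duality step followed by a one-line rounding estimate, and the specific constant $4k/\varepsilon^2$ is just a matter of tuning the rounding threshold. If one prefers to avoid the minimax theorem, an equivalent self-contained substitute is an iterative procedure: grow a set $A$ together with a current representation, at each round throw into $A$ all coordinates of current weight $\ge \varepsilon/2$ and then replace the representation by one whose weights on $A$ are all $< \varepsilon/2$ (which exists as long as $A$ is not yet heavy); the coordinate sets added in distinct rounds are disjoint, so averaging all representations used and invoking the failure of $k$-regularity caps the number of rounds — and hence $|A|$ — by a polynomial in $k/\varepsilon$.
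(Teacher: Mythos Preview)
Your proof is correct, and the key structural insight --- that the space of normalized positive representations of a fixed function is convex, so that averaging representations is legitimate --- is exactly the one the paper relies on. The route you take from there, however, is genuinely different. The paper runs the iterative procedure you sketch in your closing paragraph: starting from an arbitrary representation, it repeatedly enlarges $A$ by the coordinates of weight $\ge \varepsilon/2$ in the current representation and then switches to a representation witnessing that $A$ is not yet heavy; the sets added at each round are disjoint, and averaging all the representations encountered shows (via failure of regularity) that the number of rounds is $O(1/\varepsilon)$, each round contributing at most $2k/\varepsilon$ coordinates.

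Your primary argument packages the same convexity into a single minimax step, extracting a distribution $p^\ast$ that certifies large expected weight against \emph{every} representation simultaneously, and then rounds by thresholding $p^\ast$. This is slicker and makes the role of convexity completely transparent; it also yields the stated bound $4k/\varepsilon^2$ on the nose. The paper's approach, in exchange, is fully elementary and constructive (no appeal to Sion), and its termination argument --- averaging the accumulated representations to manufacture one violating the regularity hypothesis --- is a nice concrete instantiation of the duality you invoke abstractly. Your observation that one may assume $\varepsilon \le 1$ is fine (and the paper does not bother, since its iterative bound does not need it); your verification that Sion applies with only $\Delta_n$ compact is correct since the payoff is bilinear.
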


\begin{proof}
    Fix any $f \in \minion$. We will construct a $(k, \varepsilon/2)$-\kl{heavy set} of $f$ of size at most $4k/\varepsilon^2$. Let $n$ be the arity of $f$ and $f = \repr{Q_0}{t_0}$ be an arbitary \kl{representation} of degree at most $k$. Additionally, let $A_0$ be the set of coordinates with \kl{weight} at least $\varepsilon/2$ in $Q_0$, i.e. $A_0 = \{i \in [n] : \Deg{Q_0}{i} \geq \varepsilon/2 \}$. Let $r = \lceil 2/\varepsilon \rceil$. We will define at most $r$ sets of coordinates $A_1, \dots, A_r \subseteq [n]$ as follows. Iteratively for $i = 1,2,\dots, r$, we perform the following procedure:
    \begin{enumerate}
        \item If $A_0 \cup \dots \cup A_{i-1}$ is a $(k, \varepsilon/2)$-heavy set of $f$, then finish the procedure. The sets $A_i, \dots, A_r$ are left undefined.
        \item Otherwise, let $f = \repr{Q_i}{t_i}$ be a \kl{representation} of degree at most $k$, such that all coordinates in $A_0 \cup \dots A_{i-1}$ have weight less than $\varepsilon/2$ in $Q_i$. Then let $A_i = \{j \in [n] : \Deg{Q_i}{j} \geq \varepsilon/2 \}$.
    \end{enumerate}
    First, suppose that the last defined set is $A_i$ for some $i < r$. Observe that since the degree of all representations is at most $k$, the sum of weights of coordinates in each of them is also bounded by $k$, and therefore $|A_j| \leq 2k/\varepsilon$ for every $j \in \{0, \dots, i-1 \}$. Therefore the set $A_0 \cup \dots \cup A_{i-1}$ is a $(k, \varepsilon/2)$-heavy set of $f$ of size at most $(r-1) \cdot 2k/\varepsilon \leq 2/\varepsilon \cdot 2k/\varepsilon = 4k/\varepsilon^2$, which finishes the construction.

    It remains to consider the case when the procedure reaches the last iteration without finding any viable $(k, \varepsilon/2)$-heavy set. In fact, we want to show that this cannot happen. Assume to the contrary, that the procedure yields representations $\repr{Q_0}{t_0}, \repr{Q_1}{t_1}, \dots, \repr{Q_r}{t_r}$ and sets $A_0, A_1, \dots, A_r$. Consider the function $\repr{\widetilde{Q}}{\tilde t}$ defined as follows:
    \[
        \widetilde{Q} = \frac{1}{r+1}\left( Q_0 + Q_1 + \dots + Q_r \right), \text{\hspace*{0.7cm}} \tilde{t} = \frac{1}{r+1} \left( t_0 + t_1 + \dots t_r \right).
    \]
    It is easy to see that $\widetilde{Q}$ is \kl{unbiased}, \kl{positive} and \kl{normalized}. Furthermore, $\repr{\widetilde{Q}}{\tilde{t}}$ is another \kl{representation} of $f$, i.e. $f = \repr{\widetilde{Q}}{\tilde{t}}$.

    Fix any $j \in [n]$. We observe, that among all representations $\repr{Q_0}{t_0}, \dots, \repr{Q_r}{t_r}$, the weight of coordinate $j$ is less than $\varepsilon/2$ in all but at most one of them, in which it is at most $1$. Therefore, we have the following bound for weight of coordinate $j$ in $\widetilde{Q}$:
    \[
        \omega_j \Big[ \widetilde{Q} \Big] = \frac{1}{r+1} \left( \sum_{i=0}^r \sum_{S \ni j} \widehat{Q_i}(S) \right) = \frac{1}{r+1} \left( \sum_{i=0}^r \Deg{Q_i}{j} \right) \leq \frac{1 + r \cdot \varepsilon/2}{r+1} < \frac{1}{r} + \frac{\varepsilon}{2} \leq \varepsilon.
    \]
    Since $j \in [n]$ was chosen arbitrarily, this must hold for every coordinate. Therefore, we found a \kl{representation} of $f$ of degree at most $k$, in which every coordinate has weight less than $\varepsilon$. This is a direct contradiction with the assumption that $\minion$ fails to satisfy the $k$-\kl{regularity condition} with $\varepsilon$, and thus the procedure must terminate by finding a suitable $(k,\varepsilon/2)$-heavy set of $f$.
\end{proof}

We are now ready to prove the main result of this section. It turns out that choosing the heavy sets yields a proper choice function, but we have to argue that significant \kl{weight} is properly trasfered through minors maps. To this end, we consider representations of minors induced by minor maps, in the spirit of the proof of \cite[Theorem 3.2]{injective_hardness_condition}.

\begin{proof}[Proof of \cref{proposition:positive_polynomials:hardness}]
    Let $k \geq 1$ and $\minion \subseteq \PTF_k^+$ be a minion that does not satisfy the $k$-\kl{regularity condition}, which is witnessed by a constant $\varepsilon > 0$. We want to show that $\minion$ satisfies the \kl{layered choice condition}.

    Let $M = \lceil 2k/\varepsilon +1 \rceil$ and $C$ be a \kl{choice function} for $\minion$ such that $C(f)$ is a $(k, \varepsilon/2)$-\kl{heavy set} of $f$ of size at most $4k/\varepsilon^2$, whose existence is asserted by \cref{lemma:positive_polynomials:existence_of_heavy_sets}. Clearly, $|C(f)| \leq M$ for every $f \in \minion$. It remains to show that $C$ is compatible with chains of \kl{minor maps}. Assume to the contrary that it is not: let $f_1 \xrightarrow{\pi_{1,2}} f_2 \xrightarrow{\pi_{2,3}} \dots \xrightarrow{\pi_{M-1, M}}f_M$ be a chain of minors in $\minion$, such that for every $1 \leq i < j \leq M$ we have
    \begin{equation*}
        \pi_{i, j}(C(f_i)) \cap C(f_j) = \emptyset. \tag{{\color{magenta}$*$}}\label{eq:positive_polynomials:minor_map_disjointness}
    \end{equation*}
    Let $n_1, \dots, n_M$ be arities of functions $f_1, \dots, f_M$. Moving forward, let $f_1 = \repr{Q_1}{t}$ be a \kl{representation} of degree at most $k$ and let $f_2 = \repr{Q_2}{t}, \dots, f_M = \repr{Q_M}{t}$ be representations induced by $\repr{Q_1}{t}$ and minor maps $\pi_{1,2}, \pi_{2,3}, \dots, \pi_{M-1, M}$. In other words, for every $i \in \{0, \dots, M - 1 \}$ and $S \subseteq [n_{i+1}]$ we have
    \[
        \widehat{Q_{i+1}}(S) = \sum \Big\{ \widehat{Q}_i(T) : T \subseteq [n_i] \text{ and } \pi_{i, i+1}(T) = S \Big\}.
    \]
    For every $j \in [M]$, let $B_j \subseteq [n_1]$ be the pre-image of $\pi_{1, j}$ on $C(f_j)$, that is, $B_j = \pi_{1,j}^{-1}(C(f_j))$. Observe that $B_1, \dots, B_M$ are pairwise disjoint; suppose, to the contrary, that $\ell \in B_i \cap B_j$ for $i < j$, we then have $\ell \in C(f_i)$ and $\pi_{i, j}(\ell) \in C(f_j)$, contradicting \eqref{eq:positive_polynomials:minor_map_disjointness}. Furthermore, for every $j \in [M]$, since $C(f_j)$ is a heavy set, we obtain that the sum of \kl{weights} of coordinates in $B_j$ is large:
    \begin{align*}
        \sum_{\ell \in B_j} \Deg{Q_1}{\ell} \geq \sum \Big\{ \widehat{Q_1}(S) : S \cap B_j \neq \emptyset \Big\} \geq \max \Big\{ \Deg{Q_j}{\ell} : \ell \in C(f_j) \Big\} \geq \varepsilon/2.
    \end{align*}
    Summing over $j \in [M]$, we obtain that the sum of coordinate \kl{weights} in $Q_1$ is at least $M \cdot \varepsilon/2 > k$. This is a contradiction, because $Q_1$ is \kl{normalized}. Therefore, $\minion$ satisfies the \kl{layered choice condition} with constant $M$ and choice function $C$.
\end{proof}

\section{Influence and random 2-to-1 minors}\label{sec:influence}
We begin this section with an introduction of basic notions from \kl{Fourier analysis of Boolean functions}. Most of them are standard; we refer the reader to \cite{O’Donnell_2014} for a more comprehensive treatment of this topic.\footnote{Boolean functions in \cite{O’Donnell_2014} are of the form $\{-1,1\}^n \to \{-1,1\}$, in contrast to our functions of the form $\{0,1\}^n \to \{0,1\}$. The difference is mostly cosmetic; our presentation is similar to e.g. \cite{hypergraphremovallemmas}.} 

Let us remind and expand on notions used in previous sections. If $\tuple x \in \Bool^n$, then $\tuple x \oplus i$ is the tuple $\tuple x$ with the $i$-th entry flipped. We also write $\tuple x ^{\, i \to 0}$ and $\tuple x^{\, i \to 1}$ to denote tuples equal to $\tuple x$ with the value of $x_i$ set to $0$ or $1$, respectively. Furthermore, $\tuple x0, \tuple x1 \in \{0,1\}^n$ are tuples equal to $\tuple x$ with $0$ or $1$ appended. Given a distribution $D$ on $\Bool^n$, we write $\tuple x \sim D$ to denote that $\tuple x$ is chosen from $D$ and for a specific $\tuple x \in \{0,1\}^n$, by $D(\tuple x)$ we denote the measure of $\tuple x$ in distribution $D$. If $A \subseteq \{0,1\}^n$, by $D(A)$ we denote the measure of set $A$, i.e. the sum of measures of elements of $A$. Given a function $f : \Bool^n \to \mathbb{R}$, we denote by $\EX_D[f(\tuple x)]$ or $\EX_D[f]$ the expected value of $f(\tuple x)$ when $\tuple x \sim D$. We mainly focus on \emph{product distributions}. Let $\cube{p}$ denote the $p$-\intro{biased distribution} over $\Bool$, i.e. the probability of $1$ is $p$, and the probability of $0$ is $1-p$. Then $\cube{p, n}$ is the product $p$-biased distribution over $\{0,1\}^n$, where each bit is drawn independently from $\cube{p}$. For conciseness, we sometimes write $\cube{p}$ instead of $\cube{p, n}$ if the dimension is clear from the context. Moreover, we use $\EX_p[\cdot]$ and $\Pr_p[\cdot]$ as shorthands for $\EX_{\tuple x \sim \cube{p}}[\cdot]$ and $\Pr_{\tuple x \sim \cube{p}}[\cdot]$. 

Suppose that $p \in (0,1)$. We consider $L^2(\Bool^n, \cube{p})$, the Hilbert space of functions $f : \{0,1\}^n \to \mathbb{R}$ equipped with inner product $\langle f, g \rangle = \EX_p[f(\tuple x) \cdot g(\tuple x)]$. The norm is defined in standard way $\norm{f} = \sqrt{\langle f, f\rangle}$. We distinguish a set of functions in $L^2(\{ 0,1\}^n, \cube{p})$ called \emph{Fourier characters}, defined as follows.

\begin{definition}[\intro{Fourier characters}]
    Let $p \in (0, 1)$ and $i \in [n]$. The \kl{Fourier character} corresponding to singleton $\{i\}$ is the function $\chi_i \in L^2(\Bool^n, \cube{p})$ defined as 
    \[
        \chi_i(\tuple x) = \frac{x_i - p}{\sqrt{p(1-p)}}.
    \]
    \noindent
    In general, the \kl{Fourier character} corresponding to set $S$ is $\chi_S = \prod_{i \in S} \chi_i$, for every $S \subseteq [n]$.
\end{definition}
\AP
It is well known that \kl{Fourier characters} form an orthonormal basis of $L^2(\Bool^n, \cube{p})$. Therefore, every function $f : \Bool^n \to \mathbb{R}$, seen as an element of $L^2(\{0,1\}^n, \cube{p})$, has a unique representation $f = \sum_{S \subseteq[n]} \hat{f}(S) \chi_S$, where $\hat{f}(S) = \langle f, \chi_S \rangle$. The representation is called the \intro{Fourier decomposition}, while the collection $\{ \hat{f}(S) : S \subseteq [n]\}$ is referred as \intro{Fourier coefficients}. Observe that $\EX_p[\chi_S(\tuple x)] = 0$ for every non-empty $S$, which implies that $\EX_p[f] = \hat{f}(\emptyset)$. Orthonormality of \kl{Fourier characters} gives the classic \intro{Parseval-Plancherel identity}:
\begin{gather*}
    \langle f, g \rangle = \sum_{S \subseteq [n]} \hat{f}(S) \cdot \hat{g}(S) \\ 
    \norm{f}^2 = \sum_{S \subseteq [n]} \hat{f}(S)^2
\end{gather*}
We will also encounter the low or high \intro{degree truncations}. Given a function $f \in L^2(\Bool^n, \cube{p})$ and an integer $d$, by $f^{\leq d} = \sum_{|S| \leq d} \hat{f}(S) \cdot \chi_S$ we denote its \kl{low-degree truncation} to degree $d$. Similarly by $f^{>d}$ we denote its \kl{high-degree truncation}, defined analogously. Moving forward, we recall the notion of \kl{influence} introduced in \cref{sec:introduction}: 

\definitionofinfluence*

We can see that the influence of a coordinate is a measure of coordinate \emph{significance} of given function. In section \cref{sec:positive_polynomials} we utilized another measure of coordinate significance: the \kl{Shapley values}. It turns out that if $f$ is monotone, the \kl{Shapley value} of a coordinate is the average of influence over all $p$-biased distributions (see \cite[Exercise 8.31]{O’Donnell_2014}).

If the function $f$ is Boolean valued, the influence of coordinate $i$ can be interpreted as the probability that flipping the value of $i$-th coordinate changes the function value (multiplied by a term depending on $p$). Additionally, the influence of coordinates of function $f$ has the nice property of being tightly correlated with the \kl{Fourier decomposition} of $f$.

\begin{proposition}\label{proposition:influence:alternative_definitions_of_inf}
    Suppose that $p \in (0,1)$ and $f: \Bool^n \to \Bool$ is a function. For every $i \in [n]$ we have
    \[
        \Inf[(p)]{f, i} \xlongequal{\!(1)\!} \sum_{S \ni i} \hat{f}(S)^2 \xlongequal{\!(2)\!} p(1-p) \cdot \EX_{p} \Big[ \big( f(\tuple x) - f(\tuple x \oplus i) \big)^2 \Big] \xlongequal{\!(3)\!} p(1-p) \cdot \Pr_p \Big[ f(\tuple x) \neq f(\tuple x \oplus i) \Big].
    \]
\end{proposition}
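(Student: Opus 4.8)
The plan is to prove the three equalities in turn, each starting from the definition $\Inf[(p)]{f,i}=\EX_{\tuple{x}\sim\cube{p}}\big[\big(f(\tuple{x})-\EX_{s\sim\cube{p}}[f(\tuple{x}^{i\to s})]\big)^2\big]$.

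For $(1)$, I would use the \kl{Fourier decomposition} $f=\sum_{S\subseteq[n]}\hat{f}(S)\chi_S$ together with the fact that averaging a function over the coordinate $x_i$ (i.e.\ replacing $f(\tuple{x})$ by $\EX_{s\sim\cube{p}}[f(\tuple{x}^{i\to s})]$) acts diagonally on \kl{Fourier characters}: a character $\chi_S$ with $i\notin S$ does not depend on $x_i$ and is unchanged, whereas for $i\in S$ we have $\chi_S=\chi_{S\setminus\{i\}}\cdot\chi_i$ and $\EX_{s\sim\cube{p}}\big[\frac{s-p}{\sqrt{p(1-p)}}\big]=0$ because $\EX_{s\sim\cube{p}}[s]=p$, so $\chi_S$ averages to $0$. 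Hence $f(\tuple{x})-\EX_{s\sim\cube{p}}[f(\tuple{x}^{i\to s})]=\sum_{S\ni i}\hat{f}(S)\chi_S(\tuple{x})$, and taking the squared $L^2(\Bool^n,\cube{p})$-norm and applying \kl{Parseval} gives $\Inf[(p)]{f,i}=\sum_{S\ni i}\hat{f}(S)^2$.

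For $(2)$, I would compute directly by conditioning on the $n-1$ coordinates other than $i$. Write $\tuple{y}$ for the restriction of $\tuple{x}$ to $[n]\setminus\{i\}$ and put $D(\tuple{y})=f(\tuple{x}^{i\to1})-f(\tuple{x}^{i\to0})$, which depends only on $\tuple{y}$. Since $\EX_{s\sim\cube{p}}[f(\tuple{x}^{i\to s})]=(1-p)f(\tuple{x}^{i\to0})+pf(\tuple{x}^{i\to1})$, one checks that $f(\tuple{x})-\EX_{s\sim\cube{p}}[f(\tuple{x}^{i\to s})]$ equals $-p\,D(\tuple{y})$ when $x_i=0$ and $(1-p)D(\tuple{y})$ when $x_i=1$; averaging its square over $x_i\sim\cube{p}$ gives $\big((1-p)p^2+p(1-p)^2\big)D(\tuple{y})^2=p(1-p)D(\tuple{y})^2$, so $\Inf[(p)]{f,i}=p(1-p)\EX_{\tuple{y}}[D(\tuple{y})^2]$. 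On the other hand $f(\tuple{x})-f(\tuple{x}\oplus i)$ equals $D(\tuple{y})$ or $-D(\tuple{y})$ depending on $x_i$, so $\EX_{p}\big[(f(\tuple{x})-f(\tuple{x}\oplus i))^2\big]=\EX_{\tuple{y}}[D(\tuple{y})^2]$; multiplying by $p(1-p)$ matches the previous identity, which proves $(2)$. (Alternatively, $(2)$ can be read off from $(1)$ via the $p$-biased discrete derivative $\sum_{S\ni i}\hat{f}(S)\chi_{S\setminus\{i\}}$, but the conditioning computation is the most transparent.)

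For $(3)$, since $f$ is $\Bool$-valued the quantity $(f(\tuple{x})-f(\tuple{x}\oplus i))^2$ takes only the values $0$ and $1$ and equals the indicator of the event $f(\tuple{x})\neq f(\tuple{x}\oplus i)$; taking the $p$-biased expectation turns this indicator's expectation into its probability, which is exactly the asserted expression, and multiplying by $p(1-p)$ finishes. None of these steps is a genuine obstacle; the only place to be slightly careful is the diagonal action of the coordinate-$i$ averaging operator used in $(1)$ — that is, confirming exactly which \kl{Fourier characters} survive the average over $x_i$ and that what remains is precisely the Fourier projection of $f$ onto the characters that contain $i$.
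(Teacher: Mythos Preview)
Your proposal is correct and follows essentially the same argument as the paper: for (1) you identify the Fourier expansion of $f(\tuple x)-\EX_{s}[f(\tuple x^{i\to s})]$ as $\sum_{S\ni i}\hat f(S)\chi_S$ and apply \kl{Parseval}; for (2) you introduce the discrete derivative $D$ (the paper calls it $\Delta$) and carry out the identical case split on $x_i$; and (3) is handled in both by the Boolean-valuedness of $f$. The only cosmetic difference is your explicit use of the restricted tuple $\tuple y$, which does not change the substance of the argument.
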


\begin{proof}
    Let $g : \{0,1\}^n\to \mathbb{R}$ be defined as $g(\tuple x) = f(\tuple x) - \EX_{s \sim \cube{p}}[f(\tuple x^{\, i \to s})]$. Clearly, we have $\Inf[(p)]{f, i}~=~\norm{g}^2$. We want to find the \kl{Fourier decomposition} of $g$. From the fact that $\EX_p[\chi_i] = 0$ we obtain that
    \[
        \EX_{s \sim \cube{p}} \Big[ f(\tuple x^{\, i \to s}) \Big] = \sum_{S \subseteq [n]} \hat{f}(S) \cdot \EX_{s \sim \cube{p}} \Big[ \chi_S(\tuple x^{\, i \to s}) \Big] = \sum_{S \not \ni i} \hat{f}(S) \cdot \chi_S(\tuple x).
    \]
    Therefore we have $g = \sum_{S \ni i} \hat{f}(S) \cdot \chi_S$ and the \kl{Parseval-Plancherel} identity implies (1). To obtain (2), consider $\Delta(\tuple x) = f(\tuple x^{\, i \to 1}) - f(\tuple x^{\, i \to 0})$. Observe that $\Delta(\tuple x)$ does not depend on $x_i$. It follows that
    \[
        g(\tuple x) = \begin{cases}
            - p \cdot \Delta(\tuple x) & \text{ if } x_i = 0, \\
            (1-p) \cdot \Delta(\tuple x) & \text{ if } x_i = 1.      
        \end{cases}
    \]
    We can now rewrite norm of $g$ in the following way:
    \begin{align*}
        \norm{g}^2 &= (1-p) \cdot \EX_p \Big[ p^2 \cdot \Delta(\tuple x)^2 \, \Big | \, x_i = 0 \Big] + p \cdot \EX_p \Big[ (1-p)^2 \cdot \Delta(\tuple x)^2 \, \Big | \, x_i = 1 \Big] \\ 
        &= \big[(1-p) p^2  + (1-p)^2 p \big] \cdot \EX_p\Big [  \Delta(\tuple x)^2 \Big] \\
        &= p(1-p) \cdot \EX_p \Big[ \big( f(\tuple x) - f(\tuple x \oplus i) \big)^2 \Big],
    \end{align*}
    which gives (2). The equality (3) follows simply from the fact that $f$ is Boolean valued. 
\end{proof}

In particular, the equality (1) in  \cref{proposition:influence:alternative_definitions_of_inf} implies the following connection between \kl{total influence} and \kl{Fourier decomposition}.

\begin{corollary}\label{corollary:influence:total_influence_formula}
    Suppose that $p \in (0,1)$ and $f : \{0,1\}^n \to \mathbb{R}$ is a function. Then the \kl{total influence} of $f$ can be expressed as
    \[
        \I[(p)]{f} = \sum_{S \subseteq [n]} |S| \cdot \hat{f}(S)^2.
    \]
\end{corollary}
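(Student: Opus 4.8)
The plan is to derive this immediately from equality (1) of \cref{proposition:influence:alternative_definitions_of_inf} by exchanging the order of summation. First I would observe that although \cref{proposition:influence:alternative_definitions_of_inf} is stated for Boolean-valued $f$, the proof of its equality (1) never uses this restriction: it only identifies the \kl{Fourier decomposition} $g = \sum_{S \ni i} \hat{f}(S)\chi_S$ of the function $g(\tuple x) = f(\tuple x) - \EX_{s \sim \cube{p}}[f(\tuple x^{\, i \to s})]$ and applies the \kl{Parseval-Plancherel} identity to conclude $\Inf[(p)]{f,i} = \norm{g}^2 = \sum_{S \ni i}\hat{f}(S)^2$. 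Hence this identity holds verbatim for every $f : \{0,1\}^n \to \mathbb{R}$, which is the generality we need here.

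Given this, the corollary follows by a one-line computation. By the definition of \kl{total influence},
\[
    \I[(p)]{f} = \sum_{i=1}^n \Inf[(p)]{f,i} = \sum_{i=1}^n \sum_{S \ni i} \hat{f}(S)^2 = \sum_{S \subseteq [n]} \hat{f}(S)^2 \cdot \big| \{ i \in [n] : i \in S \} \big| = \sum_{S \subseteq [n]} |S| \cdot \hat{f}(S)^2,
\]
where the third equality swaps the two finite sums and, for each fixed $S$, counts the coordinates $i$ with $i \in S$, of which there are exactly $|S|$.

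There is essentially no obstacle in this argument; the only point worth stating explicitly is the extension of equality (1) to real-valued $f$, which — as noted above — is already contained in the proof of \cref{proposition:influence:alternative_definitions_of_inf} and requires no extra work.
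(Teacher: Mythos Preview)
Your proposal is correct and matches the paper's approach: the paper presents the corollary as an immediate consequence of equality (1) in \cref{proposition:influence:alternative_definitions_of_inf}, and your swap-of-summation argument is exactly how that implication goes through. Your explicit remark that equality (1) extends to real-valued $f$ is a useful observation, since the corollary is stated in that generality while \cref{proposition:influence:alternative_definitions_of_inf} is nominally only for Boolean $f$; the paper leaves this point implicit.
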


\subsection{Random minors and the pull-back distribution}

Our ultimate goal in this section is to understand the relation between coordinate \kl{influence} in a function and its random minor in attempt to construct hardness reductions from \kl{Label Cover} problems to $\PCSP$s. Clearly, this relation heavily depends on how we draw the random minor map. There are several possibilities; we could draw from a uniform distribution over the space of all minors. This is the most general approach and therefore difficult to analyze. On the other hand, in \cref{sec:positive_polynomials} we were looking for minors satisfying certain properties by drawing a random minor uniformly from all possibilities of a fixed arity. Despite how this setting proved useful in the construction of a specific minor, the fixed arity renders it useless in the context of reductions.

The middle ground between these two is drawing from a uniform distribution over all $d$-to-$1$ minors. This approach imposes structural limitations over the result, while retaining enough strength for \kl{Label Cover} reductions to be feasible, as we have seen in \cref{sec:random_condition} in example of the \kl{random 2-to-1 condition}. Analysis of influence in random $2$-to-$1$ minor was pioneered in \cite{braverman_et_al:LIPIcs.ITCS.2021.27} in order to prove the equivalence of \kl{Unique Games Conjecture} and \kl{Rich 2-to-1 Conjecture} (with imperfect completeness). They showed that if a function over unbiased cube has its Fourier coefficients concentrated on \kl{low-degree} coefficients and its random $2$-to-$1$ minor has a coordinate with significant \kl{influence}, then it almost certainly must have been obtained by identification of a coordinate with significant influence in the original function:

\begin{lemma}[Lemma 23 in \cite{braverman_et_al:LIPIcs.ITCS.2021.27} for Boolean functions]\label{lemma:influence:lemma23}
    Suppose $\delta,\zeta > 0$ and $d \geq 1$ is an integer. There are positive constants $\gamma = \gamma(\delta, \zeta)$ and $\tau = \tau(d, \delta, \zeta)$ such that the following holds. Suppose $f : \{0,1\}^{2n} \to [0,1]$ is a function such that $||f^{>d}||^2 \leq \gamma$. Then
    \[
        \Pr_{\pi} \bigg[ \exists \, j \in [n] : \Inf[(1/2)]{f^\pi, j} \geq \delta \text{ and } \max_{i \in \pi^{-1}(j)} \Inf[(1/2)]{f^{\leq d}, i} \leq \tau \bigg] \leq \zeta,
    \]
    where $\pi: [2n] \to [n]$ is a uniformly random $2$-to-$1$ minor map. 
\end{lemma}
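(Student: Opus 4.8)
The plan is to prove the contrapositive-flavored statement directly: conditioned on a random $2$-to-$1$ minor map $\pi$, I want to rule out (with probability $\geq 1-\zeta$) the bad event that $f^\pi$ has a coordinate $j$ of influence $\geq \delta$ whose fibre under $\pi$ contains only coordinates with tiny $\Inf^{(1/2)}[f^{\leq d},\cdot]$. The natural object to work with is the \kl{pull-back distribution}: thinking of $\pi$ as identifying pairs of coordinates, the minor $f^\pi$ on $\{0,1\}^n$ has $\Inf^{(1/2)}[f^\pi, j] = p(1-p)\cdot\Pr[f^\pi(\tuple y)\neq f^\pi(\tuple y\oplus j)]$, and $f^\pi(\tuple y\oplus j)$ flips \emph{both} coordinates of the fibre $\pi^{-1}(j) = \{i_1,i_2\}$ simultaneously in the corresponding $\tuple x$. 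So high influence of $j$ in $f^\pi$ is equivalent to: flipping $x_{i_1}$ and $x_{i_2}$ together changes $f$ with noticeable probability, where $\tuple x$ is drawn from the pull-back of $\cube{1/2, n}$ under $\pi$ — which on the relevant $2n$ coordinates is just $\cube{1/2, 2n}$ restricted so that paired coordinates are \emph{forced equal}.

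\textbf{Step 1: reduce to the low-degree part.} First I would replace $f$ by $f^{\leq d}$ throughout. Since $\|f^{>d}\|^2 \leq \gamma$, the influence of any coordinate (and any "double-flip influence") of $f^{>d}$ under any product-ish distribution is controlled by $\gamma$; more carefully one uses that $\Inf[f^\pi,j]$ decomposes across the Fourier expansion and the high-degree contribution to each such quantity sums to something $O(\gamma)$-ish after accounting for the pull-back. The point is that $\Inf^{(1/2)}[f^\pi,j]$ and $\Inf^{(1/2)}[(f^{\leq d})^\pi,j]$ differ by a quantity whose expectation over $\pi$ is $O(\gamma)$, so by Markov they differ by at most $\delta/2$ except on a $\zeta/2$-fraction of $\pi$'s. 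Hence it suffices to prove the statement for a degree-$\leq d$ function $g := f^{\leq d}$ with $\delta$ replaced by $\delta/2$ and $\zeta$ by $\zeta/2$, and now $g$ is a bounded-degree polynomial, which makes everything quantitative.

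\textbf{Step 2: bound the double-flip influence of $g$ by the single-coordinate influences in the fibre.} This is the heart of the argument. For a degree-$\leq d$ function, $g(\tuple x) - g(\tuple x \oplus \{i_1,i_2\})$ is itself a low-degree object, and expanding it one sees that the $L^2$-norm of the double-flip difference is bounded in terms of $\sum_{S\cap\{i_1,i_2\}\neq\emptyset}\hat g(S)^2 = \Inf[g,i_1] + \Inf[g,i_2] - \sum_{S\supseteq\{i_1,i_2\}}\hat g(S)^2$, i.e.\ controlled by $\Inf[g,i_1]+\Inf[g,i_2]$. More precisely, if both $\Inf^{(1/2)}[g,i_1]$ and $\Inf^{(1/2)}[g,i_2]$ are $\leq \tau$, then the $L^2$-mass of Fourier characters touching $\{i_1,i_2\}$ is $\leq 2\tau$, and since the pull-back distribution (forcing $x_{i_1}=x_{i_2}$) is not too far from the product distribution on the low-degree spectrum — this is where degree $\leq d$ is essential and where the constant $\tau$ must be allowed to depend on $d$ — one gets $\Inf^{(1/2)}[g^\pi,j] \leq C(d)\cdot\tau < \delta/2$ for $\tau = \tau(d,\delta,\zeta)$ small enough. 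So the bad event \emph{for $g$} — a coordinate $j$ with $\Inf[g^\pi,j]\geq\delta/2$ but both fibre coordinates of influence $\leq\tau$ — is in fact \emph{empty}, not just rare.

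Combining: the only bad $\pi$'s are those on which Step 1's approximation fails, which is at most a $\zeta$-fraction. The main obstacle I anticipate is Step 2's quantitative comparison between the pull-back distribution and the product distribution: forcing two coordinates equal is a genuine change of measure, and one must argue that on degree-$\leq d$ functions this only distorts norms by a factor depending on $d$ — this is presumably where a hypercontractivity-type or direct combinatorial estimate on $p$-biased Fourier characters is needed, and getting the dependence of $\tau$ on $d$ (rather than just on $\delta,\zeta$) right is the delicate point, matching the asymmetry in the constants stated in the lemma. A secondary subtlety is bookkeeping the pull-back over the \emph{random} $\pi$: since $\pi$ partitions $[2n]$ into pairs, the events "$\{i_1,i_2\}$ is a pair of $\pi$" are not independent, so Step 1's Markov argument should be phrased in terms of $\EX_\pi[\sum_j (\text{high-degree contribution to }\Inf[f^\pi,j])]$, which telescopes back to $\|f^{>d}\|^2$ by linearity once one writes each $\Inf[f^\pi,j]$ via \cref{proposition:influence:alternative_definitions_of_inf}.
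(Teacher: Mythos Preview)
Your Step 2 contains a genuine gap, and it is the crux of the matter. You claim that once we pass to the degree-$\leq d$ function $g = f^{\leq d}$, the bad event becomes \emph{empty}: if $\Inf[g,i_1], \Inf[g,i_2] \leq \tau$ then $\Inf[g^\pi,j] \leq C(d)\cdot\tau$ deterministically. This is false, because $\Inf[g^\pi,j]$ depends on the \emph{entire} matching $\pi$, not just on the pair $\pi^{-1}(j)=\{i_1,i_2\}$. When you write that the pull-back ``forces $x_{i_1}=x_{i_2}$'', you are forgetting that for a fixed $\pi$ \emph{all} $n$ pairs are forced equal simultaneously. Concretely: take $g$ with $\hat g(\{i_1,a_k,b_k\})=c$ for $k=1,\dots,n-1$ and all other nontrivial coefficients zero. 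Then $\Inf[g,i_1]=(n-1)c^2$ can be made equal to $\tau$, but for the specific $\pi$ that pairs each $a_k$ with $b_k$, every term $\chi_{\{i_1,a_k,b_k\}}$ collapses to $\chi_j$, giving $\Inf[g^\pi,j]=(n-1)^2c^2=(n-1)\tau$. So the blow-up can be polynomial in $n$, not $C(d)$, and no choice of $\tau$ depending only on $(d,\delta,\zeta)$ makes the bad event empty.

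Relaxing Step 2 to a first-moment bound does not rescue the argument either. One can indeed show $\EX_\pi[\Inf[g^\pi,j]\mid \pi^{-1}(j)=\{i_1,i_2\}] = O_d(\tau)$, but Markov then only gives $\Pr[\Inf[g^\pi,j]\geq\delta/2] = O_d(\tau/\delta)$, a constant; the union bound over the $n$ output coordinates $j$ kills you. This is precisely why the actual proof (which the paper does not reprove but summarizes from \cite{braverman_et_al:LIPIcs.ITCS.2021.27}) takes a different route: it embeds the space of $2$-to-$1$ maps into the symmetric group $S_{2n}$, views $\pi \mapsto \Inf[f^\pi,j]$ as a function on $S_{2n}$, and invokes the hypercontractive inequalities of \cite{Filmus_Kindler_Lifshitz_Minzer_2024} to bound its higher moments. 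Only with such concentration (tail probability decaying fast enough to beat the factor $n$ in the union bound) does the argument close. Your Step~1 reduction to low degree is reasonable and in the right spirit, but the heart of the lemma --- controlling the fluctuations of $\Inf[g^\pi,j]$ over the random matching --- requires genuinely non-elementary input that your proposal does not supply.
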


The proof of \cref{lemma:influence:lemma23} depends on deep analytical results for functions on symmetric groups. The brief idea behind it is that the space of $2$-to-$1$ minor maps can be embedded into the symmetric group of $2n$ elements, and therefore the function of influence in the minor can be viewed as a function on symmetric group. Using hypercontractive results of \cite{Filmus_Kindler_Lifshitz_Minzer_2024}, it is then possible to bound higher moments of this function in order to show a tight concentration around the expectation. After that, the result follows from a union-bound argument.

We propose a different result, which states that significant coordinate influence over the $p$-\kl{biased distribution} (with $p$ bounded away from $0$ and $1$) is preserved with constant probability, as long as the total influence of the original function is not too large.

\influencepreservation*

Compared to \cref{lemma:influence:lemma23}, our result has two drawbacks. First, and most notably, \cref{lemma:influence:lemma23} states that, with high probability, all influential coordinates of the minor must have been obtained from influential coordinates in the original function, but our result allows us to focus only on one coordinate with no control over the rest at all. Secondly, \cref{proposition:influence:our_main_result} only asserts a constant (but possibly very small) probability that the desirable event will occur, while \cref{lemma:influence:lemma23} states that the event is almost certain. However, these two issues do not pose any obstacle in our hardness reduction applications, as we will see later in the proof of \cref{theorem:hardness_for_ptfs_with_large_influences}.

Furthermore, the proof of \cref{proposition:influence:our_main_result} is elementary and avoids any hypercontractivity or analysis of functions on symmetric groups, and generalizes the assumptions of \cref{lemma:influence:lemma23} in two ways: (1) the setting of unbiased distribution is generalized to $p$-biased distributions, and (2) the assumption of bounded total influence is a direct weakening of the assumption of \kl{low-degree concentration}, as the \kl{total influence} of a function $f: \{0,1\}^{2n} \to [0,1]$ satisfying $||f^{>d}||^2 \leq \gamma$ is at most $2n\gamma + d$ due to \cref{corollary:influence:total_influence_formula}.

In order to analyze the influence of a random minor, we utilize the notion of \textit{pull-back distribution}  (originally introduced in \cite{braverman_et_al:LIPIcs.ITCS.2021.27}), which is the distribution over $\{0,1\}^{2n}$ obtained by drawing a random $2$-to-$1$ map $\pi$, an element $\tuple x$ of $\{0,1\}^n$ over a $p$-biased distribution and then ``pulling'' $\tuple x$ back through $\pi^{-1}$.

\begin{definition}[\intro{Pull-back distribution}]
    Suppose $p \in (0,1)$ and $n \in \mathbb{N}_+$. The \textit{pull-back distribution} $\pullback{p, 2n}$ over $\{0,1\}^{2n}$ is defined by the following process: draw a uniformly random map $\pi : [2n] \to [n]$, a tuple $\tuple x \sim \cube{p, n}$ and return $\tuple z = \pi^{-1}(\tuple x)$.
\end{definition}

Similarly to biased distributions, we sometimes write $\pullback{p}$ instead of $\pullback{p, 2n}$, if the dimension is clear from the context. The usefulness of \kl{pull-back distribution} arises from its clear connection with the process of taking a random minor, and then evaluating it in $p$-biased distribution. We want to analyze how similar $\pullback{p, 2n}$ and $\cube{p, 2n}$ are. It is clear that they are not the same: the support of the former consists only of elements with an even number of ones (and an even number of zeros). However, it turns out that this is the only case where an element $\tuple z \in \{0,1\}^{2n}$ has a smaller measure in the pull-back distribution than in the classic biased distribution.

\begin{lemma}\label{lemma:influence:density_larger_in_pullback}
    There exists a universal constant $C > 0$ such that for all $p \in (0,1)$, $n \in \mathbb{N}_+$ and $\,\tuple z \in \{0,1\}^{2n}$ with an even number of ones, we have
    \[
        \pullback{p}(\tuple z) \geq C \cdot \cube{p}(\tuple z).
    \]
\end{lemma}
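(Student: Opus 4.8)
The plan is to derive an exact closed form for $\pullback{p}(\tuple z)$ and then compare it with $\cube{p}(\tuple z)$ factor by factor. The key observation is that drawing a sample from $\pullback{p}$ amounts to the following two‑stage process: first draw a uniformly random perfect matching $M$ on $[2n]$ (this is exactly the unordered collection of fibers of a uniformly random $2$‑to‑$1$ map $\pi$ — the labelling of the fibers by $[n]$ is irrelevant because $\cube{p,n}$ is exchangeable), and then colour each of the $n$ edges of $M$ independently, giving value $1$ to both of its endpoints with probability $p$ and $0$ to both with probability $1-p$. Fix $\tuple z$ and let $O$ be its set of ones; by hypothesis $|O| = 2k$ for some $0 \le k \le n$, and its set of zeros $Z$ has size $2n-2k$. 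The coloured matching produces exactly $\tuple z$ iff every edge of $M$ lies inside $O$ or inside $Z$, and in that case the colouring is forced, with $k$ edges inside $O$ (all coloured $1$) and $n-k$ edges inside $Z$ (all coloured $0$), contributing a factor $p^{k}(1-p)^{n-k}$. Counting the matchings that respect $\{O,Z\}$, of which there are $(2k-1)!!\,(2n-2k-1)!!$ out of the total $(2n-1)!!$, I obtain
\[
    \pullback{p}(\tuple z) \;=\; \frac{(2k-1)!!\,(2n-2k-1)!!}{(2n-1)!!}\cdot p^{k}(1-p)^{n-k}.
\]

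Since $\cube{p}(\tuple z) = p^{2k}(1-p)^{2n-2k}$, cancelling $p^k(1-p)^{n-k}$ reduces the lemma to the uniform bound $\tfrac{(2k-1)!!\,(2n-2k-1)!!}{(2n-1)!!} \ge C\cdot p^{k}(1-p)^{n-k}$ over all $0\le k\le n$ and $p\in(0,1)$. For this I would use two elementary facts. First, $p\mapsto p^{k}(1-p)^{n-k}$ is maximised at $p = k/n$, so its value is at most $k^{k}(n-k)^{n-k}/n^{n}$ (with the convention $0^0 = 1$). Second, writing $(2m-1)!! = (2m)!/(2^{m}m!)$ and applying Robbins' sharpened Stirling bounds to each of the three double factorials, the polynomial corrections $\sqrt{2\pi m}$ and the powers of $2$ and $e$ all cancel, leaving
\[
    \frac{(2k-1)!!\,(2n-2k-1)!!}{(2n-1)!!} \;=\; \sqrt{2}\cdot\frac{k^{k}(n-k)^{n-k}}{n^{n}}\cdot e^{\theta},\qquad |\theta|\le \tfrac14,
\]
for $1\le k\le n-1$; combined with the first fact this yields the claim with any $C\le\sqrt{2}\,e^{-1/4}$, and a slightly finer accounting of the Stirling error shows $C=1$ is admissible. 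The boundary cases $k\in\{0,n\}$ are immediate, since there $\pullback{p}(\tuple z)/\cube{p}(\tuple z)$ equals $(1-p)^{-n}$ or $p^{-n}$, each at least $1$.

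The only delicate point is the uniformity of $C$ in $n$. A purely combinatorial route — bounding $\binom{2n}{2k}$ by the largest term of its Vandermonde expansion $\sum_{j}\binom{n}{j}\binom{n}{2k-j}\le (n+1)\binom{n}{k}^{2}$, which only gives $\pullback{p}(\tuple z)\ge \tfrac{1}{n+1}\cube{p}(\tuple z)$ — is too lossy. The reason is that the double‑factorial ratio $\tfrac{(2k-1)!!(2n-2k-1)!!}{(2n-1)!!}$ tracks the entropy factor $k^{k}(n-k)^{n-k}/n^{n}$ up to an absolute multiplicative constant, the $\sqrt{2\pi m}$ corrections conspiring to cancel; making this cancellation rigorous is exactly what forces the use of second‑order (Robbins) Stirling estimates rather than any crude binomial inequality, and everything else is routine bookkeeping.
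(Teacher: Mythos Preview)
Your proof is correct and follows essentially the same route as the paper: both compute $\pullback{p}(\tuple z)$ explicitly (your double-factorial ratio $\tfrac{(2k-1)!!\,(2n-2k-1)!!}{(2n-1)!!}$ is exactly $\binom{n}{k}/\binom{2n}{2k}$, which is how the paper writes it), bound this ratio below via Stirling by a universal constant times $k^{k}(n-k)^{n-k}/n^{n}$, and then observe that $p\mapsto p^{k}(1-p)^{n-k}$ is maximised at $p=k/n$. The only differences are cosmetic: you phrase the counting via perfect matchings and use Robbins-type explicit error bounds, whereas the paper counts consistent maps directly and uses asymptotic Stirling.
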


\begin{proof}
    Fix numbers $p \in (0,1)$, $n \in \mathbb{N}_+$ and a tuple $\tuple z \in \{0,1\}^{2n}$ with $2k$ ones for some $k \in \{0, 1, \dots, n\}$. First, we calculate the probability that for a uniformly chosen 2-to-1 map $\pi : [2n] \to [n]$, the tuple $\tuple z$ is consistent with $\pi$, i.e. that $\tuple z$ is in the image of $\tuple x \mapsto \pi^{-1}(\tuple x)$ (in other words, that $\pi$ pairs ones with ones and zeros with zeros in $\tuple z$). Let $\mathcal{S}$ be the set of all 2-to-1 maps. Every map is consistent with ${n \choose k}$ tuples, and hence there are $|\mathcal{S}| \cdot {n \choose k}$ pairs $(\pi, \tuple z)$ such that $\tuple z$ is consistent with $\pi$. By symmetry, every tuple is consistent with the same number of maps, equal to $|\mathcal{S}| \cdot {n \choose k} / {2n \choose 2k}$. Dividing by the number of all maps, we obtain:
    \[
        \forall \, \tuple z \in \{0,1\}^{2n} : \Pr_\pi\big[\tuple z \, \text{ is consistent with } \pi\big] = \frac{{n \choose k}}{{2n \choose 2k}}.
    \]
    We obtain the equality $\pullback{p}(\tuple z) = p^k \cdot (1-p)^{n-k} \cdot {n \choose k}/{2n \choose 2k}$. If $k = 0$ or $k = n$, then the lemma statement clearly holds for $C = 1$. Otherwise, we use Stirling's approximation $n! \sim \sqrt{2\pi n}(n/e)^n$ to bound the ratio of binomial coefficients:
    \begin{align*}
        \frac{{n \choose k}}{{2n \choose 2k}} = \frac{n! \cdot (2k)! \cdot (2n-2k)!}{k! \cdot (n-k)! \cdot (2n)!} \sim \frac{\sqrt{4nk(n-k)} \cdot n^n \cdot (2k)^{2k} \cdot (2n-2k)^{2n-2k}}{\sqrt{2nk(n-k)} \cdot k^k \cdot (n-k)^{n-k} \cdot (2n)^{2n}} \geq C \cdot \frac{k^k \cdot (n-k)^{n-k}}{n^n},
    \end{align*}
    where $C$ is a universal positive constant. Moving forward, we want to show that 
    \[
        \frac{k^k \cdot (n-k)^{n-k}}{n^n} \geq p^k \cdot (1-p)^{n-k}.  \tag{{\color{magenta}$*$}}\label{eq:influence:stirling_approximation}
    \]
    To this end, we consider the derivative of $\zeta: p \mapsto p^k \cdot (1-p)^{n-k}$ for $p \in (0,1)$. Simple calculations show that $\zeta'$ is positive in the range $(0, k/n)$, negative in the range $(k/n, 1)$, and equal to $0$ in $k/n$. Therefore, $\zeta$ attains its maximum in the range $(0,1)$ for $p = k/n$, in which case \eqref{eq:influence:stirling_approximation} becomes an equality. Finally, we obtain the desired inequality:
    \[
        \pullback{p}(\tuple z) = p^k \cdot (1-p)^{n-k} \cdot \frac{{n \choose k}}{{2n \choose 2k}} \geq C \cdot  p^{2k} \cdot (1-p)^{2n-2k} = C \cdot \cube{p}(\tuple z). \qedhere
    \]
\end{proof}

We note that the inverse of \cref{lemma:influence:density_larger_in_pullback} (with a different constant) is known to be true if $p = 1/2$ and $\tuple x$ is \textit{roughly balanced} (see \cite[Lemma 27]{braverman_et_al:LIPIcs.ITCS.2021.27}). In fact, we believe that it can be generalized to any $p \in (0,1)$ with an appropriate definition of roughly balanced inputs; we leave it as an exercise for the curious reader.

\subsection{Proof of \cref{proposition:influence:our_main_result}}

This subsection is devoted to the proof of \cref{proposition:influence:our_main_result}. We start with the following lemma, which states that identification of two coordinates $i, j \in [m]$ of an $m$-ary function produces a minor in which the \kl{influence} of the obtained coordinate is controlled. For a simpler presentation, we assume that indices of the identified coordinates are $m-1$ and $m$, but this clearly is without loss of generality.

\begin{lemma}\label{lemma:influence:gluing_first_pair}
    Suppose that $p \in (0,1)$ and $f : \{0,1\}^m \to \{0,1\}$ is a Boolean function. Let $\pi : [m] \to [m-1]$ be a minor map such that $\pi(m-1) = \pi(m) = m-1$ and $\pi(i) = i$ for every $i \in [m-2]$. Then 
    \[
        \Inf[(p)]{f^\pi, m-1} \geq \min(p,1-p) \cdot \Inf[(p)]{f, m-1} - \frac{1}{\min(p, 1-p)} \cdot \Inf[(p)]{f, m}.
    \]
\end{lemma}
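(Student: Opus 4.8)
The plan is to translate both sides of the inequality into the combinatorial form of influence provided by part (3) of \cref{proposition:influence:alternative_definitions_of_inf} and then run a short triangle-inequality argument on the Boolean values of $f$. Write $\tuple w = (x_1,\dots,x_{m-2})$ for the first $m-2$ coordinates and, for $a,b \in \{0,1\}$, abbreviate $f_{ab} = f(\tuple w, a, b)$; each $f_{ab}$ is thus a random variable once $\tuple w \sim \cube{p,m-2}$ is fixed. Since $f^\pi$ feeds its last input into both the $(m-1)$-th and $m$-th slots of $f$, flipping the $(m-1)$-th coordinate of $f^\pi$ turns $f_{00}$ into $f_{11}$ and vice versa, so \cref{proposition:influence:alternative_definitions_of_inf}(3) gives
\[
  \Inf[(p)]{f^\pi, m-1} = p(1-p)\cdot \Pr_{\tuple w}\!\big[f_{00}\neq f_{11}\big].
\]
Conditioning on the value of the $m$-th (respectively $(m-1)$-th) coordinate, the same proposition yields
\[
  \Inf[(p)]{f, m-1} = p(1-p)\big[(1-p)P_0 + pP_1\big], \qquad \Inf[(p)]{f, m} = p(1-p)\big[(1-p)R_0 + pR_1\big],
\]
where $P_b = \Pr_{\tuple w}[f_{0b}\neq f_{1b}]$ and $R_a = \Pr_{\tuple w}[f_{a0}\neq f_{a1}]$.

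Next I would fix $\tuple w$ and exploit that $x \mapsto |x|$ is a metric on $\{0,1\}$, so for Boolean values $\mathds{1}[u \neq w] \ge \mathds{1}[u \neq v] - \mathds{1}[v \neq w]$ for any intermediate $v$. Applying this along the two length-two paths from $f_{00}$ to $f_{11}$ --- one routed through $f_{10}$, the other through $f_{01}$ --- and taking the convex combination of the two resulting inequalities with weights $1-p$ and $p$, one obtains
\[
  \mathds{1}[f_{00}\neq f_{11}] \ge (1-p)\mathds{1}[f_{00}\neq f_{10}] + p\,\mathds{1}[f_{01}\neq f_{11}] - (1-p)\mathds{1}[f_{10}\neq f_{11}] - p\,\mathds{1}[f_{00}\neq f_{01}].
\]
Taking expectation over $\tuple w \sim \cube{p,m-2}$ and noting that the first two terms reassemble exactly into $(1-p)P_0 + pP_1 = \Inf[(p)]{f,m-1}/(p(1-p))$, this becomes
\[
  \Pr_{\tuple w}[f_{00}\neq f_{11}] \ge \frac{\Inf[(p)]{f,m-1}}{p(1-p)} - \big[(1-p)R_1 + pR_0\big].
\]

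The last step is to control the leftover term $(1-p)R_1 + pR_0$; the mild annoyance here is that this is the ``wrong'' convex combination of $R_0$ and $R_1$ and hence not a multiple of $\Inf[(p)]{f,m}$. I would bound it crudely: since $p, 1-p \le 1$ we have $(1-p)R_1 + pR_0 \le R_0 + R_1$, while $(1-p)R_0 + pR_1 \ge \min(p,1-p)(R_0+R_1)$, so altogether $(1-p)R_1 + pR_0 \le \tfrac{1}{\min(p,1-p)}\cdot \tfrac{\Inf[(p)]{f,m}}{p(1-p)}$. Substituting this and multiplying through by $p(1-p)$ gives
\[
  \Inf[(p)]{f^\pi, m-1} \ge \Inf[(p)]{f, m-1} - \frac{1}{\min(p,1-p)}\,\Inf[(p)]{f, m},
\]
which, since $\min(p,1-p) \le 1$, is even slightly stronger than the stated bound (one may freely insert a factor $\min(p,1-p)$ in front of the first term). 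The only points that need care are choosing the weights $1-p, p$ in the averaging step --- so that the favourable terms recombine precisely into $\Inf[(p)]{f,m-1}$ rather than into an unrelated combination --- and the crude $R_0 + R_1$ estimate for the error term; the rest is bookkeeping with \cref{proposition:influence:alternative_definitions_of_inf}.
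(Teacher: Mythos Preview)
Your proof is correct and indeed proves the slightly sharper inequality $\Inf[(p)]{f^\pi, m-1} \ge \Inf[(p)]{f, m-1} - \tfrac{1}{\min(p,1-p)}\Inf[(p)]{f, m}$; the stated bound follows since $\min(p,1-p)\le 1$. Your route is genuinely different from the paper's. The paper does not condition on the individual values of the last two bits; instead it partitions $\{0,1\}^m$ into $\mathbf{Same}=\{x_{m-1}=x_m\}$ and $\mathbf{Diff}$, shows that a $\min(p,1-p)$-fraction of the $\mu_p$-mass of the pivotal set $\mathbf{Piv}(m-1)$ already lies in $\mathbf{Same}$, and then compares $\mu_p$ to the push-forward/pull-back distributions $\beta_0,\beta_1$ induced by $\pi$ in order to bound $\Inf[(p)]{f^\pi,m-1}$ from below by $p(1-p)\big(\beta_0(\mathbf{Piv}(m-1)\cap\mathbf{Same})-\beta_1(\mathbf{Piv}(m))\big)$. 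This is where the extra $\min(p,1-p)$ factor on the positive term originates.

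Your argument is more elementary: by writing everything as $P_b,R_a$ and applying the one-line triangle inequality $\mathds{1}[f_{00}\neq f_{11}]\ge \mathds{1}[f_{00}\neq f_{1b}]-\mathds{1}[f_{1b}\neq f_{11}]$ along the two length-two paths with the correctly chosen weights $1-p,\,p$, you recover $\Inf[(p)]{f,m-1}$ exactly on the positive side without any loss, and all the slack is pushed into the error term, which you then bound crudely. The paper's set-based argument is perhaps more suggestive of how it will be used downstream (comparing $\mu_p$ to the distributions seen by the minor), but your approach is shorter, avoids any distribution comparison, and yields the better constant.
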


\begin{proof}
    Fix $p \in (0,1)$ and a function $f : \{0,1\}^m \to \{0,1\}$. Let $\pi$ be the minor map as in the lemma statement. We assume that $p \leq 1/2$ --- the opposite case is symmetric. We distinguish several subsets of the space of all inputs:
    \begin{gather*}
        \forall \, i \in \{m-1, m\} : \mathbf{Piv}(i) = \Big\{ \tuple x \in \{0,1\}^m : f(\tuple x) \neq f(\tuple x \oplus i) \Big\} \\ 
        \mathbf{Same} = \Big\{ \tuple x \in \{0,1\}^m : x_{m-1} = x_m \Big\} \\ 
        \mathbf{Diff} = \{0,1\}^m \setminus \mathbf{Same}
    \end{gather*}
    The lemma will follow from several claims regarding measures of these sets. \cref{proposition:influence:alternative_definitions_of_inf} yields the first equality:
    \[
        \cube{p}(\mathbf{Piv}(m)) = \frac{1}{p(1-p)} \cdot \Inf[(p)]{f, m}. \tag{{\color{magenta}$*$}} \label{eq:influence:b_is_small}
    \]
    Next, we want to show that $\mathbf{Piv}(m-1) \cap \mathbf{Same}$ is a significant part of $\mathbf{Piv}(m-1)$, in terms of measure. Observe that if $\tuple x \in \mathbf{Piv}(m-1)$, then $(x \oplus m-1) \in \mathbf{Piv}(m-1)$ too, and exactly one element of $\{\tuple x, \tuple x \oplus m-1 \}$ is in $\mathbf{Same}$, while the other is in $\mathbf{Diff}$. Using $\cube{p}(\tuple x \oplus m-1) \geq (p/1-p) \cdot \cube{p}(\tuple x)$ (because $p \leq 1/2$), we obtain that
    \[
        \cube{p}(\mathbf{Piv}(m-1) \cap \mathbf{Same}) \geq \frac{p}{1-p} \cdot \cube{p}(\mathbf{Piv}(m-1) \cap \mathbf{Diff}).
    \]
    From the fact that $\mathbf{Same} \sqcup \mathbf{Diff} = \{0,1\}^m$, it further follows that 
    \[
        \cube{p}(\mathbf{Piv}(m-1)) = \cube{p}(\mathbf{Piv}(m-1) \cap \mathbf{Same}) + \cube{p}(\mathbf{Piv}(m-1) \cap \mathbf{Diff}) \leq \frac{1}{p} \cdot \cube{p}(\mathbf{Piv}(m-1) \cap \mathbf{Same}).
    \]
    The above inequality combined with \cref{proposition:influence:alternative_definitions_of_inf} yields the following bound:
    \[
        \cube{p}(\mathbf{Piv}(m-1) \cap \mathbf{Same}) \geq p \cdot \cube{p}(\mathbf{Piv}(m-1)) \geq \Inf[(p)]{f, m-1}.  \tag{{\color{magenta}$**$}}\label{eq:influence:a_cap_s_is_large}
    \]
    Let $\beta_0$ and $\beta_1$ be respective distributions over $\{0,1\}^m$ of $\pi^{-1}(\tuple x)$ and $(\pi^{-1}(\tuple x) \oplus m-1)$, given that $\tuple x \sim~\cube{p, m-1}$. We can bound the influence of the coordinate $m-1$ in $f^\pi$ as follows:
    \begin{align*}
        \Inf[(p)]{f^\pi, m-1} &= p(1-p) \cdot \Pr_{\tuple z \sim \beta_0} \Big[ f(\tuple z) \neq f(\tuple z \oplus m-1 \oplus m) \Big]
         \\
         &\geq p(1-p) \cdot \Pr_{\tuple z \sim \beta_0} \Big[ f(\tuple z) \neq f(\tuple z \oplus m-1) \text{ and } f(\tuple z \oplus m-1) = f(\tuple z \oplus m-1 \oplus m) \Big].
    \end{align*}
    Using $\Pr[A \cap B] \geq \Pr[A] -\Pr[\neg B]$, we obtain that the value above is at least
    \[
        p(1-p) \cdot \left( \Pr_{\tuple z \sim \beta_0} \Big[ f(\tuple z) \neq f(\tuple z \oplus m-1)  \Big] - \Pr_{\tuple y \sim \beta_1} \Big[ f(\tuple y) \neq f(\tuple y \oplus m) \Big] \right).
    \]
    Observe that (1) if $\tuple x \in \mathbf{Same}$, then $\beta_0(\tuple x) \geq 1/(1-p) \cdot \cube{p}(\tuple x)$, and (2) for every $\tuple x \in \{0,1\}^m$, we have $\beta_1(\tuple x) \leq (1/p) \cdot \cube{p}(\tuple x)$.
    We finish the proof by plugging these two observations along with \eqref{eq:influence:b_is_small} and \eqref{eq:influence:a_cap_s_is_large} into the above inequality:
    \begin{align*}
        \Inf[(p)]{f^\pi, m-1} &\geq p(1-p) \cdot \left( \Pr_{\tuple z \sim \beta_0} \Big[ f(\tuple z) \neq f(\tuple z \oplus m-1)  \Big] - \Pr_{\tuple y \sim \beta_1} \Big[ f(\tuple y) \neq f(\tuple y \oplus m) \Big] \right) \\ 
        &= p(1-p) \cdot \Big( \beta_0(\mathbf{Piv}(m-1) \cap \mathbf{Same}) - \beta_1(\mathbf{Piv}(m)) \Big) \\ 
        &\geq p \cdot \cube{p}(\mathbf{Piv}(m-1) \cap \mathbf{Same}) - (1-p) \cdot \cube{p}(\mathbf{Piv}(m)) \\
        &\geq p \cdot \Inf[(p)]{f, m-1} - \frac{1}{p} \cdot \Inf[(p)]{f, m}. \qedhere
    \end{align*}
\end{proof}

Before we discuss the main proof of this section, we need one more technical lemma. As we shall see later, the \kl{pull-back distribution} is useful for modeling expected values of random $2$-to-$1$ minors. It will be particularly useful for us to deduce that if the expected value of a function is significant in the $p$-biased distribution, then it remains significant over the pull-back distribution. Unfortunately, this is not true for general functions: Consider the function $\mathsf{XOR} : \tuple x \mapsto \sum x_i \, (\text{mod } 2)$ of even arity. It is an easy exercise to show that the expected value of $\mathsf{XOR}$ approaches $1/2$ as its arity grows, over any $p$-biased distribution. However, the expected value of $\mathsf{XOR}$ over the \kl{pull-back distribution} is always $0$; this is because the support of pull-back distribution consists only of points with an even number of ones.

Despite this, it turns out that as long as the \kl{total influence} of the given function is not too large, the situation illustrated by $\mathsf{XOR}$ cannot happen.

\begin{lemma}\label{lemma:influence:l2_norm_remains_large_in_pull_back}
    For every $\lambda \in (0, 1/2)$ and $\delta > 0$, there are positive constants $\gamma = \gamma(\lambda, \delta)$ and $\beta = \beta(\lambda, \delta)$ such that the following holds. Suppose that $p \in (\lambda, 1-\lambda)$ and $h : \{0,1\}^{2n} \to \{0,1\}$ is a function such that $\EX_p[h] \geq \delta$ and $\I[(p)]{h} \leq \gamma \cdot n$. Then
    \[
        \EX_{\tuple z \sim \pullback{p}}\big[h(\tuple z)\big] \geq \beta.
    \]
\end{lemma}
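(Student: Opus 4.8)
The plan is to reduce the claim to a lower bound on the correlation of $h$ with parity over the $p$-biased cube. Let $\mathsf{par} : \{0,1\}^{2n} \to \{-1,1\}$ be given by $\mathsf{par}(\tuple x) = (-1)^{x_1 + \dots + x_{2n}}$, so that the indicator of having an even number of ones equals $\tfrac12\big(1 + \mathsf{par}(\tuple x)\big)$. Because $\pullback{p}$ is supported entirely on even-weight tuples and $h \geq 0$, \cref{lemma:influence:density_larger_in_pullback} (with its universal constant $C$) yields
\[
    \EX_{\tuple z \sim \pullback{p}}\!\big[h(\tuple z)\big] \;=\; \sum_{\tuple z \text{ even}} \pullback{p}(\tuple z)\, h(\tuple z) \;\geq\; C \sum_{\tuple z \text{ even}} \cube{p}(\tuple z)\, h(\tuple z) \;=\; \frac{C}{2}\Big(\EX_p[h] + \EX_p[h \cdot \mathsf{par}]\Big).
\]
Since $\EX_p[h] \geq \delta$, it therefore suffices to prove $\EX_p[h\cdot\mathsf{par}] \geq -\delta/2$, which gives the lemma with $\beta := C\delta/4$.

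To bound $\EX_p[h\cdot\mathsf{par}]$ I would expand $\mathsf{par}$ in the $p$-biased Fourier basis. Substituting $x_i = p + \sqrt{p(1-p)}\,\chi_i$ into $\mathsf{par} = \prod_i(1-2x_i)$ and multiplying out shows $\widehat{\mathsf{par}}(S) = (1-2p)^{2n-|S|}\big(-2\sqrt{p(1-p)}\big)^{|S|}$, so $\widehat{\mathsf{par}}(S)^2 = (a^2)^{2n-|S|}(b^2)^{|S|}$ with $a^2 = (1-2p)^2$, $b^2 = 4p(1-p)$, $a^2+b^2 = 1$, and $b^2 \geq 4\lambda(1-\lambda) =: b_0^2 > 0$ throughout the range $p\in(\lambda,1-\lambda)$. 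By the \kl{Parseval-Plancherel identity}, $\EX_p[h\cdot\mathsf{par}] = \hat h(\emptyset)\widehat{\mathsf{par}}(\emptyset) + \sum_{S\neq\emptyset}\hat h(S)\widehat{\mathsf{par}}(S)$, and the level-$0$ term equals $\EX_p[h]\cdot(1-2p)^{2n} \geq 0$ (a non-negative mean times an even power), so it may simply be discarded — this is exactly the observation that removes any need for $n$ to be large.

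For the remaining sum I would apply Cauchy--Schwarz with weights $|S|$, invoking \cref{corollary:influence:total_influence_formula}:
\[
    \Big|\sum_{S\neq\emptyset}\hat h(S)\,\widehat{\mathsf{par}}(S)\Big| \;\leq\; \Big(\sum_S |S|\,\hat h(S)^2\Big)^{1/2}\Big(\sum_{S\neq\emptyset}\tfrac1{|S|}\,\widehat{\mathsf{par}}(S)^2\Big)^{1/2} \;=\; \big(\I[(p)]{h}\big)^{1/2}\cdot\Big(\sum_{S\neq\emptyset}\tfrac1{|S|}\,\widehat{\mathsf{par}}(S)^2\Big)^{1/2}.
\]
The one computation requiring care is the second factor. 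Since $\sum_{|S|=k}\widehat{\mathsf{par}}(S)^2 = \binom{2n}{k}(1-b^2)^{2n-k}(b^2)^k$, this factor equals $\EX_K\big[\tfrac1K\mathbf{1}[K\geq1]\big]$ for $K\sim\mathrm{Bin}(2n,b^2)$, and I would bound it using $\tfrac1k\leq\tfrac2{k+1}$ together with the identity $\EX[\tfrac1{K+1}] = \tfrac{1-(1-q)^{N+1}}{(N+1)q}$ for $K\sim\mathrm{Bin}(N,q)$, obtaining $\EX_K[\tfrac1K\mathbf{1}[K\geq1]] \leq \tfrac2{(2n+1)b^2} \leq \tfrac1{nb_0^2}$. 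Plugging this back in, the off-constant part of $\EX_p[h\cdot\mathsf{par}]$ is at most $\sqrt{\I[(p)]{h}}\cdot\tfrac1{\sqrt n\,b_0} \leq \sqrt{\gamma n}\cdot\tfrac1{\sqrt n\,b_0} = \sqrt\gamma/b_0$; choosing $\gamma := \lambda(1-\lambda)\delta^2$ makes this exactly $\delta/2$, so $\EX_p[h\cdot\mathsf{par}] \geq -\delta/2$ and the lemma follows with $\gamma = \lambda(1-\lambda)\delta^2$ and $\beta = C\delta/4$.

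The subtlety to stay alert to — and the reason the $\mathsf{XOR}$ example preceding the lemma is not an obstruction — is that any large contribution to $\langle h,\mathsf{par}\rangle$ is confined to the level-$0$ Fourier coefficient, which is non-negative and hence harmless; on every positive level the extra factor $|S|\geq 1$ already suppresses the Fourier mass of $\mathsf{par}$ enough that the bounded-total-influence hypothesis closes the argument. In particular no concentration inequality or lower bound on $n$ is needed, so the statement holds verbatim for all $n \geq 1$.
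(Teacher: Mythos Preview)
Your proof is correct and takes a genuinely different route from the paper's. Both arguments begin identically, using \cref{lemma:influence:density_larger_in_pullback} to reduce to lower-bounding the $\cube{p}$-mass of $h^{-1}(1)$ on even-weight points. From there the paper proceeds combinatorially: it splits into the cases $\cube{p}(\mathbf{A}\cap\mathbf{Even}) \geq \delta/2$ and $\cube{p}(\mathbf{A}\cap\mathbf{Odd}) \geq \delta/2$, and in the second case picks a single coordinate $i$ of influence at most $\gamma/2$ (which exists by averaging) and flips it to transport odd-weight mass into $\mathbf{A}\cap\mathbf{Even}$, losing only the small $\mathbf{Piv}(i)$ term. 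Your argument instead expands $\mathsf{par}$ in the $p$-biased Fourier basis, discards the non-negative level-$0$ contribution, and controls the remaining Plancherel sum by a weighted Cauchy--Schwarz against $\I[(p)]{h}$, reducing the second factor to a clean binomial expectation. The trade-off shows up in the constants: the paper obtains $\gamma$ linear in $\delta$ (namely $\gamma = \lambda(1-\lambda)\delta/2$) but a $\beta$ that picks up an extra factor of $\lambda/(1-\lambda)$, whereas your approach gives $\gamma = \lambda(1-\lambda)\delta^2$ (quadratic in $\delta$) but the cleaner $\beta = C\delta/4$. Your observation that the level-$0$ term is automatically non-negative because the exponent $2n$ is even is a nice touch, and your remark that no lower bound on $n$ is needed applies equally to the paper's proof.
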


\begin{proof}
    Fix $\lambda \in (0, 1/2)$, $\delta > 0$ and let $C$ be the universal constant from \cref{lemma:influence:density_larger_in_pullback}. We will show that the following constants satisfy the statement conditions:
    \[
        \gamma = \frac{\lambda(1-\lambda) \cdot \delta}{2}, \text{\hspace*{1cm}} \beta = \min \bigg\{ \frac{C}{2}, \frac{C \cdot \lambda}{4(1-\lambda)} \bigg\} \cdot \delta
    \]
    Suppose that $p \in (\lambda, 1-\lambda)$ and $h : \{0,1\}^{2n} \to \{0,1\}$ is a function as in the statement. Let $\mathbf{A} = h^{-1}(1)$ and let $\mathbf{Even} \sqcup \mathbf{Odd} =~\{0,1\}^{2n}$ be the partition of the input space into sets of points with an even and odd number of ones.
    
    We consider two cases. First, suppose that $\cube{p}(\mathbf{A \cap Even}) \geq \delta/2$. In this case, \cref{lemma:influence:density_larger_in_pullback} implies that 
    \[
        \EX_{\tuple z \sim \pullback{p}}\big[ h(\tuple z) \big] = \sum_{\tuple z \in \mathbf{Even}} \pullback{p}(\tuple z) \cdot h(\tuple z) = \pullback{p}(\mathbf{A \cap Even}) \geq C \cdot \cube{p}(\mathbf{A \cap Even}) \geq C \cdot \delta/2 \geq \beta,
    \]
    which proofs the claim. In the remaining case, we have $\cube{p}(\mathbf{A \cap Even}) < \delta/2$. As a consequence, $\cube{p}(\mathbf{A \cap Odd}) \geq \delta/2$.
    Fix any coordinate $i \in [2n]$ such that $\Inf[(p)]{h, i} \leq \gamma/2$, which must exist because of the bound on the total influence of $h$. Let $\mathbf{Piv}(i) \subseteq \{0,1\}^{2n}$ be the set of points $\tuple z$ such that $h(\tuple z) \neq h(\tuple z \oplus i)$. From \cref{proposition:influence:alternative_definitions_of_inf}, we have 
    \[
        \cube{p}(\mathbf{Piv}(i)) = \frac{1}{p(1-p)} \cdot \Inf[(p)]{h, i} \leq \frac{\gamma}{2\lambda(1-\lambda)}.
    \]
    Now observe that for every $\tuple z \in (\mathbf{A \cap Odd}) \setminus \mathbf{Piv}(i)$, the point $(\tuple z \oplus i)$ is in $\mathbf{A \cap Even}$. Therefore, we obtain that
    \begin{align*}
        \cube{p}(\mathbf{A \cap Even}) &\geq \sum \Big\{ \cube{p}(\tuple  z \oplus i) : \tuple z \in (\mathbf{A \cap Odd}) \setminus \mathbf{Piv}(i) \Big \}  \\ 
        &\geq \frac{\lambda}{1-\lambda} \cdot \cube{p}(\mathbf{A \cap Odd} \setminus \mathbf{Piv}(i)) \\ 
        &\geq \frac{\lambda}{1-\lambda} \cdot \left( \frac{\delta}{2} - \frac{\gamma}{2\lambda(1-\lambda)}\right) \\ 
        &= \frac{1}{1-\lambda} \cdot \frac{\delta}{4},
    \end{align*}
    where the last equality follows from plugging the value of $\gamma$. Finally, we apply \cref{lemma:influence:density_larger_in_pullback} one more time to obtain:
    \[
        \EX_{\tuple z \sim \pullback{p}} \big[ h(\tuple z) \big] = \pullback{p}(\mathbf{A \cap Even}) \geq C \cdot \cube{p}(\mathbf{A \cap Even}) \geq \frac{C \cdot \lambda}{4(1-\lambda)} \cdot \delta \geq \beta. \qedhere
    \]
\end{proof}

All that is left to prove \cref{proposition:influence:our_main_result} is to combine the previous two results of this section. The brief idea is that we split the process of drawing the random 2-to-1 minor map into two steps. First, we choose the coordinate with which the influential coordinate is identified. Thanks to \cref{lemma:influence:gluing_first_pair}, we deduce that the obtained coordinate has significant \kl{influence}. In the second step, we randomly identify all the other coordinates. This is where \cref{lemma:influence:l2_norm_remains_large_in_pull_back} will come in handy: as we shall see, the expected influence in the resulting minor can be modeled using the \kl{pull-back distribution}.

\influencepreservation*

\begin{proof}
    Fix any $\lambda \in (0,1/2)$ and $\delta > 0$. Instead of providing explicit values of constants $\gamma, \tau$ and $n_0$, we will show that they can be appropriately chosen during the proof.
    
    Fix $p \in (\lambda, 1-\lambda)$ and let $f$ be as in the proposition statement. Suppose that $i \in [2n]$ is such that $\Inf[(p)]{f, i} \geq \delta$. Without loss of generality, we assume $i = 2n$. Our goal is to show that the influence of coordinate $2n$ is preserved through a random 2-to-1 minor with constant probability, as long as $\gamma$ is sufficiently small and $n_0$ is sufficiently large.

    We divide the process of drawing the 2-to-1 minor map $\pi$ into two steps. First, we uniformly draw $j \in [2n-1]$ and define a minor map $\pi_0 : [2n] \to [2n-1]$ that identifies $2n$ with $j$ and permutes other coordinates without any other identifications. After that, we uniformly draw a 2-to-1 map $\pi_1 : [2n-1]  \to [n]$ that pairs coordinates from $[2n-1] \setminus  \{ \pi_0(2n) \}$. It is easy to see that the distribution of the composition $\pi_1 \circ \pi_0$ is uniform over all maps $\pi: [2n] \to [n]$.

    \begin{figure}[h]
    \centering
    \includegraphics[scale=0.8]{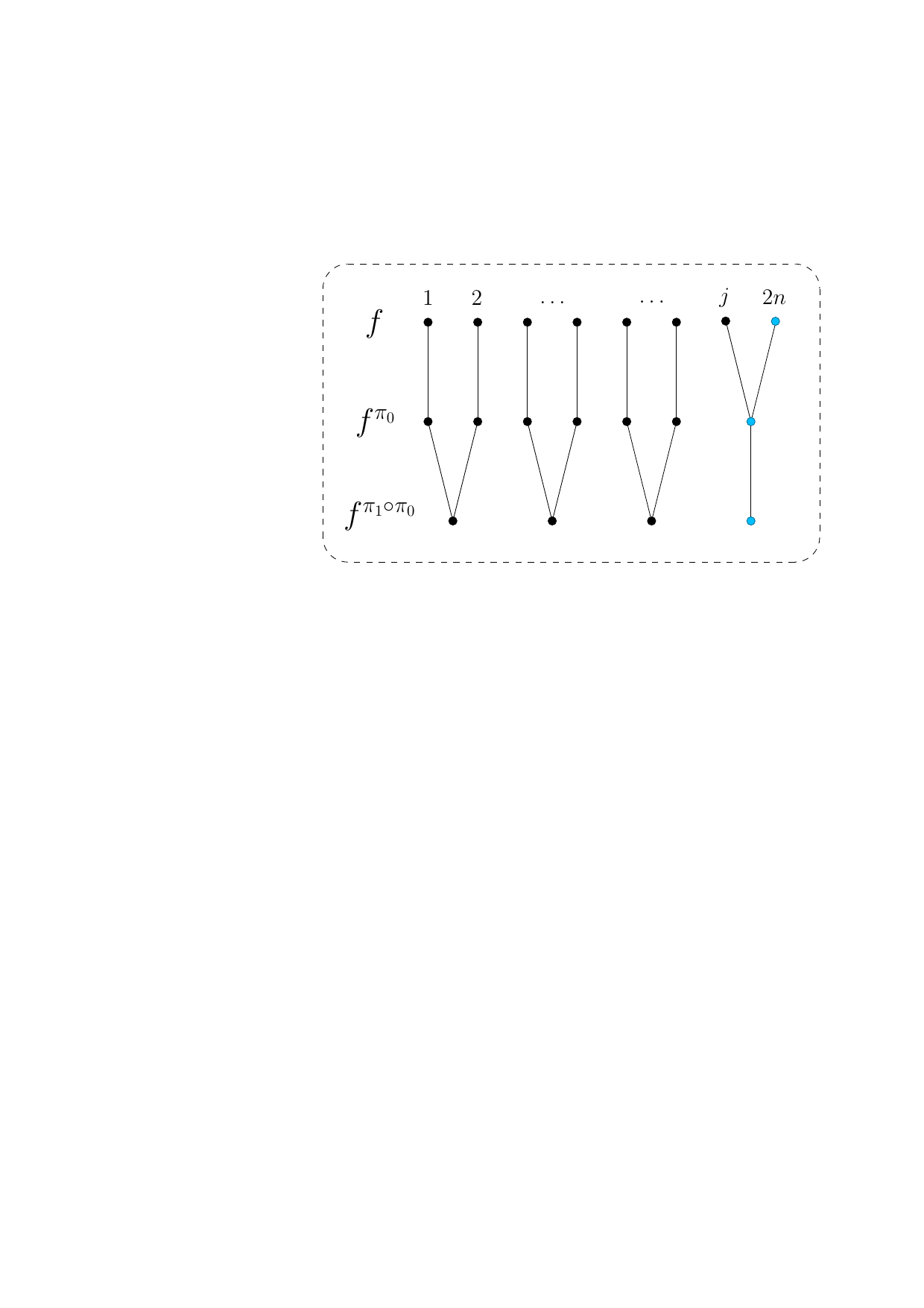}
    \caption{
        We split the random choice of $\pi$ into two steps: (1) choosing the coordinate $j$ with which the influential coordinate $2n$ should be identified, and (2) choosing the pairing of remaining coordinates. We show that after each step, the image of coordinate $2n$ remains influential with constant probability.
    }\label{fig:two_steps_polymorphisms}
    \end{figure}

    Consider the first step: drawing the map $\pi_0$ that identifies $2n$ with a uniformly chosen $j \in [2n-1]$. Let $g = f^{\pi_0}$ and without loss of generality assume that $\pi_0(2n) = 2n-1$. \cref{lemma:influence:gluing_first_pair} implies that
    \begin{align*}
        \EX_{\pi_0} \Big[ \Inf[(p)]{g, 2n-1} \Big] &\geq \lambda \cdot \Inf[(p)]{f, 2n} - \frac{1}{\lambda} \cdot \EX_{j \sim [2n-1]} \Big[ \Inf[(p)]{f, j} \Big] \\
        &= \lambda \cdot \Inf[(p)]{f, 2n} - \frac{1}{\lambda \cdot (2n-1)} \cdot \I[(p)]{f} \\ 
        &\geq \lambda \cdot \delta - \frac{\gamma \cdot n}{\lambda \cdot (2n-1)} \\ 
        &\geq \delta_0(\lambda, \delta),
    \end{align*}
    where $\delta_0 > 0$ if $\gamma$ is sufficiently small and $n_0$ sufficiently large. Using the fact that every random variable $\mathbf{X} \in [0,1]$ satisfying $\EX[\mathbf{X}] \geq \mu$ admits $\Pr[\mathbf{X} \geq \mu/2] \geq \mu/2$, we obtain:
    \[
        \Pr_{\pi_0} \Big[ \Inf[(p)]{g, 2n-1} \geq \delta_0/2 \Big] \geq \delta_0/2. \tag{{\color{magenta}$*$}}\label{eq:influence:main_proof_first_step}
    \]
    Fix any $\pi_0$ witnessing the event in \eqref{eq:influence:main_proof_first_step}. Without loss of generality, assume that $\pi_0$ identifies $2n$ with $2n-1$. Our next goal is to bound the \kl{total influence} of $g$.
    
    \begin{claim}
        $\I[(p)]{g} \leq \gamma_0 \cdot (n-1)$, where $\gamma_0 = \gamma_0(\gamma, n) \to 0$ given $\gamma \to 0$ and $n \to \infty$.
    \end{claim}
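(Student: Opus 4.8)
The plan is to bound the total influence of $g = f^{\pi_0}$ by a constant depending only on $\lambda$ times $\I[(p)]{f}$, and then invoke $\I[(p)]{f} \le \gamma n$. After the relabelling we may assume $\pi_0$ identifies coordinates $2n-1$ and $2n$ and is the identity on $[2n-2]$, so $g(y_1,\dots,y_{2n-1}) = f(y_1,\dots,y_{2n-2},y_{2n-1},y_{2n-1})$. Writing $\iota$ for the embedding $\tuple y \mapsto (y_1,\dots,y_{2n-2},y_{2n-1},y_{2n-1})$ of $\Bool^{2n-1}$ into $\Bool^{2n}$, I would first record the elementary density fact: for $\tuple y \sim \cube{p}$ the point $\iota(\tuple y)$ is supported on the ``diagonal'' $\mathbf{Same} = \{\tuple z : z_{2n-1} = z_{2n}\}$ and satisfies $\Pr[\iota(\tuple y) \in E] \le \tfrac1\lambda \cube{p}(E)$ for every $E \subseteq \mathbf{Same}$, while the shifted point $\iota(\tuple y) \oplus (2n-1)$ is supported on $\mathbf{Diff}$ and satisfies the same bound (in both cases the pointwise density ratio relative to $\cube{p}$ equals $1/(p^{z_{2n-1}}(1-p)^{1-z_{2n-1}}) \le 1/\min(p,1-p) \le 1/\lambda$, since $p \in (\lambda, 1-\lambda)$).

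Next I would estimate each coordinate influence of $g$ through the pivotal-probability form $\Inf[(p)]{h,i} = p(1-p)\Pr_p[h(\tuple x) \ne h(\tuple x \oplus i)]$ from \cref{proposition:influence:alternative_definitions_of_inf}. For an ``old'' coordinate $\ell \le 2n-2$, flipping $y_\ell$ commutes with $\iota$, so $\Pr_{\tuple y}[g(\tuple y) \ne g(\tuple y \oplus \ell)] = \Pr_{\tuple y}[\iota(\tuple y) \in \mathbf{Piv}_f(\ell)]$, an event contained in $\mathbf{Same}$; the density fact then gives $\Inf[(p)]{g,\ell} \le \tfrac1\lambda \Inf[(p)]{f,\ell}$. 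For the glued coordinate $2n-1$, flipping $y_{2n-1}$ flips both $z_{2n-1}$ and $z_{2n}$ in $\iota(\tuple y)$, and $f(\tuple z) \ne f(\tuple z \oplus (2n-1) \oplus 2n)$ forces $\tuple z \in \mathbf{Piv}_f(2n-1)$ or $\tuple z \oplus (2n-1) \in \mathbf{Piv}_f(2n)$; a union bound together with the two density estimates yields $\Inf[(p)]{g,2n-1} \le \tfrac1\lambda(\Inf[(p)]{f,2n-1} + \Inf[(p)]{f,2n})$. Summing over the $2n-1$ coordinates of $g$ telescopes to $\I[(p)]{g} \le \tfrac1\lambda \I[(p)]{f} \le \gamma n/\lambda$, and since $n/(n-1) \le 2$ for $n \ge 2$ this is at most $\gamma_0 (n-1)$ with $\gamma_0 = \gamma n/(\lambda(n-1)) \le 2\gamma/\lambda$, which tends to $0$ as $\gamma \to 0$ (in particular as $\gamma \to 0$ and $n \to \infty$).

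I do not expect a genuine obstacle here: the argument is entirely elementary and uses only the already-established pivotal formula for influence. The two places that call for a little care are the density comparison for the shifted point $\iota(\tuple y) \oplus (2n-1)$, which lands off the diagonal $\mathbf{Same}$ and so must be checked separately from the unshifted case, and the bookkeeping that keeps every constant uniform over $p \in (\lambda, 1-\lambda)$ — which is exactly why one pays a factor $1/\lambda$ rather than $1/(p(1-p))$.
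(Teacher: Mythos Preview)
Your proposal is correct and follows essentially the same approach as the paper: both compare the distribution of $\iota(\tuple y)$ against $\cube{p}$ pointwise to obtain $\Inf[(p)]{g,\ell} \le \tfrac1\lambda\,\Inf[(p)]{f,\ell}$ for each unmerged coordinate $\ell \le 2n-2$. The one difference is in how the glued coordinate is handled: the paper simply bounds $\Inf[(p)]{g,2n-1} \le 1$ and obtains $\gamma_0 = \tfrac{1}{n-1} + \tfrac{\gamma n}{\lambda(n-1)}$, which genuinely needs both $\gamma \to 0$ and $n \to \infty$, whereas your union-bound argument over $\mathbf{Piv}_f(2n-1)$ and $\mathbf{Piv}_f(2n)$ gives the sharper $\I[(p)]{g} \le \tfrac1\lambda\,\I[(p)]{f}$ and hence $\gamma_0 = \tfrac{\gamma n}{\lambda(n-1)} \le 2\gamma/\lambda$, which already vanishes with $\gamma \to 0$ alone. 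This is a mild improvement but does not change the downstream argument.
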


    \begin{proof}[Proof of claim]
        Clearly, we have $\Inf[(p)]{g, 2n-1} \leq 1$. Pick any $\ell \in [2n-2]$. Let $\theta$ be the distribution of $\pi_0^{-1}(\tuple x)$, given $\tuple x \sim \cube{p, 2n-1}$. For every $\tuple z \in \{0,1\}^{2n}$ we have $\theta(\tuple z) \leq (1/\min(p, 1-p)) \cdot \cube{p}(\tuple z)$, which is at most $(1/\lambda) \cdot \cube{p}(\tuple z)$. This allows to bound the influence of $\ell$ in $g$ in terms of influence of $\pi_0^{-1}(\ell)$ in $f$ as follows:
        \[
            \Inf[(p)]{g, \ell} = p(1-p) \cdot \Pr_{\tuple z \sim \theta} \Big[ f(\tuple z) \neq f(\tuple z \oplus \pi_0^{-1}(\ell)) \Big] \leq \frac{1}{\lambda} \cdot \Inf[(p)]{f, \pi_0^{-1}(\ell)}.
        \]
        Summing over coordinates, we deduce that 
        \[
            \I[(p)]{g} \leq 1 + \frac{1}{\lambda} \cdot \I[(p)]{f} \leq 1 + \frac{\gamma \cdot n}{\lambda} = (n-1) \cdot \left( \frac{1}{n-1} + \frac{\gamma \cdot n}{(n-1) \cdot \lambda} \right),
        \]
        where the last factor can get arbitrarily close to $0$ with an appropriate choice of $n_0$ and $\gamma$.
    \end{proof}
    
    We can now proceed to the second step: choosing uniformly at random a minor map $\pi_1 : [2n-1] \to [n]$, which pairs coordinates $\{1,\dots, 2n-2\}$. We can assume that $\pi_1(2n-1) = n$. Let $\pi = \pi_1 \circ \pi_0$. We want to understand the expected influence of coordinate $n$ in $f^\pi$. To this end, we analyze the function $h: \{0,1\}^{2n-2} \to \{0,1\}$ defined as follows:
    \[
        h(\tuple x) = \begin{cases}
            1 & \text{ if } g(\tuple x0) \neq g(\tuple x1), \\ 
            0 & \text{ otherwise.}
        \end{cases}
    \]
    By \cref{proposition:influence:alternative_definitions_of_inf}, we have $\EX_p[h] \geq \Inf[(p)]{g, 2n-1} \geq \delta_0/2$. Furthermore, similarly to $g$, the \kl{total influence} of $h$ cannot be too large:
    
    \begin{claim}
        $\I[(p)]{h} \leq \gamma_1 \cdot (n-1)$, where $\gamma_1 = \gamma_1(\gamma_0) \to 0$, given $\gamma_0 \to 0$. 
    \end{claim}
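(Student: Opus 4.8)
The plan is to bound $\Inf[(p)]{h,\ell}$ by a constant multiple of $\Inf[(p)]{g,\ell}$ for each coordinate $\ell$, and then sum, invoking the previous claim. The point to start from is that $h$ is exactly the indicator that the last coordinate of $g$ (which we assumed to be $2n-1$) is pivotal; in particular $h$ has arity $2n-2=2(n-1)$ and every coordinate $\ell$ of $h$ is a coordinate of $g$ distinct from $2n-1$. The elementary observation driving the estimate is that flipping such an $\ell$ in $\tuple x$ can change the value of $h$ only if $\ell$ is pivotal for $g$ at one of the two extensions $\tuple x0,\tuple x1$: indeed, if $g(\tuple x0)=g((\tuple x0)\oplus\ell)$ and $g(\tuple x1)=g((\tuple x1)\oplus\ell)$, then $g(\tuple x0)\neq g(\tuple x1)$ iff $g((\tuple x0)\oplus\ell)\neq g((\tuple x1)\oplus\ell)$, i.e. $h(\tuple x)=h(\tuple x\oplus\ell)$. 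Contrapositively,
\[
    \mathds{1}\big[h(\tuple x)\neq h(\tuple x\oplus\ell)\big]\;\leq\;\mathds{1}\big[g(\tuple x0)\neq g((\tuple x0)\oplus\ell)\big]\;+\;\mathds{1}\big[g(\tuple x1)\neq g((\tuple x1)\oplus\ell)\big].
\]

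The next step would be to take the expectation over $\tuple x\sim\cube{p,2n-2}$. For the first term on the right, as $\tuple x$ ranges over $\cube{p,2n-2}$ the tuple $\tuple x0$ has the distribution of $\cube{p,2n-1}$ conditioned on its last bit being $0$, an event of probability $1-p$; hence its expectation is at most $\tfrac1{1-p}\Pr_p[g(\tuple y)\neq g(\tuple y\oplus\ell)]$, which by \cref{proposition:influence:alternative_definitions_of_inf} equals $\tfrac1{1-p}\cdot\tfrac1{p(1-p)}\Inf[(p)]{g,\ell}$. The second term is bounded symmetrically with $\tfrac1p$ in place of $\tfrac1{1-p}$. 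Using $\tfrac1p+\tfrac1{1-p}=\tfrac1{p(1-p)}$ and the identity $\Inf[(p)]{h,\ell}=p(1-p)\,\EX_p[\mathds{1}[h(\tuple x)\neq h(\tuple x\oplus\ell)]]$, this gives
\[
    \Inf[(p)]{h,\ell}\;\leq\;p(1-p)\cdot\Big(\tfrac1p+\tfrac1{1-p}\Big)\cdot\frac{\Inf[(p)]{g,\ell}}{p(1-p)}\;=\;\frac{\Inf[(p)]{g,\ell}}{p(1-p)}\;\leq\;\frac{\Inf[(p)]{g,\ell}}{\lambda(1-\lambda)},
\]
where the last inequality uses $p\in(\lambda,1-\lambda)$, so $p(1-p)\geq\lambda(1-\lambda)$.

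Finally I would sum over $\ell\in[2n-2]$ and apply the previous claim: $\I[(p)]{h}\leq\tfrac1{\lambda(1-\lambda)}\sum_{\ell\in[2n-2]}\Inf[(p)]{g,\ell}\leq\tfrac1{\lambda(1-\lambda)}\I[(p)]{g}\leq\tfrac{\gamma_0}{\lambda(1-\lambda)}(n-1)$, so the claim holds with $\gamma_1=\gamma_1(\gamma_0)=\gamma_0/(\lambda(1-\lambda))$, which tends to $0$ with $\gamma_0$ since $\lambda$ is fixed. I do not expect any real obstacle here; the argument is a short conditioning computation, and the only point requiring care is the passage from inputs in $\{0,1\}^{2n-2}$ to their extensions in $\{0,1\}^{2n-1}$ together with the resulting factors $\tfrac1{1-p}$ and $\tfrac1p$. (One could alternatively note $h=D^2$ for $D(\tuple x)=g(\tuple x1)-g(\tuple x0)\in\{-1,0,1\}$, use $|h(\tuple x)-h(\tuple x\oplus\ell)|\leq 2|D(\tuple x)-D(\tuple x\oplus\ell)|$, and read off $\Inf[(p)]{D,\ell}$ from the Fourier coefficients of $g$ supported on sets containing $2n-1$; this route gives the same conclusion with a worse constant.)
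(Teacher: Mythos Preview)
Your proposal is correct and follows essentially the same approach as the paper: both start from the observation that $h(\tuple x)\neq h(\tuple x\oplus\ell)$ forces $\ell$ to be pivotal for $g$ at one of $\tuple x0,\tuple x1$, then convert this into a bound $\Inf[(p)]{h,\ell}\leq c\cdot\Inf[(p)]{g,\ell}$ and sum. The only cosmetic difference is in the bookkeeping of the conditioning step---the paper lower-bounds $\cube{p}(\tuple x s)$ by $\min(p,1-p)\cdot\cube{p}(\tuple x)$ to get $c=2/\min(p,1-p)$, whereas you split the two extensions and obtain $c=1/(p(1-p))$; both constants are controlled by $\lambda$ and the conclusion is the same.
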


    \begin{proof}[Proof of claim]
        Fix any coordinate $\ell \in [2n-2]$ and let $\mathbf{Piv}(\ell) \subseteq \{0,1\}^{2n-2}$ be the set of points $\tuple x$ such that $h(\tuple x) \neq h(\tuple x \oplus \ell)$. Given $\tuple x \in \mathbf{Piv}(\ell)$, it must hold that $g(\tuple x0) \neq g(\tuple x0 \oplus \ell)$ or $g(\tuple x1) \neq g(\tuple x1 \oplus \ell)$ --- otherwise we would have:
        \[
            h(\tuple x) = 1 \iff g(\tuple x0) \neq g(\tuple x1) \iff g(\tuple x0 \oplus \ell) \neq g(\tuple x1 \oplus \ell) \iff h(\tuple x \oplus \ell) = 1.
        \]
        This allows us to lower bound $\Inf[(p)]{g, \ell}$ in terms of $\Inf[(p)]{h, \ell}$ as follows: 
        \begin{align*}
            \Inf[(p)]{g, \ell} &\geq \frac{1}{2} \cdot p(1-p)\cdot \sum \Big\{ \min_{s \in \{0,1\}} \cube{p}(\tuple x s) : \tuple x \in \mathbf{Piv}(\ell) \Big\} \\ 
            &\geq \frac{1}{2} \cdot p(1-p) \cdot \sum \bigg\{ p \cdot \cube{p}(\tuple x) : \tuple x \in \mathbf{Piv}(\ell) \bigg \} \\ 
            &= \frac{1}{2} \cdot p^2(1-p) \cdot \cube{p}(\mathbf{Piv}(\ell)) \\
            &= (p/2) \cdot \Inf[(p)]{h, \ell},
        \end{align*}
        where the last equality follows from \cref{proposition:influence:alternative_definitions_of_inf}. Summing over coordinates, we have $\I[(p)]{h} \leq (2/p) \cdot \I[(p)]{g} \leq (2/\lambda) \cdot \gamma_0 \cdot (n-1)$.
    \end{proof}
    
    Since $\gamma_1$ can be arbitrarily small with an appropriate choice of $\gamma$ and $n_0$ (depending only on $\lambda$ and $\delta$), we are now in a position to apply \cref{lemma:influence:l2_norm_remains_large_in_pull_back} to $h$. We obtain a positive constant $\beta = \beta(\lambda, \delta)$ such that:
    \[
        \beta \leq \EX_{\tuple z \sim \pullback{p}} \big[ h(\tuple z)\big] = \EX_{\pi_1} \bigg [ \EX_{\tuple x \sim \cube{p}} \Big[ h(\pi_1^{-1}(\tuple x)) \Big] \bigg] = \EX_{\pi_1} \bigg[ \Pr_{\tuple x \sim \cube{p}} \Big[ g(\pi_1^{-1}(\tuple x)0) \neq g( \pi^{-1}(\tuple x)1) \Big] \bigg].
    \]
    In addition, observe that $g(\pi_1^{-1}(\tuple x)s) = g^{\pi_1}(\tuple x s) = f^\pi(\tuple x s)$. It follows that 
    \[
        \beta \leq \EX_{\pi_1} \bigg[ \Pr_{\tuple x \sim \cube{p}} \Big[ f^\pi(\tuple x 0) \neq f^{\pi}( \tuple x 1) \Big] \bigg] = \EX_{\pi_1} \bigg[ \Pr_{\tuple z \sim \cube{p}} \Big[ f^\pi(\tuple z) \neq f^\pi(\tuple z \oplus n) \Big] \bigg] = \frac{1}{p(1-p)} \EX_{\pi_1} \Big[ \Inf[(p)]{f^\pi, n} \Big].
    \]
    Using $\Pr[\mathbf{X} \geq \EX[\mathbf{X}]/2] \geq \EX[\mathbf{X}]/2$ one more time and plugging $n = \pi(2n)$, we obtain
    \[
        \Pr_{\pi_1} \Big[ \Inf[(p)]{f^\pi, \pi(2n)} \geq \frac{1}{2} \cdot p(1-p) \cdot \beta \Big] \geq \frac{1}{2} \cdot p(1-p) \cdot \beta.
        \tag{{\color{magenta}$**$}}\label{eq:influence:main_proof_second_step}
    \]
    Finally, recall that we must also count in the probability that $\pi_0$ satisfies the required properties. Combining \eqref{eq:influence:main_proof_first_step} and \eqref{eq:influence:main_proof_second_step}, we obtain the final estimate:
    \[
        \Pr_{\pi} \Big[ \Inf[(p)]{f^\pi, \pi(2n)} \geq \frac{1}{2} \cdot \lambda(1-\lambda) \cdot \beta \Big] \geq \frac{\delta_0}{4} \cdot \lambda(1-\lambda) \cdot \beta.
    \]
    This finishes the proof because $\delta_0, \beta > 0$ depend only on $\lambda$ and $\delta$.
\end{proof}

\subsection{Hardness for Polynomial Threshold Functions of bounded degree}

In this section, we describe our application of \cref{proposition:influence:our_main_result} for the class of \kl{Polynomial Threshold Functions} by proving \cref{theorem:hardness_for_ptfs_with_large_influences}. The crucial property of $\PTF$s that we are going to utilize is their \intro{low-degree concentration}, which means that they are similar to their \kl{low-degree truncation}. This fact can be deduced from a result obtained in \cite{Harsha2009BoundingTS} regarding the \textit{noise sensitivity} of $\PTF$s. 

To properly introduce this notion, we first have to define the \textit{noise operator}. All of the following definitions are standard; we refer to \cite[Section 2.4]{O’Donnell_2014} for a more involved introduction of these concepts in the setting of unbiased distribution. The generalization to $p$-biased distributions is verbatim.

\begin{definition}[\intro{Noise operator}]
    Given $\delta, p \in [0,1]$ and $\tuple x \in \{0,1\}^n$, the $(p, \delta)$-\intro{noisy distribution} of $\tuple x$ denotes the distribution over elements $\tuple y \in \{0,1\}^n$ defined as follows: for every $i \in [n]$ independently, let 
    \[
        y_i = \begin{cases}
            x_i & \text{ with probability $\delta$}, \\
            \text{drawn from } \cube{p} & \text{ with probability } 1-\delta.
        \end{cases}
    \]
    The $(p, \delta)$-noisy distribution of $\tuple x$ is denoted by $N_{\delta, p}(\tuple x)$.

    The \kl{noise operator} $\T_{\delta, p}$ on space $L^2(\Bool^n, \cube{p})$ is the operator that assings to every function $f$ a corresponding function defined as 
    \[
        \T_{\delta, p} [f](\tuple x) = \EX_{\tuple y \sim N_{\delta, p}(\tuple x)} \big[ f(\tuple y) \big].
    \]
\end{definition}

One can think of the parameter $\delta$ as the level of \textit{correlation} between the original point and its perturbation --- the higher the correlation (the closer $\delta$ is to 1), the more similar to $\tuple x$ we expect $\tuple y \sim N_{\delta, p}(\tuple x)$ to be. This is further reflected in the following fact, which says that the \kl{Fourier decomposition} of $\T_{\delta, p}[f]$ resembles the decomposition of $f$ as long as the noise parameter is not far from $1$.

\begin{proposition}\label{proposition:noise:coefficients_of_noise_operator}
    Suppose $f \in L^2(\Bool^n, \cube{p})$ and $\delta \in [0,1]$. The \kl{Fourier decomposition} of $\, \T_{\delta, p}[f]$ is 
    \[
        \T_{\delta, p}[f] = \sum_{S \subseteq[n]} \delta^{|S|} \hat{f}(S) \cdot \chi_S.
    \]
\end{proposition}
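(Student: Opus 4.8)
The plan is to exploit two structural facts: the noise operator $\T_{\delta, p}$ is linear on $L^2(\Bool^n, \cube{p})$, and the $(p,\delta)$-\kl{noisy distribution} resamples coordinates independently. Together these reduce the whole statement to a one-coordinate computation on the \kl{Fourier characters}. Concretely, since $f = \sum_{S \subseteq [n]} \hat{f}(S)\,\chi_S$, it suffices by linearity to prove that $\T_{\delta, p}[\chi_S] = \delta^{|S|}\,\chi_S$ for every $S \subseteq [n]$, and then sum over $S$ (a finite sum, so no convergence issues arise).

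To handle a single character, I would fix $S$ and $\tuple x \in \Bool^n$ and use that $\chi_S = \prod_{i \in S}\chi_i$, where $\chi_i$ depends only on the $i$-th coordinate. Because the coordinates $y_i$ of $\tuple y \sim N_{\delta, p}(\tuple x)$ are drawn independently, the expectation factors:
\[
    \T_{\delta, p}[\chi_S](\tuple x) \;=\; \EX_{\tuple y \sim N_{\delta, p}(\tuple x)}\Big[\textstyle\prod_{i \in S}\chi_i(y_i)\Big] \;=\; \prod_{i \in S}\EX_{y_i}\big[\chi_i(y_i)\big].
\]
For a fixed coordinate $i$, unfolding the definition of the \kl{noisy distribution} gives $\EX_{y_i}[\chi_i(y_i)] = \delta\,\chi_i(x_i) + (1-\delta)\,\EX_{z \sim \cube{p}}[\chi_i(z)]$, and since $\EX_p[\chi_i] = 0$ (observed immediately after the definition of \kl{Fourier characters}), this collapses to $\delta\,\chi_i(x_i)$. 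Substituting back yields $\T_{\delta, p}[\chi_S](\tuple x) = \prod_{i \in S}\delta\,\chi_i(x_i) = \delta^{|S|}\,\chi_S(\tuple x)$, and applying linearity of $\T_{\delta, p}$ to the \kl{Fourier decomposition} of $f$ finishes the argument.

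This proposition is essentially a routine computation, so there is no genuine obstacle; the only points deserving care are (i) justifying that the expectation over $N_{\delta, p}(\tuple x)$ factors across $S$, which is exactly the independence written into the definition of the noisy distribution, and (ii) correctly invoking $\EX_p[\chi_i] = 0$ for the resampled-coordinate term. Everything else is linearity and a finite interchange of sum and expectation.
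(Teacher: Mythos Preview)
Your proposal is correct and follows essentially the same approach as the paper: reduce to the action of $\T_{\delta,p}$ on characters by linearity, factor the expectation over coordinates using independence of the noisy distribution, and finish with the one-coordinate computation using $\EX_p[\chi_i]=0$. The paper's own proof is the same argument, written more tersely.
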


\begin{proof}
    Observe that $\T_{\delta, p}$ is a linear operator. Therefore, it suffices to show that $\T_{\delta, p}[\chi_S] = \delta^{|S|} \cdot~\chi_S$. This holds trivially if $S = \emptyset$. If $S$ is not empty, we obtain:
    \[
        \T_{\delta, p}[\chi_S](\tuple x) = \EX_{\tuple y \sim N_{\delta, p}(\tuple x)} \big[ \chi_S(\tuple x) \big] = \prod_{i \in S} \EX_{y \sim N_{\delta, p}(x_i)}\big[ \chi_i(y) \big] = \prod_{i \in S} \delta \cdot \chi_S(x_i) + (1-\delta) \cdot \EX_p[\chi_S] = \delta^{|S|} \cdot \chi_S. \qedhere
    \]
\end{proof}

In pair with the \kl{noise operator} is the notion of \textit{noise sensitivity}, which is a measure of how prone to perturbation of input values a function is.

\begin{definition}[\intro{Noise sensitivity}]
    Suppose $f \in L^2(\Bool^n, \cube{p})$ and $\delta \in [0,1]$. Let $\tuple x \sim \cube{p, n}$ be a $p$-biased vector and $\tuple y \sim N_{\delta, p}(\tuple x)$. The \kl{noise sensitivity} of $f$ at $\delta$ is defined as
    \[
        \NS_{\delta, p}[f] = \Pr \big[ f(\tuple x) \neq f(\tuple y)].
    \]
\end{definition}

We are now in a position to state a result from \cite{Harsha2009BoundingTS}, which bounds the \kl{noise sensitivity}\footnote{The original statement of \cref{theorem:noise:ptfs_are_noise_stable} concerns a perturbation defined in a bit different way than in our presentation. However, it is easy to convince oneself that the original noise distribution is equivalent to \kl{noisy distribution} with the adjusted parameter $\delta$.} of $\PTF$s of bounded degree.

\begin{theorem}[Theorem 1.3 in \cite{Harsha2009BoundingTS}]\label{theorem:noise:ptfs_are_noise_stable}
    Suppose that $\delta \in [0,1]$ and $f \in \PTF_k$. Let $\tuple x \sim \cube{1/2, n}$ be a uniformly random vector and $\tuple y \sim N_{\delta,1/2}(\tuple x)$. It follows that 
    \[
        \NS_{\delta, 1/2}[f] \leq C(k) \cdot \left(\frac{1}{2} -\frac{1}{2} \cdot \delta\right)^{1/(4k+6)}
    \]
\end{theorem}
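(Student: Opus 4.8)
This theorem is quoted (modulo the reparametrisation of the noise mentioned in the footnote) from \cite[Theorem 1.3]{Harsha2009BoundingTS}, so in the paper one simply invokes it and proceeds to the corollary on low-degree concentration; I record how I would reconstruct the argument. Write $f(\tuple x)=1$ exactly when $Q(\tuple x)\ge 0$, for a multilinear polynomial $Q$ of degree at most $k$, normalised so that $\norm{Q}=1$ in $L^2(\Bool^n,\cube{1/2})$, and set $\varepsilon=\tfrac12-\tfrac12\delta$, which is exactly the probability that a fixed coordinate is rerandomised to the opposite value by $N_{\delta,1/2}$. For $\tuple y\sim N_{\delta,1/2}(\tuple x)$ the pair $(\tuple x,\tuple y)$ is $\delta$-correlated, and $\NS_{\delta,1/2}[f]$ is the probability that the sign of $Q$ differs at $\tuple x$ and $\tuple y$. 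The elementary observation driving everything is that such a sign change forces, for any threshold $\eta>0$, one of $|Q(\tuple x)|\le\eta$, $|Q(\tuple y)|\le\eta$ or $|Q(\tuple x)-Q(\tuple y)|>2\eta$; hence
\[
  \NS_{\delta,1/2}[f]\ \le\ 2\,\Pr_{1/2}\big[\,|Q(\tuple x)|\le\eta\,\big]\ +\ \Pr\big[\,|Q(\tuple x)-Q(\tuple y)|>2\eta\,\big].
\]

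The second term is routine: by \cref{proposition:noise:coefficients_of_noise_operator} and the \kl{Parseval-Plancherel} identity, $\EX[Q(\tuple x)Q(\tuple y)]=\sum_S\delta^{|S|}\widehat Q(S)^2$, so that, using $1-\delta^{|S|}\le|S|(1-\delta)$ together with \cref{corollary:influence:total_influence_formula} and $\deg Q\le k$,
\[
  \EX\big[(Q(\tuple x)-Q(\tuple y))^2\big]\ =\ 2\sum_S(1-\delta^{|S|})\,\widehat Q(S)^2\ \le\ 2(1-\delta)\,\I[(1/2)]{Q}\ \le\ 4k\varepsilon,
\]
and Chebyshev bounds this term by $k\varepsilon/\eta^2$. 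The first term is the genuine content: it is a hypercube \emph{anti-concentration} (``small-value'') bound for degree-$k$ $\PTF$s, and establishing $\Pr_{1/2}[\,|Q(\tuple x)|\le\eta\,]\le C(k)\cdot\eta^{\Theta(1/k)}$ is essentially the main theorem of \cite{Harsha2009BoundingTS}. I would prove it along the standard route: decompose $\Bool^n$ into a decision tree of depth bounded in terms of $k$ and a regularity parameter $\tau$, so that on all but a negligible fraction of leaves the restricted polynomial has all coordinate \kl{influences} at most $\tau$; on each such regular leaf transfer to Gaussian space via the invariance principle, taking as test function a mollification of the indicator of $[-\eta,\eta]$ at scale $\eta$; and there invoke the Carbery--Wright anti-concentration inequality $\Pr_G[\,|R(G)|\le\eta\norm{R}\,]\le O(k)\,\eta^{1/k}$.

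Putting the pieces together gives $\NS_{\delta,1/2}[f]\le C(k)\,\eta^{\Theta(1/k)}+k\varepsilon/\eta^2$, and choosing $\eta$ equal to the power of $\varepsilon$ that balances the two terms yields a bound of the announced shape $C'(k)\,\varepsilon^{1/(4k+6)}$, the precise denominator being whatever the anti-concentration exponent forces. The \textbf{main obstacle} is exactly that anti-concentration step: extracting a $\operatorname{poly}(k)\cdot\eta^{\Theta(1/k)}$ tail (rather than something doubly exponentially small in $k$) needs the $\PTF$ regularity lemma together with a careful accounting of the invariance-principle error, where the non-Lipschitz test function couples $\tau$ to $\eta$ and makes the control of the decision-tree depth delicate; the second-moment estimate and the three-way union bound are, by contrast, elementary. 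Since the statement is already proved in \cite{Harsha2009BoundingTS} and independently in \cite{bounding_ptfs_2}, the paper uses it as a black box and only needs the resulting low-degree concentration of $\PTF$s.
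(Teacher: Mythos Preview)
Your proposal is correct: the paper does not prove this theorem at all but simply cites it from \cite{Harsha2009BoundingTS} as a black box and immediately uses it to derive \cref{proposition:noise:ptfs_are_low_degree_concentrated}, exactly as you say in your opening sentence. The sketch you give of the underlying argument in \cite{Harsha2009BoundingTS} (three-way union bound, second-moment estimate for the noisy deviation, and the regularity-lemma/invariance-principle/Carbery--Wright route to hypercube anti-concentration) is an accurate high-level reconstruction of that external proof, but none of it appears in the present paper.
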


The purpose of \cref{theorem:noise:ptfs_are_noise_stable} and definitions introduced above will become apparent in the proof of the following result, which states that \kl{Polynomial Threshold Functions} of bounded degree are \kl{low-degree concentrated}. To show this, we use the standard connection between \kl{noise sensitivity} of $f$ and \kl{Fourier coefficients} of $\T_{\delta, p}[f]$ (see e.g. \cite[Proposition 3.3]{O’Donnell_2014}).

\begin{proposition}\label{proposition:noise:ptfs_are_low_degree_concentrated}
    For every $k \geq 1$ and $\gamma > 0$, there exists a constant $d = d(k, \gamma)$ such that for every $f \in \PTF_k$ it holds that
    \[
        \norm{f^{>d}}^2 \leq \gamma,
    \]
    where the norm is taken over the unbiased distribution $\cube{1/2}$.
\end{proposition}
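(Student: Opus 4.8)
The plan is to read off the high-degree Fourier weight of $f$ from its \kl{noise sensitivity}, and then invoke \cref{theorem:noise:ptfs_are_noise_stable} to control that noise sensitivity for $\PTF_k$. Concretely, I would first establish the standard Fourier-analytic identity for $\NS_{\delta,1/2}[f]$ (this is [Proposition 3.3]{O’Donnell_2014}), which in the $\{0,1\}$-normalization reads
\[
    \NS_{\delta, 1/2}[f] = 2 \sum_{S \subseteq [n]} \big( 1 - \delta^{|S|} \big) \hat{f}(S)^2 .
\]
To derive it, note that since $f$ is Boolean-valued we have $\big(f(\tuple x) - f(\tuple y)\big)^2 = f(\tuple x) + f(\tuple y) - 2 f(\tuple x) f(\tuple y)$, and taking expectations over $\tuple x \sim \cube{1/2}$, $\tuple y \sim N_{\delta, 1/2}(\tuple x)$ — using that $\tuple y$ is marginally $\cube{1/2}$-distributed, that $\EX[f] = \norm{f}^2$, and that $\EX[f(\tuple x) f(\tuple y)] = \langle f, \T_{\delta, 1/2}[f] \rangle = \sum_S \delta^{|S|} \hat f(S)^2$ by \cref{proposition:noise:coefficients_of_noise_operator} — yields the identity.

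Next I would extract the truncation bound. For any integer $d \ge 0$ and any $\delta \in (0,1)$, restricting the sum to sets with $|S| > d$ and using $1 - \delta^{|S|} \ge 1 - \delta^{d+1}$ there gives
\[
    \NS_{\delta, 1/2}[f] \;\ge\; 2 \big( 1 - \delta^{d+1} \big) \sum_{|S| > d} \hat f(S)^2 \;=\; 2 \big( 1 - \delta^{d+1} \big) \, \norm{f^{>d}}^2 ,
\]
so $\norm{f^{>d}}^2 \le \NS_{\delta,1/2}[f] \big/ \big( 2(1-\delta^{d+1}) \big)$. Since $f \in \PTF_k$, \cref{theorem:noise:ptfs_are_noise_stable} applies and bounds $\NS_{\delta, 1/2}[f] \le C(k) \cdot \big( \tfrac12 - \tfrac12 \delta \big)^{1/(4k+6)}$, whence
\[
    \norm{f^{>d}}^2 \;\le\; \frac{C(k)}{2 \big( 1 - \delta^{d+1} \big)} \cdot \left( \tfrac{1}{2} - \tfrac{1}{2} \delta \right)^{1/(4k+6)} .
\]

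Finally I would choose the two free parameters in order. First pick $\delta = \delta(k,\gamma) \in (0,1)$ close enough to $1$ that $C(k) \cdot \big( \tfrac12 - \tfrac12 \delta \big)^{1/(4k+6)} \le \gamma$; this is possible because the right-hand side tends to $0$ as $\delta \to 1^-$. With $\delta$ now fixed, pick $d = d(k,\gamma)$ large enough that $\delta^{d+1} \le \tfrac12$, i.e. $d \ge \ln 2 / \ln(1/\delta) - 1$, so that $1 - \delta^{d+1} \ge \tfrac12$. Plugging both choices into the displayed bound gives $\norm{f^{>d}}^2 \le \gamma$, and since $\delta$ and $d$ depend only on $k$ and $\gamma$, the proposition follows. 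I do not expect any genuine obstacle here: the only points requiring care are the $\{0,1\}$-versus-$\{-1,1\}$ bookkeeping in the noise-sensitivity identity, and the fact that $\delta$ must be selected \emph{before} $d$ (so that $1 - \delta^{d+1}$ is bounded away from $0$ with a constant depending only on $k,\gamma$).
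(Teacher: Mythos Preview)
Your proposal is correct and follows essentially the same route as the paper: bound $\NS_{\delta,1/2}[f]$ via \cref{theorem:noise:ptfs_are_noise_stable}, relate it to $\sum_S (1-\delta^{|S|})\hat f(S)^2$ through the noise-operator formula, then choose $\delta$ first and $d$ second. The only cosmetic difference is that you use the exact identity $\NS_{\delta,1/2}[f] = 2\sum_S (1-\delta^{|S|})\hat f(S)^2$ while the paper uses the one-sided inequality $\norm{f}^2 - \langle f, \T_{\delta,1/2}[f]\rangle \le \NS_{\delta,1/2}[f]$; both lead to the same conclusion with the same parameter choices up to harmless constants.
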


\begin{proof}
    Fix $k \geq 1$ and $\gamma > 0$. Let $f : \{0,1\}^n \to \{0,1\}$ be a $\PTF$ of degree at most $k$. Observe that \cref{theorem:noise:ptfs_are_noise_stable} implies that $\NS_{\delta,1/2}[f]$ approaches $0$ as $\delta \to 1$. Therefore, we can choose $\delta = \delta(k, \gamma)$ such that $\NS_{\delta, 1/2} \leq \gamma/2$. Let $\tuple x \sim \cube{1/2, n}$ and $\tuple y \sim N_{\delta, 1/2}(\tuple x)$. Observe that
    \[
        \langle f, T_{\delta, p}[f] \rangle = \EX \big[ f(\tuple x) \cdot f(\tuple y) \big] \geq \Pr \big[ f(\tuple x) = 1 \big] - \Pr \big[ f(\tuple x) \neq f(\tuple y) \big] = \norm{f}^2 - \NS_{\delta, 1/2}[f].
    \]
    It follows that $\norm{f}^2 - \langle f, \T_{\delta, p}[f] \rangle \leq \gamma/2$. By \cref{proposition:noise:coefficients_of_noise_operator} and the \kl{Parseval identity}, we obtain that
    \[
        \gamma/2 \geq \norm{f}^2 - \langle f, \T_{\delta, 1/2}[f] \rangle = \sum_{S \subseteq [n]} \hat{f}(S)^2 - \sum_{S \subseteq [n]} \delta^{|S|} \cdot \hat{f}(S)^2 = \sum_{S \subseteq [n]} \hat{f}(S)^2 \left(1 - \delta^{|S|} \right).
    \]
    In particular, for every $d \geq 1$, we have $\norm{f^{>d}}^2 \left(1 - \delta^d \right) \leq \gamma/2$. Let $d = d(k, \gamma)$ be such that $(1-\delta^d) \geq 1/2$. It is easy to verify that $d$ satisfies the required condition:
    \[
        \norm{f^{>d}}^2 \leq \frac{\gamma/2}{1-\delta^d} \leq \gamma. \qedhere
    \]
\end{proof}

Although we defined the notions of \kl{noise operator} and \kl{noise sensitivity} for general $p$-biased distribution, \cref{theorem:noise:ptfs_are_noise_stable} (and \cref{proposition:noise:ptfs_are_low_degree_concentrated} as a consequence) applies only to the unbiased distribution. It is an interesting question whether these results can be generalized to more general setting. We are unsure whether the original proof of \cref{theorem:noise:ptfs_are_noise_stable} simply rewrites itself in the regime of biased distribution. We leave this question for future research, but note that it would automatically strenghten \cref{theorem:hardness_for_ptfs_with_large_influences}: as we will see, \cref{proposition:noise:ptfs_are_low_degree_concentrated} is the only element of the proof that requires uniform distribution.

We are finally in position to construct the hardness reduction proving \cref{theorem:hardness_for_ptfs_with_large_influences}. In fact, we prove that a minion of $\PTF$s of bounded degree with influential coordinates in every function satisfies the \kl{random 2-to-1 condition}. This directly implies \cref{theorem:hardness_for_ptfs_with_large_influences} due to \cref{theorem:reductions:random_condition_implies_hardness}.

With all the tools we have in hand, this turns out to be quite straightforward. The general idea is that \cref{proposition:influence:our_main_result} allows us to define a choice function which is compatible with random $2$-to-$1$ minors, while \cref{proposition:noise:ptfs_are_low_degree_concentrated} asserts that the sets of chosen coordinates have bounded size.

\begin{proposition}
    Suppose that $k \geq 1$ is an integer and $\minion \subseteq \PTF_k$ is a minion. If there exists a constant $\delta > 0$ such that 
    \[
        \max_{i \in [n]} \, \Inf[(1/2)]{f, i} \geq \delta
    \]  
    for every $f \in \minion$ of arity $n$, then $\minion$ satisfies the \kl{random 2-to-1 condition}.
\end{proposition}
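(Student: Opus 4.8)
The plan is to exhibit an explicit \kl{choice function} witnessing the \kl{random 2-to-1 condition}. First I would fix all constants. Take $\lambda = 1/4$ and apply \cref{proposition:influence:our_main_result} with $\lambda$ and $\delta$ to obtain $\gamma_1, \tau_1 > 0$ and $n_0$; after possibly replacing $\tau_1$ by $\min\{\tau_1,\delta\}$ I may assume $\tau_1 \le \delta$, since the conclusion of \cref{proposition:influence:our_main_result} only becomes easier. Set the influence threshold $\rho := \tau_1$, let $\gamma := \min\{\gamma_1/4,\ \rho/2\}$, obtain the degree $d := d(k,\gamma)$ from \cref{proposition:noise:ptfs_are_low_degree_concentrated}, and put $n_1 := \max\{n_0,\ \lceil 2d/\gamma_1 \rceil\}$. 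The candidate choice function is $C(f) = [\ar(f)]$ if $\ar(f) < 2n_1$, and $C(f) = \{\, i : \Inf[(1/2)]{f,i} \ge \rho \,\}$ if $\ar(f) \ge 2n_1$. Note $C(f) \ne \emptyset$ in both cases: for large arity this is because the hypothesis supplies a coordinate of influence $\ge \delta \ge \rho$.

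Next I would verify the size bound $|C(f)| \le M$. For small arities it is immediate. For $\ar(f) \ge 2n_1$ this is the step I expect to be the crux, because a priori a degree-$k$ $\PTF$ of large arity can have total influence growing linearly with the arity, so a naive Markov bound on $\I[(1/2)]{f}$ alone does not yield a constant. The resolution is \kl{low-degree concentration}: by \cref{proposition:noise:ptfs_are_low_degree_concentrated} we have $\norm{f^{>d}}^2 \le \gamma$, so the Fourier formula $\Inf[(1/2)]{f,i} = \sum_{S \ni i} \hat f(S)^2$ of \cref{proposition:influence:alternative_definitions_of_inf} gives, for each $i \in C(f)$, the inequality $\sum_{S \ni i,\ |S| \le d} \hat f(S)^2 \ge \rho - \gamma \ge \rho/2$. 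Summing over $i \in C(f)$ and using \cref{corollary:influence:total_influence_formula} together with $\norm{f}^2 = \EX_{1/2}[f] \le 1$, I get $(\rho/2)\,|C(f)| \le \sum_{|S| \le d} |S|\, \hat f(S)^2 \le d$, hence $|C(f)| \le 2d/\rho$. So $M := \max\{2n_1 - 1,\ \lceil 2d/\rho \rceil\}$ bounds $|C(f)|$ for every $f \in \minion$.

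Finally I would check the probabilistic condition. Fix $f \in \minion$ of arity $2n$ and a uniformly random $2$-to-$1$ map $\pi : [2n] \to [n]$. If $n < n_1$, then $C(f) = [2n]$ and $C(f^\pi) = [n]$, and since $\pi$ is surjective $\pi(C(f)) \cap C(f^\pi) = [n] \ne \emptyset$ with probability $1 \ge \tau_1$. If $n \ge n_1$, I would first bound the total influence: \cref{proposition:noise:ptfs_are_low_degree_concentrated} and \cref{corollary:influence:total_influence_formula} give $\I[(1/2)]{f} \le d + 2n\gamma \le d + n\gamma_1/2 \le n\gamma_1$, the last step using $n \ge 2d/\gamma_1$. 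Picking $i^\ast$ with $\Inf[(1/2)]{f,i^\ast} \ge \delta$ and noting $1/2 \in (\lambda, 1-\lambda)$ and $n \ge n_0$, \cref{proposition:influence:our_main_result} applies and yields $\Pr_\pi\!\big[\Inf[(1/2)]{f^\pi, \pi(i^\ast)} \ge \tau_1\big] \ge \tau_1$. On that event $\pi(i^\ast) \in C(f^\pi)$ (by the definition of $C$, whether $\ar(f^\pi) < 2n_1$ or $\ge 2n_1$), while $i^\ast \in C(f)$ since $\Inf[(1/2)]{f,i^\ast} \ge \delta \ge \rho$ and $\ar(f) \ge 2n_1$; hence $\pi(i^\ast) \in \pi(C(f)) \cap C(f^\pi)$. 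Thus $\Pr_\pi[\pi(C(f)) \cap C(f^\pi) \ne \emptyset] \ge \tau_1$ in all cases, and $\minion$ satisfies the \kl{random 2-to-1 condition} with the above $M$ and $\tau := \tau_1$. The only genuinely delicate point is the uniform size bound on $|C(f)|$, where low-degree concentration does double duty (capping the number of $\rho$-influential coordinates and controlling the total influence so that \cref{proposition:influence:our_main_result} can be invoked); everything else is bookkeeping of constants.
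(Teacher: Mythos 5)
Your proof is correct and follows essentially the same route as the paper's: define the choice function by thresholding uniform-distribution influence (with a fallback for small arities), use low-degree concentration of $\PTF$s both to cap the number of influential coordinates and to bound the total influence so that \cref{proposition:influence:our_main_result} applies, and then pass an influential coordinate through the random $2$-to-$1$ minor. Your bookkeeping is a touch more careful than the paper's in two small ways: you apply \cref{proposition:noise:ptfs_are_low_degree_concentrated} once with a single $\gamma$ that serves both purposes rather than twice with different targets, and you explicitly handle the intermediate arity range where $\ar(f) \geq 2n_1$ but $\ar(f^\pi) < 2n_1$ (the paper's case split on ``$n < 2n_0$'' quietly conflates the arity of $f$ with the arity of its minor, though the argument still goes through). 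No gaps.
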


\begin{proof}
    Fix $k \geq 1$ and $\delta > 0$. Suppose that $\minion \subseteq \PTF_k$ is a minion such that every function in $\minion$ has coordinate with influence at least $\delta$ over uniform distribution. Our goal is to show that $\minion$ satisfies the \kl{random 2-to-1 condition}.

    Let $\gamma, \tau$ and $n_0$ be constants provided by \cref{proposition:influence:our_main_result} for $\delta$ and $\lambda := 1/3$. We can assume that $\tau \leq \delta$, otherwise we can set $\tau := \delta$. \cref{proposition:noise:ptfs_are_low_degree_concentrated} implies that there exists a constant $d = d(k, \delta)$ such that every function $f \in \minion$ satisfies $\norm{f^{>d}}^2 \leq \gamma/3$. In particular, by \cref{corollary:influence:total_influence_formula} we have $\I[(1/2)]{f} \leq d + (2/3)n \gamma$, which is at most $\gamma \cdot n$, as long as $n \geq n_1$ for some $n_1 = n_1(k, \delta)$. We can assume that $n_0 \geq n_1$, otherwise we set $n_0 := n_1$.

    We are ready to define the \kl{choice function} for $\minion$. After that, we will argue that it satisfies all required conditions. Let $C : \minion \to \mathcal{R}(\mathbb{N})$ be the \kl{choice function} that assings to every $n$-ary function $f$ a subset of coordinates as follows:
    \[
        C(f) = \begin{cases}
            \text{the set of coordinates } i \in [n] \text{ such that } \Inf[(1/2)]{f, i} \geq \tau & \text{ if } n \geq 2n_0, \\ 
            [n] & \text{ if } n < 2n_0.
         \end{cases}
    \]

    \begin{claim}
        $|C(f)| \leq M(k, \delta)$ for every $f \in \minion$.
    \end{claim}

    \begin{proof}[Proof of claim]
        Suppose that $f : \{0,1\}^n \to \{0, 1\}$. If $n < 2n_0$, then $|C(f)|$ is less than $2n_0$, which is a constant depending only on $k$ and $\delta$. Otherwise, $C(f)$ consists of coordinates with \kl{influence} at least $\tau$. We will show that \kl{low-degree} concentration of $\PTF$s implies that there can only be a constant number of such coordinates. Suppose that $i \in C(f)$. \cref{proposition:noise:ptfs_are_low_degree_concentrated} gives that there exists a constant $d = d(k, \delta)$ such that $\norm{f^{>d}}^2 \leq \tau/2$. The \kl{Parseval identity} and \cref{proposition:influence:alternative_definitions_of_inf} give us that 
        \[
            \Inf[(1/2)]{f^{\leq d}, i} = \sum \bigg\{ \hat{f}(S)^2 : S \subseteq [n], i \in S \text{ and } |S| \leq d \bigg\} \geq \Inf[(1/2)]{f, i} - \tau/2 \geq \tau/2. \tag{{\color{magenta}$*$}}\label{eq:noise:constant_low_degree_influence}
        \]
        However, by \cref{corollary:influence:total_influence_formula} we have $\I[(1/2)]{f^{\leq d}} \leq d \cdot \norm{f}^2 \leq d$. Therefore, there can be at most $(2d)/\tau$ coordinates satisfying \eqref{eq:noise:constant_low_degree_influence}. As a consequence, $|C(f)| \leq (2d)/\tau$, which is a constant depending only on $k$ and $\delta$.
    \end{proof}

    The rest is to show that $C$ is compatible with random $2$-to-$1$ minors. This is an immediate consequence of \cref{proposition:influence:our_main_result}. Suppose that $f : \Bool^{2n} \to \Bool$ belongs to $\minion$. If $n < 2n_0$, then both $f$ and all its $2$-to-$1$ minors have all their coordinates chosen by $C$, therefore $\Pr_{\pi}[\pi(C(f)) \cap C(f^\pi) \neq \emptyset] = 1$. Otherwise, we know that $\I[(1/2)]{f} \leq n \cdot \gamma$ and there is a coordinate $i \in [2n]$ with $\Inf[(1/2)]{f, i} \geq \delta$. In particular, we have $i \in C(f)$ because $\tau \leq \delta$. \cref{proposition:influence:our_main_result} implies that $\Inf[(1/2)]{f^\pi, \pi(i)}$ is at least $\tau$ with probability at least $\tau$, which is equivalent to the event that $\pi(i) \in C(f^\pi)$. In particular,
    \[
        \Pr \Big[ \pi(C(f)) \cap C(f^\pi) \neq \emptyset \Big] \geq \tau.
    \]
    This finishes the proof, because $\tau$ is a constant which depends only on $\delta$.
\end{proof}

\section{Conclusions and future research}\label{sec:future}
As we have seen, we were able to identify several conditions that guarantee complexity characterizations of $\PCSP$s with \kl{polymorphism} \kl{minions} consisting of \kl{Polynomial Threshold Functions} of bounded degree. In \cref{sec:positive_polynomials} we showed that if we forbid negative coefficients in $\PTF$ representations, a complete dichotomy is admitted (\cref{theorem:dichotomy_for_positive_polynomials}). This restriction forces all polymorphisms to be \kl{monotone} functions. It seems to be a natural question whether the non-negativity of coefficients is essential in our result, i.e. whether one can strenghten \cref{theorem:dichotomy_for_positive_polynomials} by allowing negative coefficients, while keeping the assumption of monotonicity.

\begin{question}\label{question:monotone_ptfs}
    Does the class of $\,\PCSP$s with \kl{polymorphism} \kl{minions} consisting of \kl{monotone} $\PTF$s of bounded degree admit a complexity dichotomy?
\end{question}

In \cref{sec:influence}, we established a hardness condition for minions consisting of general $\PTF$s of bounded degree, based on the notion of \kl{influence} over uniform distribution (\cref{theorem:hardness_for_ptfs_with_large_influences}). As we have already observed, the only obstacle to generalization of \cref{theorem:hardness_for_ptfs_with_large_influences} to general $p$-biased distributions is the fact that the results of \cite{Harsha2009BoundingTS} have been obtained specifically for the uniform distribution. We therefore state the following question, which might also be of interest for research independent of $\PCSP$s.

\begin{question}\label{question:ptfs_low_degree_concentration_in_general_spaces}
    Is the class of $\, \PTF$s of bounded degree \kl{low-degree concentrated} over general $p$-biased distributions? In other words, can \cref{proposition:noise:ptfs_are_low_degree_concentrated} be extended beyond the uniform distribution?
\end{question}

Another direction in which \cref{theorem:hardness_for_ptfs_with_large_influences} could be extended is in the search for a tractability counterpart. Although we provided a hardness condition, we highly doubt that it captures all hard $\PCSP$s in the setting of $\PTF$s with bounded degree. Looking at the problem from the other side could shed some light on the general understanding of such $\PCSP$s. To our knowledge, the algorithm for \kl{Linear Threshold Functions} of \cite{injective_hardness_condition} and \cref{proposition:positive_polynomials:tractability} are the only tractability results in this direction.

\begin{question}\label{question:algorithms_for_ptfs}
    Is there a polynomial-time algorithm solving $\PCSP$s with \kl{polymorphism} \kl{minions} consisting of $\PTF$s of bounded degree with no significant coordinates?
\end{question}

We note that searching for a tractability result for \cref{question:monotone_ptfs} could potentially be a reasonable first step toward a resolution of \cref{question:algorithms_for_ptfs}.

Finally, we point to the main motivation of the whole line of study of Boolean PCSPs, which was also the main reason for our research to begin with. Namely, we ask whether an extension of Schaefer's Dichotomy of \cite{schaefer} to $\PCSP$s is true.

\begin{question}
    Does the class of Boolean $\PCSP$s admit a complexity dichotomy?
\end{question}

\section*{Acknowledgments}
I want to thank Marcin Kozik for supervision of my Master's degree project, of which this paper is a conclusion. I also thank Demian Banakh for help in extending the notion of pull-back distribution to general $p$-biased distributions, as well as Tomasz Mazur, who was the first to formally wrap up the proof of \cref{proposition:positive_polynomials:tractability}.

\printbibliography

\end{document}